\DeclareMathOperator*{\argmin}{arg\,min}
\DeclareMathOperator{\cost}{\mathsf{cost}}
\DeclareMathOperator{\ED}{ED}
\DeclareMathOperator{\UD}{UD}
\DeclareMathOperator{\poly}{poly}
\DeclareMathOperator{\supp}{supp}
\DeclareMathOperator{\ed}{ed}
\newcommand{\R}{\mathbb{R}}
\newcommand{\Z}{\mathbb{Z}}
\DeclareMathOperator{\sgn}{sgn}
\newcommand{\norm}[1]{\left\lVert#1\right\rVert}
\newcommand{\cst}{$\mathsf{CST}$\xspace}
\newcommand{\setp}{$\mathsf{SP}$\xspace}
\newcommand{\setpT}{$\mathsf{SP_3}$\xspace}
\newcommand{\setpB}{$\mathsf{SP_B}$\xspace}
\newcommand{\dst}{$\mathsf{DST}$\xspace}
\newcommand{\np}{$\mathsf{NP}$\xspace}
\newcommand{\apx}{$\mathsf{APX}$\xspace}
\newcommand{\ptas}{$\mathsf{PTAS}$\xspace}
\newcommand{\vcov}{$\mathsf{VC}$\xspace}
\newcommand{\td}{\tau}
\newcommand{\pcoord}{\theta_p}
\newcommand{\e}{\mathbf{e}}
\newcommand{\srd}{\alpha_{\text{X}}}
\newcommand{\trd}{\alpha_{\text{P}}}
\newcommand{\reals}{\mathbb{R}_{\ge 1}\cup\{0,\infty\}}
\renewcommand{\tilde}{\widetilde}
\declaretheorem[style=italicized,name=Open Question]{openq}
\title{On Approximability of Steiner Tree in \texorpdfstring{$\ell_p$}{lp}-metrics}
\begin{document}
\maketitle

\begin{abstract}
In the Continuous Steiner Tree problem (\cst), we are given as input a set of points (called \emph{terminals}) in a metric space and asked for the minimum-cost tree connecting them. Additional points (called \emph{Steiner points}) from the metric space can be introduced as nodes in the solution. In the Discrete Steiner Tree problem (\dst), we are given in addition to the terminals, a set of facilities, and any solution tree connecting the terminals can only contain the Steiner points from this set of facilities. 

Trevisan [SICOMP'00] showed that \cst and \dst are \apx-hard when the input lies in the $\ell_1$-metric (and Hamming metric). Chleb\'ik and Chleb\'ikov\'a [TCS'08] showed that \dst is \np-hard to approximate to factor of $96/95\approx 1.01$ in the graph metric (and consequently $\ell_\infty$-metric). Prior to this work, it was unclear if \cst and \dst are \apx-hard in essentially every other popular metric. 

In this work, we prove that \dst is \apx-hard in every $\ell_p$-metric. We also prove that \cst is \apx-hard in the $\ell_{\infty}$-metric. Finally, we relate \cst and \dst, by observing a gap preserving reduction from \cst to \dst in $\ell_p$-metrics. 

 It is known that the \apx-hardness of \dst in $\ell_0,\ell_1$, and $\ell_\infty$-metrics  can be obtained from the {\apx-hardness} of covering problems (with additional structure). 
Our main conceptual insight is that for certain ranges of $p$ (such as $p=2$), the soundness guarantees of covering problems might be insufficient to show that  \dst in the $\ell_p$-metric is \apx-hard, but the soundness guarantees of a packing problem (with requisite additional structure) is enough. Equipped with this insight, we are then able to embed set systems into every $\ell_p$-metric, and, depending on the value of $p$, our soundness analysis of the corresponding \dst instance in the $\ell_p$-metric will use either the packing property or the covering property (or both) of the set system in the soundness case.  

Due to the discrete structure of the Hamming space, the \apx-hardness of \cst in the $\ell_0$ and $\ell_1$ metrics follow from similar methods. However, these techniques do not extend to \cst in other $\ell_p$-metrics. To show \apx-hardness of \cst in the $\ell_{\infty}$-metric, we instead rely on the robust hardness guarantees of graph coloring problems. Concretely, we present a reduction from a graph $G$ on $n$ vertices to a point-set $P$  in the $\ell_\infty$-metric space, where the cost of the optimal Steiner tree of $P$ is exactly $\frac{n+\chi(G)}{2}$, where $\chi(G)$ is the chromatic number of $G$. 
\end{abstract}
 

\section{Introduction}

Given a set of points (called \emph{terminals}) in a metric space, a \emph{Steiner tree} is defined to be a minimum length tree connecting them, with the tree possibly including additional points (called \emph{Steiner} points) as nodes. Computing the Steiner
tree is one of the classic and most fundamental problems in Computer Science, Combinatorial Optimization, and Operations Research, with both great theoretical and practical relevance and interest \cite{ljubic2021solving}. 
This problem
emerges in a number of contexts, such as network design problems, design of integrated circuits, location problems \cite{cheng2013steiner,cho2001steiner,hwang1992steiner,lengauer2012combinatorial,resende2008handbook,noormohammadpour2017dccast}  and
more recently even in machine learning, systems biology, and bioinformatics \cite{backes2012integer,ideker2002discovering,russakovsky2010steiner, tuncbag2016network}.
For example, in VLSI circuits, wire routing is carried out by wires running only in vertical and horizontal directions, due to high computational complexity of the task. Therefore, the wire length is the sum of the lengths of vertical and horizontal segments, and the distance between two pins of a net is actually the $\ell_1$-metric  distance between the corresponding geometric points in the design plane \cite{sherwani2012algorithms}. 

Formally, the Steiner tree problem can be formulated in two ways: discrete and continuous. In the Discrete Steiner Tree problem (\dst), we are given as input two sets of points in a metric space, called \emph{terminals} and \emph{facilities} respectively. The goal is to find the minimum-cost tree connecting the terminals, possibly introducing new points (\emph{Steiner} points) from the set of facilities as nodes in the solution. This problem is well-defined even in general metric spaces since the relevant metric space can be fully described as part of the input. In the Continuous Steiner Tree problem (\cst), we are only given a set of terminals as input, and we are allowed to use any point in the metric space as a Steiner point in the Steiner tree connecting all the terminals. For \cst, we assume knowledge of the ambient metric space structure (e.g., ($(\{0,1\}^n, \ell_1)$ or $(\R^n, \ell_2)$). This is not part of the input.

\paragraph{General Metrics.}
  In general metrics, where the metric space is specified as part of the input (for example, by specifying all pairwise distances), it does not make sense to discuss the \cst problem, so we restrict our attention to \dst. The Steiner tree problem is one of Karp's 21 \np-complete problems, i.e., in his seminal work \cite{karp1972reducibility},  Karp showed that \dst is \np-complete. Therefore, the attention of the algorithmic community turned towards obtaining good approximation algorithms for \dst in general metrics. It is well-known that a minimum-cost
tree only containing the terminals as nodes is a 2-approximation for \dst (and \cst) \cite{Gilbert_Pollak_1968,Vazirani_2001}.

A sequence of improved approximation algorithms for \dst in general metrics appeared in the literature \cite{zelikovsky199311,karpinski1997new,promel2000new,robins2005tighter}, culminating with the famous $\ln(4)+\varepsilon<1.39$ factor approximation algorithm by Byrka, Grandoni, Rothvo{\ss}, and Sanit\'a \cite{byrka2013steiner} (where $\varepsilon>0$ is an arbitrarily small constant). On the hardness of approximation front, Chleb\'ik and Chleb\'ikov\'a \cite{Chlebik_Chlebikova_2008} showed that \dst in general metrics is hard to approximate to a factor of $\frac{96}{95}>1.01$. Thus, there is still quite a substantial gap in our understanding of the approximability of \dst in general metrics.

\begin{openq}What is the tight inapproximability ratio of \dst in general metrics? \label{open1}\end{openq}

\paragraph{Geometric Metrics.}
Special cases of \cst in Euclidean metric were studied by Fermat, Torricelli and other mathematicians
as early as the $17\textsuperscript{th} $ 
century and were also discussed in a letter from  Gauss to Schumacher in 1836 (see \cite{brazil2014history} for more details on the history of this problem). 
On the other hand, phylogeny reconstruction  is a long-standing (dating from Darwin’s evolutionary theory) and
intensively studied problem in computational biology, and computing the phylogenetic
tree can be modeled as solving \cst in the Hamming metric \cite{foulds1982steiner}. We have already discussed the importance of \cst in the $\ell_1$-metric in VLSI design. Therefore, there has been a lot of interest in understanding the complexity of \cst in $\ell_p$-metrics.  

Garey, Graham, and Johnson showed that Euclidean \cst is \np-hard, even in the plane \cite{Garey_Graham_Johnson_1977}.  Garey and Johnson also showed \np-hardness of \cst in the $\ell_1$-metric \cite{Garey_Johnson_1977} (also in the plane). Foulds and Graham extended this further, showing \np-hardness \cst in the Hamming metric \cite{foulds1982steiner}. These intractability results lead to the search for efficient approximation algorithms. 

In a remarkable collection of works, a \ptas was established for \cst in $\ell_1$ and $\ell_2$-metrics in constant dimensions (indeed, Arora's \ptas extends to all $\ell_p$-metrics) \cite{Arora_1998,Mitchell99,RaoS98}. However, it was unclear  if  \cst in $\ell_1$ and $\ell_2$-metrics in high dimensions admitted a \ptas. Trevisan \cite{Trevisan00} answered this question in the negative for the $\ell_1$-metric, by showing that \cst in $\ell_1$-metric is \apx-hard \cite{Chakrabarty_Devanur_Vazirani_2011,Hanan_1966,Day_Johnson_Sankoff_1986,Wareham_1995}. He left it as an open problem to show a similar hardness for the Euclidean \cst in high dimensions.

\begin{openq}\label{open2}
Does Euclidean \cst on $n$ terminals in $\Omega(\log n)$ dimensions admit a \ptas or is it \apx-hard?
\end{openq}

If a geometric optimization problem is \apx-hard, it is often easiest to establish that hardness in the $\ell_{\infty}$-metric. Intuitively, this arises due to the existence of the Fr\'{e}chet embedding, an isometric embedding of discrete metric spaces into the $\ell_{\infty}$-metric. However, it remained unknown whether high-dimensional \cst in the $\ell_p$-metric is \apx-hard  for \textit{any} $p > 1$, including $p = \infty$. Morally, we should not be able to understand hardness of approximation of Euclidean \cst until at least understanding the hardness of approximation of \cst in the $\ell_{\infty}$-metric.
 
 \begin{openq}\label{open3}
Does \cst in the $\ell_{\infty}$-metric on $n$ terminals in $\Omega(\log n)$ dimensions admit a \ptas or is it \apx-hard?
\end{openq}

We now shift our focus to \dst in $\ell_p$-metrics. Most approximation algorithms for \cst in the $\ell_p$-metric work by reducing them to an instance of \dst in the $\ell_p$-metric (including, for example, Arora's \ptas \cite{Arora_1998}). Therefore, from a hardness point-of-view, studying \dst in $\ell_p$-metrics can also be seen as a stepping stone to understanding \cst in $\ell_p$-metrics. Bartal and Gottlieb \cite{Bartal_Gottlieb_2021} showed that a \ptas exists for \dst in spaces of bounded doubling dimension (including, for example, \dst in $\ell_p$-metrics in constant dimensions) and Euclidean space up to $O(\sqrt{\log \log n})$ dimensions, where $n$ is the number of input points. Nonetheless, the inapproximability in higher dimensions remains open. For example, we can ask the following discrete variant of Open~Question~\ref{open2}.

\begin{openq}\label{open4}
Does Euclidean \dst on $n$ terminals in $\Omega(\log n)$ dimensions admit a \ptas or is it \apx-hard?\end{openq}

String metrics are another important family of metrics for \dst, particularly given applications of the Steiner tree problem to the study of phylogenetic trees in computational biology and to computational linguistics. Several works have studied the computational hardness of Steiner tree in the Hamming metric, establishing \apx-hardness in high-dimensions and \np-hardness for low dimensions with large alphabets \cite{Wareham_1995, althaus2013low, sankoff1975minimal, foulds1982steiner, fernandez1998approximability}.  Other popular string metrics less well understood in the context of the Steiner tree problem include the edit distance metric \cite{lev1966binary} and the Ulam metric.
\begin{openq}\label{open5}
Does \dst in the edit distance metric admit a \ptas or is it \apx-hard? What about the Ulam metric?
\end{openq}

We remark that exploring the discrete avenue to better understand the continuous version has been pursued for another celebrated geometric problem: \emph{clustering}. The discrete and continuous versions of minimizing popular clustering objectives such as $k$-means and $k$-median has been studied by the algorithmic community extensively \cite{byrka2017improved,DBLP:journals/siamcomp/AhmadianNSW20,CNS22,CK19,CKL22}.

\subsection{Our results}\label{sec:results}

Our first contribution is the resolution of Open Question~\ref{open3}. 

\begin{theorem} \label{thm:introellinf}
There is an efficiently computable function mapping a graph $G$ of order $n$ to an instance of \cst in the $\ell_{\infty}$-metric such that the optimal cost of the Steiner tree is $(n + \chi(G))/2$, where $\chi(G)$ is the chromatic number of $G$. Consequently, \cst in the $\ell_{\infty}$-metric is \apx-hard.
\end{theorem}

Our result also establishes a neat connection between the chromatic number of a graph and the cost of the optimal Steiner tree of the mapping of the graph. We remark that our current hardness results are derived from the hardness of graph coloring on graphs with linear chromatic number. Although the current hardness is only enough to imply \apx-hardness of \cst in the $\ell_{\infty}$-metric, stronger inapproximability results for graph coloring immediately translate to more robust results for \cst. For example, if the best known approximation algorithm for clique cover on cubic graphs were shown to be optimal, this would yield hardness of approximation within a factor of $17/16$ (they show a $5/4$-approximation algorithm in \cite{Cerioli_Faria_Ferreira_Martinhon_Protti_Reed_2008}).

Our second contribution is the resolution of Open~Question~\ref{open4}. In fact, we prove the \apx-hardness of \dst for all $\ell_p$-metrics. 

\begin{theorem}[Implied by Theorem \ref{thm: hardness of lp dst}]
   \label{thm:introapx} Let $p\in \mathbb{R}_{\ge 1}\cup\{\infty\}$. There is some constant $\varepsilon>0$ such that \dst on $n$ terminals is \np-hard to approximate to within a $(1+\varepsilon)$ factor in the $\ell_p$-metric, even in $O(\log n)$ dimensions. 
\end{theorem}

In fact, it is sufficient to prove that \dst is \apx-hard in the Euclidean metric, and then simply use the near isometric embedding of $\ell_2$-metric to other $\ell_p$-metrics \cite{Racke_2006} to obtain that  \dst is \apx-hard in every  $\ell_p$-metric.

That said, our result is stronger than suggested in Theorem~\ref{thm:introapx}, in the following sense. Our reduction is from a hybrid variant of the set cover and set packing problems (both restricted to the case when the input is a  collection of sets each of size exactly 3). Although current known hardness of bounded set cover and set packing yield very small constant factor inapproximability bounds for \dst in the $\ell_p$-metric, these problems are expected to be harder than currently known. We explicitly compute inapproximability factors throughout so that improvements for those problems will also immediately translate to improved hardness of approximation factors for \dst in $\ell_p$-metrics.

Elaborating, our reduction is from the $(\varepsilon,\delta)$-\setpT problem, where given as input a collection of sets $\mathcal{S}$ over universe $[n]$, we would like to distinguish between the completeness case where there exists $n/3$ pairwise disjoint sets in $\mathcal{S}$ that cover the whole universe $[n]$, and the soundness case where every set cover of $[n]$ from $\mathcal{S}$ must be of size at least $(1+\delta)(n/3)$ and every subcollection of pairwise disjoint sets in $\mathcal{S}$ can cover at most a $(1-\varepsilon)$ fraction of the universe. Note that if $(\varepsilon,\delta)$-\setpT problem is \np-hard then we may assume $\delta\in [\varepsilon/2,2\varepsilon]$ (see Remark~\ref{rem:epsdelta} for an explanation). 

We derive the following hardness of approximation factors, in terms of the parameters of the reduction from \setpT to \dst, from our proof of Theorem~\ref{thm: hardness of lp dst}.

\begin{theorem}\label{thm:intromain}
    Let $p\in \mathbb{R}_{\geq 1}\cup\{\infty\}$. Assuming $(\varepsilon,\delta)$-\setpT is \np-hard, we have that \dst in $\ell_p$-metric is \np-hard to approximate to $(1+\gamma)$ factor, where
    $$\gamma:=\begin{cases}
        \delta/4&\text{ if } p=\infty \\
        \frac{\varepsilon}{2}\left(1-\frac{1}{3^{1/p}}\right)+2\delta\left(\frac{1}{2\cdot 3^{1/p}}-\frac{3}{8}\right)&\text{ if }p>1/\log_3 (4/3) \\
        \varepsilon/8&\text{ if } p=1/\log_3 (4/3) \approx 3.8\\ 
 \varepsilon/26 &\text{ if } p  \in [1,  1/\log_3(4/3))
     \end{cases}$$
\end{theorem}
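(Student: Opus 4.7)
The plan is to construct a polynomial-time reduction from an $(\varepsilon,\delta)$-\setpT instance (universe $[n]$, a $3$-uniform family $\mathcal{S}$) to a \dst instance in $(\mathbb{R}^n, \ell_p)$. Embed each element $i \in [n]$ as a terminal $t_i = e_i$ (standard basis vector), include an anchor terminal $r = 0$, and place each set $S \in \mathcal{S}$ at a facility $f_S = \alpha_p\,\chi_S$, where $\alpha_p \in (0, 1/3]$ is tuned so that $f_S$ behaves as an $\ell_p$-Steiner junction for $\{t_i : i \in S\} \cup \{r\}$. The four distances driving the analysis are $\|r - f_S\|_p = \alpha_p\, 3^{1/p}$, $\|t_i - r\|_p = 1$, $\|t_i - f_S\|_p = ((1-\alpha_p)^p + 2\alpha_p^p)^{1/p}$ if $i \in S$, and $\|t_i - f_S\|_p = (1 + 3\alpha_p^p)^{1/p}$ if $i \notin S$, so that set membership is visible as a discrete drop in $t_i$-to-$f_S$ distance.

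\medskip

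For the completeness case, a perfect packing $\mathcal{P} \subseteq \mathcal{S}$ of size $n/3$ yields a canonical star-of-stars Steiner tree: include $f_S$ as a Steiner point for each $S \in \mathcal{P}$, attach the three members of $S$ to $f_S$, and hang each $f_S$ off $r$. Denote the resulting cost $C_p$. For the soundness case, a local-exchange argument ($\ell_p$-specific rerouting of terminals off of dominated paths and shortcutting of unused facilities) restricts attention to trees of the same star-of-stars form with some facility set $\mathcal{F}' \subseteq \mathcal{S}$, where each terminal $t_i$ is either \emph{matched} (attached to $f_S \in \mathcal{F}'$ with $i \in S$) or \emph{mismatched} (attached to $r$ directly, which dominates attaching to a non-containing facility). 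If every terminal is matched then $\mathcal{F}'$ is a set cover, so covering-soundness of \setpT gives $|\mathcal{F}'| \ge (1+\delta)n/3$, contributing at least $\delta(n/3)$ extra trunk edges of length $\alpha_p\, 3^{1/p}$. Otherwise, by extracting a disjoint sub-collection of $\mathcal{F}'$ that captures (nearly) all matched terminals and applying packing-soundness of \setpT, at least $\varepsilon n - O(\delta n)$ terminals must be mismatched, each paying an excess of $1 - ((1-\alpha_p)^p + 2\alpha_p^p)^{1/p} > 0$ over the canonical attachment. Summing the two penalties and dividing by $C_p$ yields the gap factor $1+\gamma$.

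\medskip

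The main obstacle, and the source of the four cases in the theorem, is that both penalties scale with explicit powers of $1/3^{1/p}$ and the sign of the $\delta$-coefficient flips at $3^{1/p} = 4/3$, i.e.\ at $p = 1/\log_3(4/3)$. For $p > 1/\log_3(4/3)$, both the mismatched-terminal penalty (proportional to $1 - 1/3^{1/p}$) and the extra-trunk penalty (proportional to $1/(2 \cdot 3^{1/p}) - 3/8$) are nonnegative and combine into the two-term formula for $\gamma$; in the $p = \infty$ limit the first term collapses, giving $\delta/4$, while exactly at the threshold the $\delta$-coefficient vanishes and we obtain $\varepsilon/8$. For $p$ below the threshold (including $p = 1$), adding extra facilities no longer inflates the tree cost, so covering-soundness provides no leverage; we must then rely on packing-soundness alone (with a re-tuned $\alpha_p$, and possibly a modified embedding) to obtain the weaker uniform bound $\varepsilon/26$. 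The most delicate technical steps are (i) justifying the reduction to star-of-stars form in each regime via $\ell_p$-specific local-exchange moves, and (ii) handling the $\ell_\infty$ case, where the naive embedding above degenerates and must be modified so that within-set versus across-set distances differ by a usable margin.
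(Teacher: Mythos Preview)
Your embedding and completeness analysis match the paper exactly (up to naming $\alpha_p$ versus $\theta_p$), and your intuition for why the threshold $p = 1/\log_3(4/3)$ matters is sound. However, your soundness argument has a genuine gap: the matched/mismatched dichotomy is too coarse. In the star-of-stars normal form you must allow facilities adjacent to exactly \emph{two} of their three terminals, and it is precisely this intermediate configuration that couples $\varepsilon$ and $\delta$. The paper partitions the $n$ non-root terminals into counts $x_3$ (attached to a facility that holds all three of its elements), $x_2$ (attached to a facility holding only two), and $x_1$ (attached directly to $r$), giving cost $(\beta_{\text{in}}+\srd/3)x_3 + (\beta_{\text{in}}+\srd/2)x_2 + \min(\trd,\td)x_1$. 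Packing-soundness forces $x_3 \le (1-\varepsilon)n$, while the facilities in the tree together with one extra set per $x_1$-terminal form a set cover of size $\le x_3/3 + x_2/2 + x_1$, so covering-soundness yields $x_2 \le (4\varepsilon-2\delta)n/3$. Minimizing the cost under both constraints simultaneously is what produces the two-term formula; your ``all matched $\Rightarrow$ covering penalty'' versus ``otherwise $\Rightarrow$ packing penalty'' split never confronts the adversary who maximizes $x_2$ when $\beta_{\text{in}}+\srd/2 < 1$ (equivalently $p \ge 1/\log_3(4/3)$ with $\theta_p=1/2$), and your claim that ``at least $\varepsilon n - O(\delta n)$ terminals must be mismatched'' is neither proved nor correct in general.

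Two smaller corrections: the parameter range should be $\theta_p \in (0,1/2]$, not $(0,1/3]$ --- the large-$p$ formula in the theorem is obtained precisely at $\theta_p = 1/2$ --- and the $\ell_\infty$ embedding does \emph{not} degenerate: the same construction with $\theta_\infty = 1/2$ gives $\srd = \beta_{\text{in}} = 1/2$, $\trd = 1$, and the $\delta/4$ bound drops out of the general formula. Finally, the ``local-exchange argument'' you gesture at is where most of the work in the paper lives (a chain of seven lemmas culminating in the non-adjacency of Steiner points, which requires case analysis on $|\Gamma(s)\cap\Gamma(s')|$ and the condition $\srd \le 3\gamma_2/2$); you should expect this to be the bulk of the write-up, not a one-line remark.
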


We have only tried to optimize our reduction for $p=2$ and $p>1/\log_3(4/3)\approx 3.8$, and, even then, we only optimize fully in the case of $\delta = \varepsilon/2$. At each of these thresholds, the scaling of our \dst hard instances changes. We optimize in the range $p>1/\log_3(4/3)$ because the optimal scaling of our \dst hard instance is clear. Indeed, $p = 1/\log_3(4/3)$ is exactly the threshold where the interactions between the parameters become far more complex. It is possible to obtain a better relationship between $\varepsilon$, $\delta$, and $\gamma$ for $p\in (1,1/\log_3(4/3))\setminus \{2\}$ by carefully analyzing the constraints mentioned in Theorem~\ref{thm: hardness of lp dst}. 

In Figure~\ref{fig: eps delta depend}, we capture the interplay between $\varepsilon,\delta,$ and $\gamma$. The purple region indicates that, between the set packing and set cover (actually vertex cover) problems, the reduction (i.e., embedding) of the set system only goes through by starting from the covering problem and not from the packing problem. The green region on the other hand is the exact opposite, and the reduction  of the set system to the \dst only goes through by starting from the packing problem and not from the covering problem. In the yellow region, the behavior is unclear, in part because we did not carefully analyze the dependency between $\varepsilon$ and $\delta$. Finally, in the red region, we benefit from both the covering and packing hardness. This is elaborated further in Section~\ref{sec:figureexp}. 

\begin{figure}[!h]
    \centering

	\includegraphics[width=\textwidth]{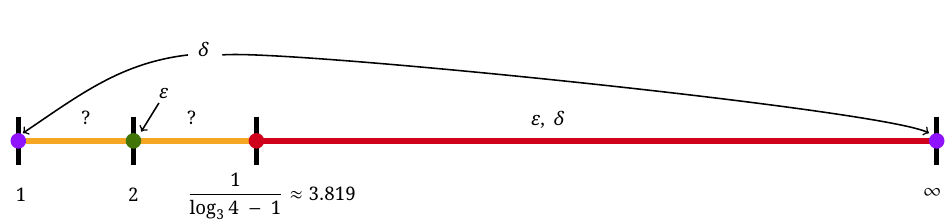}

    \caption{Color-coded range of $\ell_p$ hardness of approximation dependency on $\varepsilon$ and $\delta$.}
    \label{fig: eps delta depend}
\end{figure}

Moreover, we note that we can derive explicit constants for the hardness of approximation factors, and we do for \dst in the Hamming ($\ell_0$), $\ell_1,\ell_2$-metrics in Appendix~\ref{sec:explicit}.

Our third contribution is observing  the relative computational hardness of \dst with respect to  \cst. Although solving \dst instances is often used as a subroutine in algorithms for \cst, prior to this work it was unclear whether solving \dst problems was computationally easier or harder in general. We show that \dst is essentially at least as hard as \cst.

\begin{theorem}[Implied by Theorem \ref{thm: reduction from cst to dst}]\label{thm:csttodstintro}
Let $p \in [1,\infty]$ and $\alpha>0$. If \cst in the $\ell_p$-metric is \np-hard to approximate within a factor of $(1 + \alpha)$, then, for all $\varepsilon > 0$, \dst in the $\ell_p$-metric is \np-hard to approximate within a factor of $(1 + \alpha - \varepsilon)$.
\end{theorem}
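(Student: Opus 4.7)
The plan is to give a polynomial-time reduction from \cst to \dst losing only an arbitrarily small $\varepsilon$ in the approximation ratio, by using a sufficiently fine grid of candidate Steiner points as the facility set. Given a \cst instance with terminals $T\subseteq \mathbb{R}^d$ of size $n$, I first normalize: translate and scale so that $T\subseteq [0,M]^d$ with integer coordinates, where $M=\poly(n,d)$ depends only on the input, and the minimum interterminal distance is at least $1$; the standard MST bound then gives $1\le \mathsf{OPT}_\mathsf{CST}(T)\le 2nMd^{1/p}$. Setting $G = (\eta\mathbb{Z}^d)\cap [0,M]^d$ for a grid spacing $\eta$ to be chosen, I take the \dst instance to be $(T,F)$ with $F = T\cup G$.

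For the completeness direction, I start from an optimal continuous Steiner tree $\mathcal{T}^\star$ and observe that each of its Steiner points may be assumed to lie in the bounding box $[0,M]^d$, because coordinate-wise projection of a point onto this box only weakly decreases every $\ell_p$-distance to another point inside the box. I then snap every Steiner point to its nearest grid point in $G$, moving it by at most $d^{1/p}\eta$ in the $\ell_p$ metric; by the triangle inequality, each of the at most $2n-3$ edges of $\mathcal{T}^\star$ grows by at most $2d^{1/p}\eta$, giving
\begin{equation*}
\mathsf{OPT}_\mathsf{DST}(T,F)\le \mathsf{OPT}_\mathsf{CST}(T) + O(nd^{1/p}\eta).
\end{equation*}
Choosing $\eta = \Theta(\varepsilon/(nd^{1/p}))$, the additive slack is at most $\varepsilon$, and since $\mathsf{OPT}_\mathsf{CST}(T)\ge 1$ this is at most $\varepsilon\cdot\mathsf{OPT}_\mathsf{CST}(T)$. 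The soundness direction is immediate: any feasible solution to $(T,F)$ is also a valid continuous Steiner tree, so $\mathsf{OPT}_\mathsf{DST}(T,F)\ge \mathsf{OPT}_\mathsf{CST}(T)$. Combining, $\mathsf{OPT}_\mathsf{CST}(T)\le \mathsf{OPT}_\mathsf{DST}(T,F)\le (1+\varepsilon)\mathsf{OPT}_\mathsf{CST}(T)$, so any $\rho$-approximation for \dst on $(T,F)$ yields a $\rho(1+\varepsilon)$-approximation for \cst; the hypothesized $(1+\alpha)$-inapproximability of \cst therefore rules out $\rho\le (1+\alpha)/(1+\varepsilon)$, which after rescaling is a $(1+\alpha-\varepsilon)$-hardness for \dst.

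The main technical obstacle is keeping the reduction polynomial time: the grid $G$ has $(M/\eta)^d$ points, which is polynomial only when $d$ is constant (or at most logarithmic, for quasi-polynomial reductions). When the \cst hardness of interest holds in higher dimension, I would instead restrict the grid to small $\ell_p$-balls of radius comparable to the local edge lengths around each terminal and iteratively refine in the neighborhoods of placed candidate Steiner points, so that $|F|$ scales with $n$ rather than with the ambient volume. A secondary subtlety is verifying the bounding-box containment and the $(2n-3)$-edge bound in the non-strictly-convex edge cases $p\in\{1,\infty\}$, where balls have flat faces; the projection argument still applies coordinate-wise but needs a short case analysis to rule out degenerate Steiner configurations.
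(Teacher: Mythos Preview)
Your grid-snapping argument is clean but has a fatal dimension problem that you correctly flag but do not actually fix. The facility set $G=(\eta\mathbb{Z}^d)\cap[0,M]^d$ has size $(M/\eta)^d$, which is polynomial only for $d=O(1)$. However, in fixed dimension \cst admits a \ptas \cite{Arora_1998}, so the hypothesis of the theorem---that \cst is \np-hard to $(1+\alpha)$-approximate---is vacuous there. The statement is only interesting (and the paper's own \cst hardness holds) in dimension $d=\Omega(\log n)$ or higher, precisely where your reduction is not polynomial time. Your proposed workaround of restricting the grid to small balls around each terminal does not help: an $\eta$-net of a radius-$r$ ball in $\mathbb{R}^d$ still has $(r/\eta)^{\Theta(d)}$ points, so localizing the grid does not beat the curse of dimensionality.

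The paper avoids this barrier by a completely different mechanism. Rather than discretizing space, it invokes a structural result of Bartal and Gottlieb: for every $\varepsilon>0$ there is a constant $C_\varepsilon$ and a $(1+\varepsilon)$-approximate Steiner tree obtained by partitioning the terminals into groups of size at most $C_\varepsilon$, building an optimal Steiner tree on each group, and linking the groups only via terminal--terminal edges. The facility set $X$ is then taken to be the union, over all $\binom{n}{\le C_{\varepsilon/2}}=\poly(n)$ subsets of terminals of size at most $C_{\varepsilon/2}$, of the Steiner points appearing in a near-optimal Steiner tree of that subset. Each such subset spans an affine subspace of dimension at most $C_{\varepsilon/2}$, so its Steiner tree can be computed in time depending only on $\varepsilon$ (e.g.\ via Arora's \ptas). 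This yields $|X|=\poly(n)$ independent of the ambient dimension $d$, which is exactly what your approach is missing.
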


We prove this by showing that we can efficiently compute  a collection of candidate Steiner points for a \dst instance such that the optimal tree for the \dst instance is near-optimal for the corresponding \cst instance. Hence, efficient approximation algorithms for \dst yield similarly effective approximation algorithms for \cst. As an immediate consequence, this yields a $1.39$-approximation polynomial time algorithm for high dimensional \cst in $\ell_p$-metrics (see Corollary~\ref{cor:alg}). The proof heavily relies on structural results about near-optimal Steiner trees from \cite{DuZhangFeng91, Borchers97, Bartal_Gottlieb_2021}. 

 \paragraph{String Metrics.} By applying the isometric embedding of the 
Hamming metric to the Ulam metric (for example see Lemma 4.5 in \cite{DBLP:journals/corr/abs-2112-03222}) we can prove \apx-hardness of \dst in the Ulam metric. 
\begin{theorem} \label{thm: ulam hardness}
There is some constant $\varepsilon > 0$ such that \dst in Ulam metric on $n$ terminals is \np-hard to approximation to within a $(1+\varepsilon)$ factor, even on permutations of length $O(\log n)$. 
\end{theorem}
Similarly, by applying the near-isometric embedding of the Hamming metric to the edit distance metric (for example, see Section 4.1 in \cite{DBLP:conf/stoc/Rubinstein18}), we can prove \apx-hardness for \dst in the edit distance metric  as well. 
\begin{theorem} \label{thm: edit hardness}
There is some constant $\varepsilon > 0$ such that \dst in edit distance metric on $n$ terminals is \np-hard to approximation to within a $(1+\varepsilon)$ factor, even on Boolean strings of length $O(\log n \log \log n)$. 
\end{theorem}
In combination, these results resolve Open Question \ref{open5}. See Appendix~\ref{sec: string metrics} for details.

 \paragraph{Dimensionality Reduction.}
Our proof of Theorem~\ref{thm:introapx} holds when the dimension of the point set has linear dependency in the size of the point set. However, we can use the dimensionality reduction technique introduced in \cite{CK19} and generalized in \cite{CKL22} to obtain the same inapproximability result as in Theorem~\ref{thm:introapx}, even when the dimension of the point set has logarithmic dependency on the size of the point set. We discuss these dimensionality reduction details (for constant $p$) in Appendix~\ref{sec: dimensionality reduction}.\footnote{Alternatively, it is possible to apply the Johnson-Lindenstrauss lemma \cite{Johnson_Lindenstrauss_1984} in the $\ell_2$-metric and then use a near-isometric embedding into the $\ell_p$-metric for $p \in [1,\infty]$ to achieve similar guarantees. We thank the anonymous referee for this suggestion.}
Assuming the \emph{Exponential Time Hypothesis} \cite{IP01,IPZ01}, the dimension cannot be reduced much below logarithmic in size of point set as the runtime of the approximation scheme of \cite{Arora_1998,Mitchell99,RaoS98} would then be subexponential.

 \subsection{Our Techniques}\label{sec:techniques}

In this subsection, we contextualize our techniques and provide a high level description of our technical contributions. We emphasize that the conceptual difficulty in the proofs of Theorems~\ref{thm:introellinf},~\ref{thm:introapx},~and~\ref{thm:intromain} is devising the appropriate starting problem and a clean embedding. The main technical challenge is to prove the soundness property of the reduction. This involves proving several structural properties of optimal Steiner trees in the corresponding metrics.

\subsubsection{\texorpdfstring{Proof Overview of Theorem~\ref{thm:introellinf}: Hardness of Approximation of \cst in $\ell_{\infty}$-metric}{Proof Overview of Theorem~\ref{thm:introellinf}: Hardness of Approximation of CST in linf-metric}}

First, we discuss the context of Open Problems~\ref{open2}~and~\ref{open3} and describe our techniques for proving Theorem~\ref{thm:introellinf} (the resolution of Open Problem~\ref{open3}). 

\paragraph{\texorpdfstring{Prior Work on \cst and Technical Difficulties.}{Prior Work on CST and Technical Difficulties.}}
Prior to this work, the only \apx-hardness results for high dimensional \cst were in the Hamming $(\ell_0)$ and Rectilinear $(\ell_1)$ metrics \cite{Wareham_1995, Trevisan00}. The discrete structure of Hamming \cst makes it far more amenable to existing tools, such as reductions from hard graph problems, than other variants of \cst. Indeed, Trevisan's proof of \apx-hardness of Rectilinear \cst is a direct reduction from Hamming \cst. While he appeals to the integrality property of the Min Cut LP relaxation, Hanan showed decades earlier that instances of Rectilinear \cst admit optimal Steiner trees using only Steiner points on the grid formed by the intersection of axis parallel lines passing through the terminals \cite{Hanan_1966}. This \textit{Hanan grid} arises from the ``coordinate independence'' in the $\ell_1$-metric, i.e., the fact that a change in one coordinate changes the $\ell_1$ distance by the same amount. This property is unique to the $\ell_1$-metric among $\ell_p$-metrics.

In other $\ell_p$-metrics, we know much less. From \cite{Garey_Graham_Johnson_1977}, we know that Euclidean \cst is \np-hard. The proof is shown in the plane, making heavy use of the fact that optimal Euclidean Steiner trees have non-intersecting edges and those edges meet at at least 120 degree angles \cite{Gilbert_Pollak_1968}. However, the non-intersecting edge property is not especially useful for point configurations outside of the plane. Without this property, we have very few strong tools for restricting the structure of optimal Steiner trees. Indeed, we do not know a proof of \np-hardness of Euclidean \cst which does not involve directly appealing to hardness in the plane. This poses a major barrier to showing \apx-hardness of Euclidean \cst. In low dimensions, we know a \ptas exists \cite{Arora_1998}, so we cannot appeal to \np-hardness techniques. Hence, showing \apx-hardness in high-dimensions will ultimately require a brand new \np-hardness argument.

Euclidean \cst has also been heavily studied from the perspective of discrete geometry. The central quantity of interest in that setting is the \textit{Steiner ratio} of a point configuration: the ratio of the cost of an optimal Steiner tree and a minimum spanning tree. For example, the Steiner ratio of the three vertices of a unit equilateral triangle is $\sqrt{3}/2$ with the optimal Steiner tree achieved by adding the center of the triangle as a Steiner point and connecting it to the vertices. Perhaps the most basic question about the Steiner ratio is the following: what is the minimum Steiner ratio over point configurations in the Euclidean plane? Gilbert and Pollak conjectured that it is exactly $\sqrt{3}/2 \approx 0.866$, coming from the equilateral triangle \cite{Gilbert_Pollak_1968}. This conjecture, the Gilbert-Pollak Steiner ratio conjecture, remains open after more than half a century \cite{ivanov2012steiner}. The best known lower bound on this quantity is $\approx 0.824$, leaving a major gap, even in the plane \cite{Chung_Graham_1985}. 

Predictably, in higher dimensions we know even less about the Steiner ratio and the construction of Steiner trees. Although extrapolating from the plane one might guess that the vertices of a regular simplex might admit the minimum Steiner ratio in higher dimensions (the Generalized Gilbert-Pollak Steiner ratio conjecture), Du and Smith disproved this conjecture without providing a satisfying new candidate point configuration \cite{Du_Smith_1996}. Moreover, we do not know how to construct optimal Steiner trees of high-dimensional point configurations, not even of the vertices of regular simplices (although \cite{Chung_Gilbert_1976} gives a candidate optimal Steiner tree for some dimensions). As in the plane, we also have a considerable gap in the bounds on the Steiner ratio: the minimum lies somewhere between $0.669$ and $0.62$ \cite{Chung_Gilbert_1976, Du_Smith_1996, Du_Wu_Lu_Xu_2011}. 

Given our weak bounds, we cannot appeal to the Steiner ratio in analyzing delicate inapproximability arguments. The fact that we cannot even construct high-dimensional optimal Steiner trees compounds this difficulty further. The state of affairs is similar in other $\ell_p$-metrics. New results about the inapproximability of high-dimensional \cst require novel structural insights.

\paragraph{\texorpdfstring{Reduction from Graph Coloring to \cst in the $\ell_{\infty}$-metric.}{Reduction from Graph Coloring to CST in the l_inf-metric.}}

In this work, we show \apx-hardness for \cst in the $\ell_{\infty}$-metric. Our proof is unlike the results in \cite{Wareham_1995, Trevisan00}. Unable to reduce the problem from a discrete metric space (e.g., Hamming space), we rely instead on the observation that the distance between any two points in the $\ell_{\infty}$-metric is determined by a single coordinate. We use this observation to introduce structural guarantees on Steiner trees we construct in the $\ell_{\infty}$-metric.

Given a graph, we arbitrarily orient its edges to form the digraph $G(V, A)$. We construct an instance of \cst in the $\ell_{\infty}$-metric in $\R^{|A|}$ as follows. Each coordinate of the terminals corresponds to an edge in $A$. The terminals themselves correspond to vertices, along with the usual addition of a root terminal $\mathbf{0}$, the all-zeroes vector. Given $v \in V$ and $a \in A$, the terminal $t$ corresponding to $v$ is $1$ in the coordinate corresponding to $a$ if $a$ is an outgoing edge from $v$, $-1$ if $a$ is an incoming edge to $v$, and $0$ otherwise. This ensures that for each coordinate, there are exactly two terminals that are nonzero in that coordinate. The terminals corresponding to adjacent vertices are at distance $2$, and the terminals corresponding to non-neighboring vertices are at distance $1$.

Then, a minimum proper vertex coloring $\pi: G \to [\chi(G)]$ induces a low-cost Steiner tree of these terminals. For each color class of vertices, there is a Steiner point $s$ at distance $1/2$ from the set of corresponding terminals of that color class. Among those terminals, for each coordinate $i$, there is at most one terminal nonzero in that coordinate (using that each color class is an independent set). Setting $s_i$ to be $0$ or half of that nonzero coordinate yields the desired Steiner point (thus, we have $\|s\|_\infty=1/2$). The tree  is then constructed by connecting the terminals corresponding to each color class to the  Steiner point described above and then connecting each of those Steiner points to $\mathbf{0}$. Each edge in the tree is of cost exactly $1/2$ and there are $n + \chi(G) + 1$ total vertices, yielding a total cost of $(n + \chi(G))/2$. 

It turns out that this simple tree is actually the optimal Steiner tree for this configuration! We prove this in Theorem \ref{thm:introellinf}. Upon showing this fact, the hardness of graph coloring (even in graphs with linear chromatic number) yields \apx-hardness of \cst in the $\ell_{\infty}$-metric. 

To do this, we show that there exists an optimal Steiner tree exhibiting many complementary properties and then prove that this tree is precisely the tree described above. At a high level, the objective is to prove two main claims: nonzero terminals are leaf nodes in the Steiner tree, and no Steiner point is adjacent to another Steiner point. Upon showing these two claims, it is easy to check that the Steiner points in the optimal tree are of the desired form.

Underlying the arguments required for each of these claims is an important observation about the structure of optimal Steiner trees of this \cst instance. Let $T$ be an optimal Steiner tree of this \cst instance, and let $t, t'$ be terminals such that $t_i = 1$ and $t_i' = -1$.  Observe that $t$ and $t'$ are the only terminals in $T$ with nonzero $i\textsuperscript{th} $ coordinates. We can show that there exists an optimal Steiner tree such that the $i\textsuperscript{th} $ coordinate of Steiner points  on the path between $t$ and $t'$  decreases maximally away from $t$ until the coordinate reaches $0$ (in the sense of determining the distance between adjacent points). Likewise, the $i\textsuperscript{th} $ coordinate increases maximally toward $0$ along edges away from $t'$. Throughout, the coordinates reduce in magnitude maximally away from the path linking $t$ and $t'$. If, for example, the path between $t$ and some Steiner point $s$ passes through $\mathbf{0}$, then $s$ cannot have any coordinates of the same sign as $t$. Adopting this view of the points in $T$ being projected onto the $i\textsuperscript{th} $ coordinate is valuable intuition for the proofs.

For example, we use this intuition in proving that we may assume nonzero terminals are leaf nodes. It is easy to show that in some optimal Steiner tree, the nonzero terminals are non-adjacent, but, a priori, it is unclear whether they may be adjacent to many Steiner points. Suppose that $t$ is a nonzero terminal adjacent to multiple Steiner points. Consider removing the edges from $t$ to Steiner points such that their resultant connected components do not contain $\mathbf{0}$. The idea is that, in these new connected components, coordinates of Steiner points with the same sign as coordinates of $t$ are no longer useful, so we can set them all to zero. Upon doing so, connecting the Steiner points originally adjacent to $t$ to $\mathbf{0}$ reconnects the tree without increasing the cost. This amounts to ensuring that $\mathbf{0}$ is included in the path between these Steiner points and $t$.

Showing that the Steiner points in the tree are non-adjacent is the most technical part of the proof. To do so, we first consider \textit{Steiner leaves}, the Steiner points in the Steiner tree that become leaf nodes upon removing all nonzero terminals in the tree. One important property of these Steiner points is that they are necessarily adjacent to more terminals than Steiner points (it is easy to show that Steiner points must be degree at least $3$ by the triangle inequality). It is straightforward to show that terminals adjacent to the same Steiner point cannot overlap in nonzero coordinates. Then, to minimize the total cost of the edges incident to the Steiner leaf, its coordinates must have magnitude at most $1/2$, with magnitude $1/2$ in each of the nonzero coordinates of the adjacent terminals (with the same sign as that nonzero coordinate). Using this fact we can show that the terminals adjacent to adjacent Steiner leaves cannot overlap on nonzero coordinates either (by removing the edge between the Steiner leaves and adding a new edge to $\mathbf{0}$). Indeed, an even stronger fact is true: if two Steiner leaves are adjacent to a common Steiner point, then the terminals adjacent to the Steiner leaves cannot overlap on nonzero coordinates. This follows from the usual trick of dropping an edge and connecting one of the Steiner leaves directly to $\mathbf{0}$.

By leveraging these facts and the coordinate-wise view of the Steiner tree, we can show that each non-leaf Steiner point in the tree is adjacent to at least one other non-leaf Steiner point. To do this, consider $s$ and $s'$, two Steiner leaves adjacent to a common Steiner point $s''$. We can remove all edges from $s'$ to its adjacent terminals and connect them instead to $s$. We can then modify $s$ to ensure that all neighboring terminals are at distance $1/2$ away and then remove $s'$ entirely since it is a Steiner leaf. Repeating this process will result in one of two things. Either $s''$ will become degree $2$ and can be removed via the triangle inequality or $s''$ is adjacent to another non-leaf Steiner point. After repeating for all Steiner points adjacent to Steiner leaves, all remaining non-leaf Steiner points will be adjacent to at least one other non-leaf Steiner point.

Finally, consider removing all nonzero terminals and Steiner leaves. The resulting tree has at most vertex of degree $1$, $\mathbf{0}$, and, hence, is only a tree if all Steiner points were Steiner leaves. That is, all Steiner points in the tree must have only been adjacent to terminals. 

\subsubsection{\texorpdfstring{Proof Overview of Theorems~\ref{thm:introapx}~and~\ref{thm:intromain}: Hardness of Approximation of \dst in $\ell_p$-metrics}{Proof Overview of Theorems~\ref{thm:introapx}~and~\ref{thm:intromain}: Hardness of Approximation of DST in lp-metrics}}

We present below the techniques needed to prove Theorems~\ref{thm:introapx}~and~\ref{thm:intromain}, first establishing the appropriate technical context.
\paragraph{Known Reductions from Set Cover and Vertex Cover.} 
The following is a gap-preserving reduction from Set Cover to \dst in general metrics due to Karp \cite{karp1972reducibility}. 
Given a set system $([n],\mathcal{S})$, where $\mathcal{S}$ is a collection of $m$ subsets of $[n]$, we construct a \dst instance as follows. Our terminals will be $[n]\cup\{r\}$ (a special root vertex is introduced). Our facilities will be the subsets in $\mathcal{S}$. We connect each subset to $r$ and also have edges between subsets/facilities and terminals/universe elements based on their membership. See Figure \ref{fig: karp reduction}. The metric is simply the shortest path metric on the graph. In the completeness case\footnote{Here we use  terminology from the hardness of approximation literature to refer to the two cases of our decision problems. The completeness case is the case of our transformed problem corresponding to the ``yes'' case of our original instance. The soundness case corresponds to the ``no'' case.} there is a set cover of size $k$, and that corresponds to a Steiner tree of cost $n+k$. In the soundness case, if every set cover of $[n]$ must be of size ($1+\delta)k$, (for some $\delta>0$), then one can show that the Steiner tree cost is at least $n + k(1+\delta)$. By then considering set systems where each set contains exactly 3 elements and in which the set cover is a partition in the completeness case, we obtain $k=n/3$, and thus \dst is hard to approximate to within a  $(1+\delta/4)$ factor.\footnote{Although better hardness of approximation results are known for set cover for sets of size $B \geq 4$, reductions based on these set systems translate to hardness of approximation factors of roughly $\frac{B+ 1 + \delta}{B + 1}$, where we have hardness of approximation within a factor of $(1 + \delta)$. The value of $\delta$ here grows much more slowly than $B$ \cite{trevisan2001non}, so it is generally better to consider sets of size $3$, the minimum size where we can achieve hardness.}
\begin{figure}[h]
    \centering

	\includegraphics[width=0.6\textwidth]{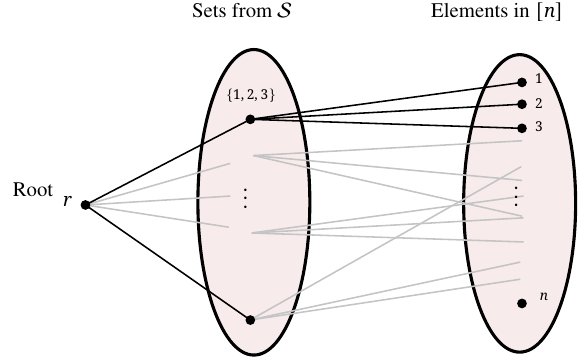}

    \caption{Instance of \dst in general metrics from Karp's reduction from Set Cover}
    \label{fig: karp reduction}
\end{figure}

To obtain \apx-hardness for \dst in the Hamming (or $\ell_1$) metric, we can simply start from a special set system, namely the vertex cover problem on cubic graphs, and then embed the graphs into the Hamming space as follows. Given a vertex cover instance, $G(V,E)$, our facilities are $\mathbf{e}_1,\ldots, \mathbf{e}_{|V|}$, and for every edge $(u,v)$ we have the terminal $\mathbf{e}_u+\mathbf{e}_v$ (where $\mathbf{e}_i$ is a standard basis vector). We have a special additional terminal (much like the reduction from Set Cover in the previous paragraph) which is the all-zeroes vector. The completeness and soundness case go through like in the set cover instance, although some care needs to be taken in the completeness case. In particular, we cannot assume that the minimum vertex cover forms an independent set in the completeness case as the problem of deciding whether that is the case is in $\textsf{P}$.

We can then try to similarly reduce from vertex cover to \dst in the Euclidean metric, but the reduction fails in a suggestive manner. Consider using the same choice of facilities and terminals as in the Hamming case. Given a vertex cover of size $k$, there is a Steiner tree of cost $m + k$ where we choose the facilities corresponding to the vertices in the vertex cover. Each non-root terminal is a leaf node connected either to a facility at distance $1$ or directly to the all-zeroes vector. Each facility is then connected to the all-zeroes vector. 

However, unlike in the Hamming metric, such a Steiner tree might be far from optimal. Crucially, if a facility is adjacent to two non-root terminals, it is also more efficient to connect each of those adjacent terminals directly to the all-zeroes vector, incurring a saving of $3 - 2\sqrt{2}$ in the Steiner tree cost. 

It is relatively straightforward to show that there are minimal cost Steiner trees in which all non-root terminals are leaf nodes and either adjacent to a facility of maximum degree or the all-zeroes vector. This is notably the same as the construction outlined above upon removing facilities adjacent to two non-root terminals. However, the minimum cost of a Steiner tree on these facilities and terminals no longer only depends on the minimum size of a vertex cover in $G$. Instead, the cost of the Steiner tree is minimized by finding a maximum independent set, then taking the corresponding Steiner points, and connecting all terminals corresponding to edges not incident to the independent set directly to the all-zeroes vector. Our reduction from minimum vertex cover naturally transformed into a reduction from maximum independent set!

Importantly, when we consider this reduction for even larger $p$, larger than around $p \approx 2.409$, the Steiner points lose all of their utility and even facilities adjacent to three non-root terminals do not help reduce the cost. By using a more robust reduction from set packing, the analog of maximum independent set on graphs, we manage to avoid this issue.

\paragraph{\texorpdfstring{Reduction from Set Packing to Euclidean \dst.}{Reduction from Set Packing to Euclidean DST.}}
Our main insight is that while we cannot reduce from the vertex cover problem to Euclidean \dst, we can however reduce from the independent set problem on cubic graphs to Euclidean \dst.


As such, we work with set packing problem, a generalization of the independent set problem. In particular, we work with \setpT, the set packing problem where each set has cardinality $3$, and it becomes much easier to limit the adjacencies of Steiner points when they depend on fewer coordinates.  We also assume that there is a set partition in the completeness case. Given a set system $([n], \mathcal{S})$ where $\mathcal{S}$ is a collection of subsets of $[n]$ of cardinality $3$, we construct an instance of Euclidean \dst as follows. The terminals are $\{ \mathbf{e_i} \,:\, i \in [n]\} \cup \{\mathbf{0}\}$ and correspond exactly to the elements of the universe $[n]$, besides the addition of a special root terminal $\mathbf{0}$. The facilities are $\{\frac{1}{6}(\mathbf{e_i} + \mathbf{e_j} + \mathbf{e_k}) \,:\, \{i,j,k\} \in \mathbf{S}\}$. Each facility corresponds exactly to a set in $\mathcal{S}$; it is a scaling of the characteristic vector of the set.

In the completeness case, there is a set packing. Choosing the facilities corresponding to the sets in the packing yields a Steiner tree of cost $(\sqrt{3}/2 + 1/\sqrt{108})n$. Each universe element is connected to the facility such that the corresponding universe element is contained in the corresponding set in the packing. All facilities are connected to the root terminal $\mathbf{0}$. In the soundness case, suppose that every set packing of $([n], \mathcal{S})$ can cover at most $(1-\varepsilon)n$ elements. One can show that there exists a minimum cost tree such that: (1) the choice of facilities correspond to a maximal set packing $\mathcal{S}' \subseteq \mathcal{S}$, (2) each facility is adjacent exactly to $\mathbf{0}$ and the three terminals corresponding to universe elements contained in the facility's set, and (3) each non-root terminal is a leaf node and is either adjacent to a Steiner point or the root terminal $\mathbf{0}$. Given this structure, the cost of the tree is at least $((1 - \varepsilon)(\sqrt{3}/2 + 1/\sqrt{108}) + \varepsilon)n$, yielding hardness of approximation within a factor of  $1 + \varepsilon/26$. The scale factor of $1/6$ in the contruction of the facilities is chosen carefully to ensure $(1)$, $(2)$, and $(3)$ and optimize this hardness of approximation factor (See Section~\ref{sec: l_p} for details.). 

This reduction is natural when the terminals are universe elements and the Steiner points are sets. By having our facilities depend on multiple coordinates, we keep the terminals far apart and also keep Steiner points relatively far apart (at least relative to our root) so that we can limit their adjacencies. Thus, upon removing the zeroes vector (special terminal), the Steiner tree splits into many connected components which can be interpreted as a choice of sets for our set packing.

\paragraph{Reduction from Set Packing to \texorpdfstring{\dst}{DST} in \texorpdfstring{$\ell_p$}{lp}-metric Spaces.}
It turns out that the reduction from \setpT to Euclidean \dst generalizes  to a reduction from \setpT to \dst in $\ell_p$-metric spaces. There are two main differences. 

First, we modify our hard instance of \setpT slightly to become $(\varepsilon, \delta)$-\setpT. The completeness case is identical to that outlined in the Euclidean \dst case above. In the soundness case, in addition to each set packing of $([n], \mathcal{S})$ covering at most $(1-\varepsilon)n$ elements, we also have that, for any set cover $\mathcal{S}' \subseteq \mathcal{S}$ such that $[n] \subseteq \cup_{S \in \mathcal{S}'} S$, $|\mathcal{S}'| \geq (1 + \delta)(n/3)$. We add this restriction on the size of minimum set covers since we observe that, for certain $p$, the hardness of approximation of \dst in $\ell_p$-metric spaces resultant from this reduction depends on both $\varepsilon$ and $\delta$ (or even only $\delta$). See Figure \ref{fig: eps delta depend} and the surrounding discussion.

Additionally, in the reduction to \dst in the $\ell_p$-metric, the scale factor of $1/6$ in the Euclidean setting becomes some constant depending on $p$. In part, this is to sufficiently restrict the structure of optimal Steiner trees in the soundness case. Different choices of scale factor also yield improved hardness of approximation ratios. For example, it is optimal to use a scale factor close to $1/2$ for very large $p$.

\subsubsection{Proof Overview of Theorem~\ref{thm:csttodstintro}:  Reduction from \texorpdfstring{\cst}{CST} to \texorpdfstring{\dst}{DST}}

Finally, we discuss the techniques needed to prove Theorem~\ref{thm:csttodstintro}. 
\cst and \dst have each independently garnered considerable interest, but, prior to this work, the relative computational hardness of the problems was unclear.\footnote{In fact, to the best of our knowledge, the distinction between the discrete and continuous variants was never formalized and systematically studied in the literature.} On the one hand, searching over a possibly uncountable collection of candidate Steiner points in \cst appears daunting. On the other hand, perhaps it is possible to select collections of candidate Steiner points for \dst that more effectively emulate other computationally hard problems. However, as the other side of the same coin, the continuous version allows the algorithm designer the freedom to pick Steiner points anywhere in space, whereas the discrete variant restricts the use of the  geometry of the space by only allowing Steiner points from the set of facilities.\footnote{These remarks also hold (in spirit) for other geometric optimization problems. For a useful comparison, consider clustering. Rather surprisingly, in \cite{CKL21}, the authors showed that, in the $\ell_\infty$-metric space, the continuous version of clustering is possibly harder than the discrete version. }

In this paper we show that, up to an arbitrarily small factor, \dst is at least as hard to approximate as \cst (see Theorem~\ref{thm:csttodstintro} above). The proof uses an important structural result of optimal Steiner trees proved in a recent work of Bartal and Gottlieb \cite{Bartal_Gottlieb_2021}. (Similar structural results were proved in the context of $k$-restricted Steiner trees by Du, Zhang, and Feng, and Du and Borchers \cite{DuZhangFeng91, Borchers97} in the past and these results would suffice for our purpose as well). Namely, there exist near-optimal Steiner trees composed of optimal Steiner trees of constant size, linked by edges between terminals. By embedding all small subsets of terminals in low dimensional spaces and computing optimal Steiner trees using existing \ptas's, we can then construct a set of candidate Steiner points for \dst (by unioning all of the Steiner points in the optimal Steiner trees of these small subsets). Then, the optimal Steiner tree on the \dst instance approximates the cost of the Steiner tree on the \cst instance arbitrarily well. As a consequence, this yields the best known approximation algorithm for high-dimensional \cst (Corollary~\ref{cor:alg}). 

\subsection{Organization of the Paper}
In Section~\ref{sec:prelim} we define the problems of interest to this paper. 
In Section~\ref{sec: l_inf cst hardness} we prove \apx-hardness of \cst in the $\ell_{\infty}$-metric via a reduction from graph coloring. In Section~\ref{sec:metricspace}, we provide our generic framework of moving from a set system to an abstract and yet parameterized metric space to obtain hardness of \dst. 
In Section~\ref{sec: l_p}, we detail our $\ell_p$-metric embedding of the abstract metric space described in the previous section. 
In Section~\ref{sec: reduction from cst to dst}, we relate \cst to \dst, describing the reduction between the two problems. In Appendix~\ref{sec:explicit}, we prove explicit factors of inapproximability for \dst in $\ell_0,\ell_1,$ and $\ell_2$-metrics. In Appendix~\ref{sec: string metrics}, we explain how to derive inapproximability of \dst in string metrics using known results from the literature. In Appendix~\ref{sec: dimensionality reduction}, we show that our hardness results for \dst hold even in $O(\log n)$-dimensions using a specialized near-isometric embedding.

\section{Preliminaries}\label{sec:prelim}

In this section, we detail some notations and definitions related to $\ell_p$-metric spaces, the Steiner tree problem, the graph coloring problem, the set packing problem, and the vertex cover problem.

\begin{paragraph}{$\ell_p$-metrics.} Let $p\in \reals$. For any two points $a,b \in \R^d$ we denote the distance between them in the $\ell_p$-metric by
\[
\| a - b \|_p := \begin{cases}
\left|\{i\in[d]| a_i\neq b_i\}\right|&\text{ if }p=0,\vspace{0.15cm}\\
\left( \underset{i\in[d]}{\sum} |a_i - b_i|^p \right)^{1/p}&\text{ if }p\in\mathbb{R}_{\ge 1},\vspace{0.15cm}\\
\underset{i \in[d]}{\max}\ |a_i-b_i|&\text{ if }p=\infty.
\end{cases}
\]
\end{paragraph}

\begin{paragraph}{Steiner tree problem.}  Let  $(\mathcal{X}, \Delta)$ be a metric space on set $\mathcal{X}$ and distance function $\Delta$. 
Given a set $P$ of points in $(\mathcal{X}, \Delta)$, a \textit{Steiner tree} of those points is a minimal spanning tree of $P \cup S$ for some $S\subseteq \mathcal{X}$. The initial points in $P$ are called \textit{terminals} and the points in $S$ are called \textit{Steiner points}. 

Given a tree $T$ with vertices as points in $\mathcal{X}$ and edge weights induced by the distance function $\Delta$, the cost of the tree $T = (P \cup S, E)$, denoted by $\cost_{\Delta}(T)$ is 
\[
 \cost_{\Delta}(T):=\sum_{e = (u,v) \in E} \Delta(u, v).
\]

In the continuous Steiner tree problem (\cst) in the metric space $(\mathcal{X}, \Delta)$, we are given as input $P \subseteq \mathcal{X}$, and the goal is to find a minimum cost Steiner tree of $P$. In the discrete Steiner tree problem (\dst) in the metric space $(\mathcal{X}, \Delta)$, we are given a pair $(P, X)$ as input, with $P \subseteq \mathcal{X}$ the set of $n$ terminals and $X \subseteq \mathcal{X}$ with $|X| = O(\poly(n))$ as the set of facilities. The goal is then to find a minimum cost of a Steiner tree $T = (P \cup S, E)$ of $P$ with  $S \subseteq X$.

\dst can also be considered as a graph problem. In this setting it is called the metric Steiner Tree problem on graphs. Given a complete graph $G = (P \cup X, E)$ with weight function $w : E \to \mathbb{R}^{+}$ satisfying the triangle inequality, the metric Steiner tree problem is to find the minimum cost of a Steiner tree of $P$ using Steiner points from the set $X$.
\end{paragraph}

\paragraph{Graph coloring problem.} In the $k$-coloring problem, we are given a graph $G(V,E)$ and a positive integer $k$, and the goal is to determine whether there is a \textit{proper coloring} $\pi: V \to [k]$ such that, for each $i \in [k]$, $\pi^{-1}(i)$ forms an independent set in $G$. That is, the vertices in $G$ can be partitioned into at most $k$ independent sets. The minimum $k$ such that $G$ admits a proper $k$-coloring is the \textit{chromatic number} of $G$ and is denoted $\chi(G)$.

In the $(a, b)$-coloring, the question is to instead decide which of the following two cases hold:
\begin{itemize}
    \item \textit{Completeness:} $G$ admits a proper $a$-coloring ($\chi(G) \leq a$).
    \item \textit{Soundness:} Every proper coloring of $G$ uses at least $b$ colors  ($\chi(G) \geq b$).
\end{itemize}

\begin{paragraph}{Set packing problem.}
We say that $(\mathcal{U}, \mathcal{S})$ is a set system if $\mathcal{S}$ is a collection of subsets of $\mathcal{U}$.
An instance of the Set Packing problem (\setp) is given by the triple $([n], \mathcal{S}, k)$, where $([n], \mathcal{S})$ is a set system with $|\mathcal{S}| = m$ and $k \in \mathbb{N}$, and the goal is to determine if there exists some $\mathcal{S}' \subseteq \mathcal{S}$ such that for all distinct $S, S' \in \mathcal{S}'$, we have $S \cap S' = \emptyset$ and   $|\cup_{S \in \mathcal{S}'} S| \geq k$? When all of the sets in $\mathcal{S}$ are fixed to be of size $B$, we abbreviate the problem as \setpB. 

We define $(\varepsilon, \delta)$-\setpT to be the related decision problem in which we have as input $([n], \mathcal{S})$ as above and decide which of the following two cases hold.
\begin{itemize}
    \item \textit{Completeness:} $[n]$ may be partitioned into $n/3$ sets all of which are in $\mathcal{S}$.
    \item \textit{Soundness:} Any collection of disjoint sets in $\mathcal{S}$ covers at most $(1-\varepsilon)n$ elements in $[n]$, and for any collection $\mathcal{S}' \subset \mathcal{S}$ (sets in $\mathcal{S}'$ need not be pairwise disjoint) such that $[n] \subseteq \cup_{S \in \mathcal{S}'} S$ satisfies $|\mathcal{S}'| \geq (1 + \delta)(n/3)$.
\end{itemize}

\begin{remark}In the soundness case, the parameters $\varepsilon$ and $\delta$ characterize the maximum coverage of a set packing and the minimum size of a set cover, respectively. These parameters are related. Indeed, given $\varepsilon$, we should have $\delta \in [\varepsilon/2, 2\varepsilon]$. If any collection of disjoint sets in $\mathcal{S}$ covers at most $(1-\varepsilon)n$ elements in $[n]$, then the smallest possible size of a set cover is $(1 + \varepsilon/2)(n/3)$ (if every additional set covers $2$ elements). Assuming a set packing covering $(1-\varepsilon)n$ elements is possible, the largest possible minimum set cover is $(1 + 2\varepsilon)(n/3)$ (covering the remaining elements one at a time).  \label{rem:epsdelta}\end{remark}
Our reduction will use the following \np-hardness of $(\varepsilon, \delta)$-\setpT.

\begin{theorem}[Theorem 4.4, \cite{Petrank_1994}] \label{thm: triangle cov hardness}
There is some $\varepsilon > 0$ such that $(\varepsilon, \varepsilon/2)$-\setpT is \np-hard.
\end{theorem}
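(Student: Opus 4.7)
The plan is to reduce through Maximum 3-Dimensional Matching (3DM): given a $3$-uniform tripartite hypergraph on vertex sets $X,Y,Z$ with $|X|=|Y|=|Z|$, find a largest matching. Such an instance is automatically a \setpT instance by taking the universe to be $X\cup Y\cup Z$ and the set family to be the hyperedges, and a perfect matching corresponds exactly to a partition of the universe into $n/3$ disjoint triples (i.e.\ the completeness case of $(\varepsilon,\delta)$-\setpT). Hence it suffices to produce gap-NP-hardness for 3DM where the completeness case admits a perfect matching and the soundness case has every matching covering at most $(1-\varepsilon)n$ vertices, for some constant $\varepsilon>0$.

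Such a reduction can be obtained by gap-preserving gadgets starting from a bounded-occurrence constant-gap CSP (e.g.\ \msnp-hard MAX-E3-SAT-B, NP-hard with a constant gap via the PCP theorem composed with expander-based degree reduction). The idea, following the standard Karp--style construction, is to build for each variable a ``variable gadget'' consisting of an even cycle of triples admitting two disjoint perfect matchings representing the ``true'' and ``false'' assignments; and for each clause to add triples that can only be saturated via a satisfied literal, together with auxiliary dummy elements sized so that every vertex is covered precisely when the underlying assignment satisfies all clauses. Satisfiability of the CSP then yields a perfect matching in the hypergraph, while its constant-gap unsatisfiability propagates to a constant fraction $\varepsilon>0$ of vertices that no matching can cover. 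This is exactly the structure of Petrank's \emph{gap location} methodology.

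The remaining step is to derive the covering bound $|\mathcal{S}'|\ge(1+\varepsilon/2)(n/3)$ from the packing bound by a direct combinatorial argument that also explains the relation $\delta=\varepsilon/2$ in Remark~\ref{rem:epsdelta}. Given any set cover $\mathcal{S}'\subseteq\mathcal{S}$ of $[n]$, greedily extract a maximal pairwise-disjoint sub-collection of size $k'$; then $3k'\le(1-\varepsilon)n$ by the packing soundness, i.e.\ $k'\le(1-\varepsilon)n/3$. Every remaining set in $\mathcal{S}'$ must intersect the matching (otherwise it could be added to it), so it contributes at most $2$ new elements, giving $2(|\mathcal{S}'|-k')\ge n-3k'$. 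Rearranging yields $|\mathcal{S}'|\ge(n-k')/2\ge n/3+\varepsilon n/6=(1+\varepsilon/2)(n/3)$, as required. The main technical obstacle is the middle step: constructing gap-preserving gadgets from the CSP to 3DM that simultaneously propagate the constant gap and force a \emph{perfect} matching in the completeness case (rather than merely a large one); this is precisely the contribution of Petrank's paper, after which the passage to \setpT and the extraction of $\delta=\varepsilon/2$ are routine.
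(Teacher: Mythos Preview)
Your proposal is correct and follows essentially the same approach as the paper. The paper does not give a proof but simply cites Petrank's gap result for Max 3-Dimensional Matching and observes that 3DM is a special case of \setpT; you reproduce this observation, sketch the underlying Petrank construction, and make explicit the combinatorial derivation of $\delta=\varepsilon/2$ from the packing bound (which the paper records separately as Remark~\ref{rem:epsdelta}).
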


In \cite{Petrank_1994}, they actually prove Theorem \ref{thm: triangle cov hardness} for the Max 3-Dimensional Matching problem, but Max 3-Dimensional Matching is a special case of \setpT. This correspondence is apparent by viewing each edge of a $3$-uniform hypergraph as a subset of $[n]$.
\end{paragraph}

\paragraph{Vertex cover problem.} In the minimum Vertex Cover problem (\vcov), we are given a graph $G(V,E)$ and a positive integer $k$, and the goal is to determine if there exists $C \subseteq V$ with $|C| \leq k$ such that, for all $e \in E$, at least one endpoint of $e$ is in ${C}$. Such a $C$, containing an endpoint of each edge, is called a \textit{vertex cover} of $G$.

In $(a, b)$-\vcov, we are given a graph $G(V,E)$, and the goal is to decide whether there exists a vertex cover of $G$ of size at most $a \cdot |V|$ or if every vertex cover of $G$ has size at least $b \cdot |V|$.

\section{\texorpdfstring{\apx-hardness of \cst in $\ell_\infty$-metric}{APX-hardness of CST in linf-metric}} \label{sec: l_inf cst hardness}

In this section we prove Theorem \ref{thm:introellinf}. First, we recall some inapproximability results on graph coloring.

In \cite{Cerioli_Faria_Ferreira_Martinhon_Protti_Reed_2008}, they study the problem of partitioning a cubic graph into the minimum number of independent cliques. The minimum size of such a partition is the minimum size of a partition of the complement graph into independent sets, the chromatic number of the complement graph. In particular, they show the following.
\begin{theorem}[\cite{Cerioli_Faria_Ferreira_Martinhon_Protti_Reed_2008}, Theorem 7] \label{thm: hardness of linear chromatic number}
There exist constants $\varepsilon_1 < \varepsilon_2$ such that $(\varepsilon_1 n, \varepsilon_2 n)$-coloring is \np-hard. 
\end{theorem}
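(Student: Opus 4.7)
The plan is to derive this theorem as an almost immediate consequence of the main hardness result of Cerioli, Faria, Ferreira, Martinhon, Protti, and Reed on independent clique partition in cubic graphs, combined with the classical duality between clique partitions of a graph and proper colorings of its complement. Concretely, I would first invoke their theorem, which asserts the existence of constants $c_1 < c_2$ such that, given a cubic graph $G$ on $n$ vertices, it is \np-hard to distinguish whether $V(G)$ admits a partition into at most $c_1 n$ vertex-disjoint cliques versus whether every partition of $V(G)$ into cliques has size at least $c_2 n$. This is the only nontrivial input to the argument, and it is used as a black box.

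Next, I would appeal to the standard observation that subsets of $V(G)$ that induce cliques in $G$ are precisely the independent sets of the complement graph $\bar G$, so a partition of $V(G)$ into $k$ cliques is exactly a partition of $V(\bar G)$ into $k$ independent sets, equivalently a proper $k$-coloring of $\bar G$. Hence the minimum size of a clique partition of $G$ is exactly $\chi(\bar G)$.

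The reduction is then immediate: on input a cubic graph $G$, the reduction outputs $\bar G$, which has the same $n$ vertices and is computable in polynomial time. The completeness case (clique partition of $G$ of size at most $c_1 n$) translates to $\chi(\bar G) \le c_1 n$, and the soundness case translates to $\chi(\bar G) \ge c_2 n$. Setting $\varepsilon_1 := c_1$ and $\varepsilon_2 := c_2$ therefore yields \np-hardness of $(\varepsilon_1 n,\varepsilon_2 n)$-coloring, as desired. The only substantive work is the Cerioli et al.\ hardness gap for cubic graphs; once that is in hand, the complementation step is routine and preserves the linear gap in $n$.
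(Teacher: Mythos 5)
Your proposal is correct and follows essentially the same route as the paper: the paper likewise obtains this statement by citing the Cerioli et al.\ hardness result for partitioning cubic graphs into cliques with a linear-in-$n$ gap, and then using the standard duality that a clique partition of $G$ is exactly a proper coloring of the complement graph $\bar G$, so complementation (a polynomial-time map preserving the vertex count) transfers the gap to $(\varepsilon_1 n, \varepsilon_2 n)$-coloring. No gaps to report.
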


Below we provide an  efficiently computable function mapping a graph $G$ of order $n$ to an instance of \cst in the $\ell_{\infty}$-metric such that the optimal cost of the Steiner tree is $(n + \chi(G))/2$. Then the \apx-hardness of \cst in the $\ell_{\infty}$-metric follows as a corollary by applying Theorem~\ref{thm: hardness of linear chromatic number}.\footnote{Theorem \ref{thm: hardness of linear chromatic number} is stated without explicit values of $\varepsilon_1$ and $\varepsilon_2$ since the current derivable parameters come from a line of works proving \apx-hardness without trying to optimize the hardness of approximation factor. If the best known approximation algorithm for partitioning cubic graphs into cliques were shown to be optimal, that would immediately imply \cst in the $\ell_{\infty}$-metric is \np-hard to approximate within a factor of $17/16$.}

Given an undirected simple graph, we arbitrarily direct/orient the edges in the graph to form a digraph $G(V, A)$. Define the function $\Gamma: V \to \R^{|A|}$ such that, for all $i \in V$,
\[
\Gamma(i) = \left(\sum_{j \,:\, (i,j) \in A}\mathbf{e_{(i,j)}}\right) -\left( \sum_{j \,:\, (j,i)\in A}\mathbf{e_{(j,i)}}\right),
\]
where $\mathbf{e_{(i,j)}}$ is the standard basis vector in $\R^{|A|}$ with $1$ in the coordinate corresponding to the directed edge $(i,j)$ and $0$ elsewhere. Let $P = \{\Gamma(i) \mid\, i \in V\}$. Our corresponding instance of \cst in the $\ell_{\infty}$-metric is on the set of terminals $\tilde{P} =  P \cup \{\mathbf{0}\}$. We will refer to $\mathbf{0}$ as the \textit{root} terminal. 

In the spirit of the language of gap-preserving reductions, we refer to the case of showing that the optimal cost of a Steiner tree of $\tilde{P}$ is at most $(n + \chi(G))/2$ as the completeness case. Similarly, the case of showing that the optimal cost of a Steiner tree of $\tilde{P}$ is at least $(n + \chi(G))/2$ is the soundness case.

\subsection{Completeness} \label{sec: l_inf completeness}
Let $\pi: V \to [a]$ be a proper $a$-coloring of the vertices in $G$. Let $A^+$ and $A^-$ be two distinct partitions of $A$ where for every  $k\in[a]$, the $k\textsuperscript{th} $ part of $A^+$ is $A^+(k):=  \{(i,j) \in A\mid \pi(i)=k\}$ and the $k\textsuperscript{th} $ part of $A^-$ is  $A^-(k) =  \{(j,i) \in A\mid \pi(i)=k\}$. Define $\sigma : [a] \to \R^{|A|}$ as follows. The function $\sigma$ will map each color class to a Steiner point. We have 
\[\forall k\in[a],\  
\sigma(k) = \left(\sum_{f \in A^+(k)} \frac{1}{2} \cdot \mathbf{e_{f}}\right) - \left(\sum_{f \in A^-(k)}\frac{1}{2} \cdot \mathbf{e_{f}}\right).
\]
Now, our set of Steiner points will be exactly the set 
$\{\sigma(k)\mid k\in [a]\}$. Note that, since $\pi$ is a proper coloring, for all $k \in [a]$, $A^+(k) \cap A^-(k) = \emptyset$. We create our Steiner tree as follows. Add an edge from each Steiner point $\sigma(k)$ to the root terminal $\mathbf{0}$. Additionally, add an edge from each non-root terminal $t \in P$ to the Steiner point $\sigma(\pi(\Gamma^{-1}(t)))$. Note that $\sigma(\pi(\Gamma^{-1}(t)))$ is the Steiner point corresponding to the color class corresponding to the vertex mapped to terminal $t$. See Figure \ref{fig: completeness construction}.

Note that the resultant Steiner tree is connected and every edge has length $1/2$. This length is clear for the edges from Steiner points to the root terminal. Now fix some $t \in P$. In particular, if $t_f = 1$, then $f \in A^+(\pi(\Gamma^{-1}(t))).$ That is, in this case, $\sigma(\pi(\Gamma^{-1}(t)))_f = 1/2$ since $\pi$ is a proper coloring. Similarly, if $t_f = -1$, then $f \in A^-(\pi(\Gamma^{-1}(t)))$. Hence, in this case, $\sigma(\pi(\Gamma^{-1}(t)))_f = -1/2$.  Note that $\norm{\sigma(\pi(\Gamma^{-1}(t)))}_{\infty} = 1/2$ by definition, so, namely, if $t_f = 0$, $|\sigma(\pi(\Gamma^{-1}(t)))_f| \leq 1/2$. Altogether, this implies that $\norm{\sigma(\pi(\Gamma^{-1}(t))) - t}_{\infty} = 1/2$.

Hence, this constructed tree has total cost $(n + a)/2$. In particular, the optimal cost of a Steiner tree in this instance of \cst in the $\ell_{\infty}$-metric is at most $(n + \chi(G))/2$.

\begin{figure}[!ht]
    \centering
   
	\includegraphics[width=\textwidth]{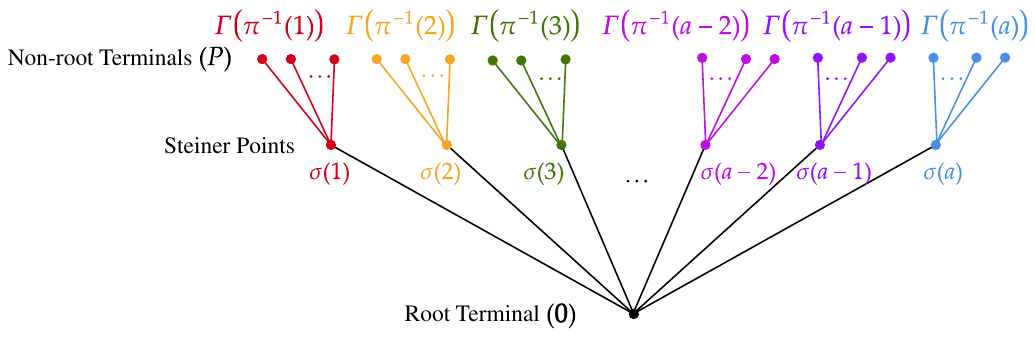}

\caption{Steiner tree of $P$ constructed from an $a$-coloring of $G$}
\label{fig: completeness construction}
\end{figure}

\subsection{Soundness}
It remains to show that the optimal cost of a Steiner tree of $\tilde{P}$ in the $\ell_{\infty}$-metric is at least $(n + \chi(G))/2$. To do so, we prove a series of lemmas which in tandem show that the Steiner tree constructed in the completeness case is in fact optimal. We emphasize that the function of these lemmas is to prove increasingly rigid structural guarantees about the optimal Steiner trees in the soundness case, building off of weaker guarantees proved in previous lemmas.

\begin{lemma} \label{lem: max dist}
Let $T(\tilde{P} \cup X, E)$ be an optimal Steiner tree of $\tilde{P}$. Then, for all $(u,v) \in E$, we have $\norm{u - v}_{\infty} \leq 1.$
\end{lemma}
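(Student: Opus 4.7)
The plan is to argue by contradiction via a single local swap. Suppose there exists an edge $(u,v) \in E$ with $\norm{u-v}_\infty > 1$. Deleting $(u,v)$ from $T$ disconnects $T$ into two components; let $C_0$ denote the one containing the root terminal $\mathbf{0}$, and let $C_1$ denote the other.

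The first step is to observe that $C_1$ must contain some non-root terminal. Indeed, if $C_1$ consisted entirely of Steiner points, then deleting all vertices and edges of $C_1$ together with $(u,v)$ from $T$ would yield a valid Steiner tree of $\tilde{P}$ of strictly smaller cost (since the removed edge $(u,v)$ alone contributes $\norm{u-v}_\infty > 1 > 0$ to $\cost(T)$), contradicting optimality of $T$. So we may fix some $t \in C_1 \cap P$; since $\mathbf{0} \in C_0$ and the two components are disjoint, $t \neq \mathbf{0}$.

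Next, I would invoke the coordinate structure of terminals: every $t = \Gamma(i) \in P$ has coordinates in $\{-1,0,1\}$, hence $\norm{t - \mathbf{0}}_\infty \leq 1$. Form a new tree $T'$ by removing $(u,v)$ from $T$ and adding the edge $(t, \mathbf{0})$. Since this new edge joins $C_1$ (which contains $t$) to $C_0$ (which contains $\mathbf{0}$), $T'$ is connected on the vertex set $\tilde{P} \cup X$ and has the same number of edges as $T$, hence is a spanning tree. Its cost satisfies
\[
\cost(T') = \cost(T) - \norm{u-v}_\infty + \norm{t - \mathbf{0}}_\infty \leq \cost(T) - \norm{u-v}_\infty + 1 < \cost(T),
\]
contradicting the optimality of $T$, so the assumption $\norm{u-v}_\infty > 1$ was false.

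The argument amounts to a one-edge swap and presents no substantial obstacle. The only step requiring mild care is the claim that $C_1$ contains a non-root terminal, which I justify using minimality of $T$ with respect to its Steiner points (an optimal Steiner tree cannot have a subtree of pure Steiner points hanging off, since that subtree could be deleted entirely). This lemma will later combine with bounds on terminal--terminal and terminal--Steiner distances to pin down the structure of optimal Steiner trees in this instance and relate their cost to $\chi(G)$.
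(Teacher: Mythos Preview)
Your proof is correct and follows essentially the same swap argument as the paper: delete the long edge, note the component not containing a terminal could be dropped outright, and otherwise reconnect via a terminal--terminal edge of length at most $1$. The only cosmetic difference is that the paper phrases the reconnection step via the minimum spanning tree on $\tilde{P}$ (whose edges all have length $1$), whereas you connect the stranded non-root terminal directly to $\mathbf{0}$; your version is if anything slightly more concrete.
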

\begin{proof}
Suppose $T$ has some edge of length greater than $1$. Remove the edge from $T$. Now, the resultant graph $T'$ is split into two connected components. This yields a partition of the terminals in $\tilde{P}$ into two parts. If one part contains no terminals, dropping that part yields a lower cost tree. 

Otherwise,  consider the minimum spanning tree of the terminals in $\tilde{P}$. All edges have length $1$. Namely, since it is a spanning tree, it has some edge crossing the cut induced by the partition of terminals in $T'$. Adding that edge to $T'$ forms a new Steiner tree of cost less than $T$, contradicting minimality of $T$ and proving the claim.
\end{proof}

 For ease of notation, in the remainder of this section we will make use of the following function $\sgn: \R \to \{-1, 0, 1\}$, where 
\[
\sgn(x) = \begin{cases} 
-1, &\text{if } x < 0;\\
0, &\text{if } x = 0;\\
1, &\text{if } x > 0.
\end{cases}
\]

With Lemma \ref{lem: structural props}, we begin by showing that there are optimal Steiner trees which are highly structured.

\begin{lemma} \label{lem: structural props}
There exists an optimal Steiner tree $T(\tilde{P} \cup X, E)$ with the following two properties. 
\begin{enumerate}[label=(P\arabic*)]
\item \textit{Rapid decay:} For all edges $(u,v) \in E$ with $u \in P \cup X$ and $v \in X$, if $u_i\neq 0, v_i \neq 0,$ and $\sgn(u_i) = \sgn(v_i)$, then $|u_i - v_i| = \|u - v\|_{\infty}$. \label{cond: max decrease}
\item \textit{Short edges:} Each edge $(u, v)$ for $u,v \in P \cup X$ satisfies $\|u - v\|_{\infty} < 1.$ \label{cond: short edges}
\end{enumerate}
\end{lemma}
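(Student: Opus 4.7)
The plan is to start from an arbitrary optimal Steiner tree $T_0$ of $\tilde{P}$ and produce an optimal tree satisfying both properties by a sequence of cost-preserving modifications. Concretely, among all optimal Steiner trees I would select one $T$ that minimizes the secondary objective $\Phi(T) = \sum_{s \in X(T)} \|s\|_1$, and argue that both properties must hold for $T$, each failure giving a modification that strictly decreases $\Phi$ without increasing cost.

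For property \ref{cond: max decrease}, suppose for contradiction there is an edge $(u,v) \in E$ with $v \in X$ and a coordinate $i$ for which $u_i, v_i$ are nonzero with the same sign but $|u_i - v_i| < \|u - v\|_\infty$. Without loss of generality, take $u_i, v_i > 0$. Replace $v$ by $v'$ which agrees with $v$ except in coordinate $i$, where I set $v'_i = \max\bigl(0,\, u_i - \|u - v\|_\infty\bigr)$ if $v_i > u_i$, and $v'_i = u_i + \|u - v\|_\infty$ if $v_i < u_i$ (the former is the important case and will typically be $< v_i$, while the latter, combined with suitable sign analysis, is symmetric). By construction $\|u - v'\|_\infty \leq \|u - v\|_\infty$. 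The crucial step is to check that for every other neighbor $w$ of $v$ in $T$, we have $\|w - v'\|_\infty \leq \|w - v\|_\infty$: since $|v'_i| \leq |v_i|$, the $i$-th coordinate contribution $|w_i - v'_i|$ can only grow when $w_i$ lies on the same side of $0$ as $v_i$ and satisfies $w_i \geq v_i > 0$; in that event I use Lemma \ref{lem: max dist} (which bounds $\|w - v\|_\infty \leq 1$), together with the fact that each terminal coordinate lies in $\{-1, 0, 1\}$ and is $\pm 1$ for exactly one terminal per coordinate, to confirm there is enough slack in the remaining coordinates to absorb the change. The resulting tree is optimal with strictly smaller $\Phi$, contradicting minimality.

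For property \ref{cond: short edges}, suppose $(u,v) \in E$ with $u,v \in P \cup X$ and $\|u-v\|_\infty = 1$ (the maximum allowed by Lemma \ref{lem: max dist}). Deleting $(u,v)$ separates $T$ into two components, exactly one of which contains $\mathbf{0}$. I would reconnect $T$ by adding an edge from $\mathbf{0}$ to some vertex $w$ in the other component, chosen so as not to increase the cost: if that component contains a non-root terminal, its distance to $\mathbf{0}$ is exactly $1$, so the swap is cost-preserving; if it contains only Steiner points, I first apply the coordinate-shrinking operation from the (P1) argument to drive $\|w\|_\infty$ down to at most $1$ (possible because Lemma \ref{lem: max dist} and the path from $w$ to $\mathbf{0}$ in $T$ bound $\|w\|_\infty$). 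Either way the new tree is optimal, the introduced edge touches $\mathbf{0}$ (and hence is outside the scope of \ref{cond: short edges}), and after finitely many iterations no edge between non-root points has length $\geq 1$. Strict inequality follows from the same minimality-of-$\Phi$ argument: any edge of length exactly $1$ between two non-root points admits a cost-preserving reroute whose net effect decreases $\Phi$.

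The main obstacle is the coordinate-wise verification in property \ref{cond: max decrease} when the neighbor $w$ has a same-sign $i$-th coordinate of magnitude exceeding $|v_i|$. Handling this case requires exploiting the highly constrained structure of the terminals (each coordinate takes values in $\{-1, 0, 1\}$, with precisely one $+1$ and one $-1$ terminal per coordinate), and tracking how the slack in other coordinates compensates for any increase in coordinate $i$. This structural rigidity is precisely what makes the local modification feasible, and carrying out the case analysis carefully is where the bulk of the technical work will lie.
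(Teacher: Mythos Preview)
Your $\Phi$-minimization framework is a legitimate way to package the argument, but the actual modification you propose for \ref{cond: max decrease} has a genuine gap, and it is precisely the case you flag as the ``main obstacle.''

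Concretely: suppose $(u,v)$ violates \ref{cond: max decrease} in coordinate $i$ with $u_i,v_i>0$, and you decrease $v_i$. If $v$ has another neighbor $w$ with $w_i > v_i > 0$, then $|w_i - v'_i|$ strictly increases; if moreover the edge $(v,w)$ already satisfies \ref{cond: max decrease} in coordinate $i$ (i.e.\ $|w_i-v_i| = \|w-v\|_\infty$), then $\|w-v'\|_\infty$ strictly increases and the tree gets more expensive. There is no reason this configuration cannot occur at a $\Phi$-minimizer, and your appeal to ``slack in the remaining coordinates'' does not save you: the $\ell_\infty$ distance is determined by the single worst coordinate, so slack elsewhere is irrelevant. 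Separately, in your subcase $v_i < u_i$ you set $v'_i = u_i + \|u-v\|_\infty > v_i$, which \emph{increases} $\Phi$ and so cannot produce the contradiction you want.

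What the paper does---and what is really needed---is to root the tree at the unique terminal $t$ with $t_i = 1$, orient edges away from $t$, and then \emph{propagate} the reduction: when you set $v'_i = \max(0,\,u_i - \|u-v\|_\infty)$, you also walk down the subtree below $v$ and clip every descendant's $i$th coordinate to at most $v'_i$. This guarantees no edge length in the subtree increases (since only one terminal has positive $i$th coordinate and it lies above $v$), and the total effect is a strict decrease in $\sum_s |s_i|$. With this propagation step your $\Phi$-argument goes through cleanly; without it the local change simply fails. Once \ref{cond: max decrease} is in hand, the paper deduces $\|s\|_\infty < 1$ for every Steiner point $s$ (an immediate consequence of rooting at $t$ and applying \ref{cond: max decrease} along the first edge), which is exactly what you need to make the reconnection-through-$\mathbf{0}$ in \ref{cond: short edges} cost at most $1$ and strictly less when the severed endpoint is a Steiner point.
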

\begin{proof}
Let $T(\tilde{P} \cup X, E)$ be an optimal Steiner tree of $\tilde{P}$. Consider the $i$\textsuperscript{th} coordinate (corresponding to some arc in $G(V,A)$). First, note that there are exactly two terminals in $P$ with nonzero $i$\textsuperscript{th} coordinate. Let $t$ be the terminal such that $t_i = 1$. Root $T$ at $t_i$ in the sense of directing each edge in $T$ away from $t_i$. Suppose that $(u,v)$ is a directed edge in $T$ such that $u_i, v_i > 0$, and $\|u - v\|_{\infty} \neq u_i - v_i$. Note that this is only possible for $u \in P \cup X$ and $v \in X$.

We construct a new Steiner tree $T'$ with $\cost_{\infty}(T') \leq \cost_{\infty}(T)$. The tree $T'$ will be the same as $T$ except for the following modifications. Set $v_i'  := \max(0, u_i - \|u - v\|_{\infty})$. Since $v_i > v_i' \geq 0$, and, for all terminals other than $t$ we have $t_i \leq 0$, this cannot increase the length of edges to terminals (compared to $T$). This is because, by Lemma \ref{lem: max dist}, since $v_i > 0$, $v$ cannot be adjacent to terminals other than $t$.
 
 Now, if for all directed edges $(v, w)$ in $T$ with $w \in X$ we have $v_i' \geq w_i$, then we have $\cost_{\infty}(T') \leq \cost_{\infty}(T)$ (since $v_i \geq v_i'$). Otherwise, suppose there exists some $w$ such that $v_i' < w_i$. Then, set $w_i'  := v_i'$. Then, if for all $(w,x)$ we have $w_i' = v_i' \geq x_i$, we have $\cost_{\infty}(T') \leq \cost_{\infty}(T)$. Otherwise, repeat this process, setting $x_i' := v_i'$, checking the $i$\textsuperscript{th} coordinate of the out-neighbors of $x'$ in $T'$ and setting their $i$\textsuperscript{th} coordinates to $v_i'$ if they are greater. Since  $T'$ is acyclic and finite, this process will eventually terminate. 
 
Repeat this process for all other directed edges $(u,v)$ with  $u_i, v_i > 0$, and $\|u - v\|_{\infty} \neq u_i - v_i$, treating the newly created optimal Steiner tree $T'$ as $T$ and the resultant tree as $T'$. Then, further repeat this process for all other coordinates $i$. Afterward, repeat this process for all coordinates $i$, except this time for all edges $(u,v)$ with $u_i, v_i < 0$ and  $\|u - v\|_{\infty} \neq  v_i - u_i$. For negative coordinates and a directed edge $(u,v)$ such that $u_i, v_i < 0$ and $\|u -v\|_{\infty} \neq v_i - u_i$, we will set $v_i^* := \min(0, u_i + \|u-v\|_{\infty})$. Since each round of modifying the tree only affects a single coordinate (and only Steiner points which are nonzero and of a given sign in those coordinates), this yields \ref{cond: max decrease}. 

Before we modify the tree further to ensure \ref{cond: short edges}, we observe that \ref{cond: max decrease} implies a useful property of the Steiner points in the Steiner tree.
\begin{claim} \label{cla: Steiner point magnitude less than 1}
Let $T(\tilde{P} \cup X, E)$ be an optimal Steiner tree satisfying \ref{cond: max decrease}. Then, for all Steiner points $s \in X$, $\norm{s}_{\infty} < 1.$
\end{claim}
\begin{subproof}
Let $t \in P$ be the unique terminal with $t_i = 1$. Then, suppose that $s \in X$ is a Steiner point adjacent to $t$. From Lemma \ref{lem: max dist}, $s_i \geq 0$. Since the distance between any two points in $T$ is greater than $0$ (assuming no degenerate Steiner points), from \ref{cond: max decrease}, $s_i < 1$. Then, consider rooting $T$ at $t$. \ref{cond: max decrease} immediately implies that the $i\textsuperscript{th} $ coordinate of any Steiner point not adjacent to $t$ is at most the maximum $i\textsuperscript{th} $ coordinate of the Steiner points adjacent to $t$. This observation holds for any coordinate $i$ and holds similarly for negatively signed coordinates, yielding the claim.
\end{subproof}

Now, suppose that $T'$ is the optimal Steiner tree resultant from the modifications for ensuring \ref{cond: max decrease}. Let $(u,v)$ be an edge with $u,v \in P \cup X$. By Lemma \ref{lem: max dist}, $\|u - v\|_{\infty} \leq 1$. If $\|u - v\|_{\infty} = 1$, consider removing the edge $(u,v)$, thereby disconnecting the tree. One of $u$ or $v$ will be in the same resultant connected component as $\mathbf{0}$. Connect the other to $\mathbf{0}$, reconnecting the tree with an edge of cost at most $1$. Repeating this process yields \ref{cond: short edges} since all edges added in each iteration of this process involve $\mathbf{0}$ and each iteration removes an edge in violation of \ref{cond: short edges}. Since this process only removes edges and adds edges adjacent to $\mathbf{0}$,  the resultant tree still satisfies \ref{cond: max decrease}.
\end{proof}

Optimal Steiner trees with the properties of Lemma \ref{lem: structural props} have some other useful properties.

\begin{corollary} \label{cor: no term term edges}
Let $T(\tilde{P} \cup X, E)$ be an optimal Steiner tree satisfying the properties of Lemma \ref{lem: structural props}. Then, for all $t,t' \in P$, $(t,t') \not\in E$.
\end{corollary}
\begin{proof}
Note that for all $t, t' \in P$ distinct, $\|t - t'\|_{\infty} \geq 1$. \ref{cond: short edges} then implies the desired result.
\end{proof}

\begin{corollary} \label{cor: independence struct}
Let $T(\tilde{P} \cup X, E)$ be an optimal Steiner tree satisfying the properties of Lemma \ref{lem: structural props}. Then, for all coordinates $i$ and  $t,t' \in P$ such that $t_i = 1$ and $t_i' = -1$, there does not exist Steiner point $s \in X$ such that $(t_i,s), (t_i', s) \in E$. That is, $t$ and $t'$ do not share a common neighboring Steiner point.
\end{corollary}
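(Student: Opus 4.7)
The plan is to prove the corollary by contradiction, leveraging the \emph{short edges} property \ref{cond: short edges} directly on the $i\textsuperscript{th}$ coordinate of a hypothetical shared Steiner neighbor. This is a very short argument, essentially a one-coordinate projection of the triangle inequality.

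Suppose toward contradiction that there is a Steiner point $s \in X$ with both $(t,s), (t',s) \in E$. By \ref{cond: short edges}, we would have $\|s - t\|_\infty < 1$ and $\|s - t'\|_\infty < 1$. Projecting onto coordinate $i$, which is legal since $\|u-v\|_\infty \geq |u_i - v_i|$ for any $u,v$, we would deduce $|s_i - 1| < 1$ and $|s_i - (-1)| < 1$. The first inequality forces $s_i > 0$ and the second forces $s_i < 0$, which is the desired contradiction.

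The only subtlety to double-check is that \ref{cond: short edges} is indeed a \emph{strict} inequality between two points both in $P \cup X$ (as opposed to one endpoint being $\mathbf{0}$). Since in our setup $s \in X$ and $t, t' \in P$, both edges $(s,t)$ and $(s,t')$ fall under the scope of \ref{cond: short edges}, so the strict bound applies and the argument closes. No appeal to \ref{cond: max decrease} or to Claim~\ref{cla: Steiner point magnitude less than 1} is needed for this particular corollary; the short-edges property alone suffices because the two target values $+1$ and $-1$ are exactly at distance $2$, so no single $s_i$ can lie strictly within distance $1$ of both.
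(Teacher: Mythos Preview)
Your argument is correct and matches the paper's proof essentially verbatim: the paper also supposes otherwise, applies \ref{cond: short edges} to both edges to obtain $s_i > 0$ and $s_i < 0$, and derives the contradiction. Your additional remark checking that both edges lie in $P \cup X$ so that \ref{cond: short edges} applies is a worthwhile sanity check, but otherwise there is no difference in approach.
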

\begin{proof}
Suppose otherwise. Then, consider $s_i$. \ref{cond: short edges} implies that $s_i > 0$ and $s_i < 0$, a contradiction.
\end{proof}

We now proceed by a series of lemmas extending the structural constraints established in Lemma \ref{lem: structural props}.
\begin{lemma} \label{lem: term leaf node}
There exists an optimal Steiner tree $T(\tilde{P} \cup X, E)$ satisfying \ref{cond: max decrease}, \ref{cond: short edges}, and the following additional property:
\begin{enumerate}[label=(P\arabic*)]
\setcounter{enumi}{2}
\item \textit{Terminal connectivity:} For all $t \in P$, $t$ is a leaf node. \label{cond: term leaf node}
\end{enumerate}
\end{lemma}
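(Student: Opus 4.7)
The plan is to iteratively reduce the quantity $\sum_{t \in P}\max(\deg(t) - 1, 0)$ for an optimal Steiner tree $T$ satisfying \ref{cond: max decrease} and \ref{cond: short edges}, each step preserving optimality and both structural conditions, until all non-root terminals become leaves. Fix $t \in P$ with $\deg_T(t) \ge 2$. By Corollary \ref{cor: no term term edges} every neighbor of $t$ lies in $X \cup \{\mathbf{0}\}$, and since the $t$-to-$\mathbf{0}$ path in $T$ uses exactly one incident edge of $t$, one of $t$'s Steiner point neighbors $s$ has the property that deleting $(t, s)$ leaves $s$ in the component $T_s$ not containing $\mathbf{0}$. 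I build $T^*$ by removing $(t, s)$, overwriting $\tilde s_i \gets 0$ for every Steiner point $\tilde s \in T_s$ and every coordinate $i$ with $t_i \neq 0$ and $\sgn(\tilde s_i) = \sgn(t_i)$, and finally adding the edge $(s', \mathbf{0})$ where $s'$ denotes the modified $s$. This is a spanning tree with $\deg_{T^*}(t) = \deg_T(t) - 1$ and every other terminal degree unchanged.

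The key cost estimate is $\|s'\|_\infty \le \|t - s\|_\infty =: d$. For coordinates with $t_i = 0$ this is automatic from $\|t - s\|_\infty = d$; for $\sgn(s_i) = \sgn(t_i) \neq 0$ the modification sets $s'_i = 0$; and the remaining case $t_i \neq 0$ with $\sgn(s_i) \neq \sgn(t_i)$ forces $s_i = 0$, because otherwise $|t_i - s_i| \ge 1$, contradicting $d < 1$ from \ref{cond: short edges}. A coordinate-by-coordinate check gives the analogous inequality $\|u' - v'\|_\infty \le \|u - v\|_\infty$ for every surviving edge $(u, v)$ of $T_s$, using that any terminal $t' \neq t$ is either $0$ or oppositely signed to $t$ in every coordinate where $t$ is nonzero. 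Combining these gives $\cost(T^*) \le \cost(T)$, so by optimality of $T$ we have $\cost(T^*) = \cost(T)$ and $T^*$ is again optimal.

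The subtle step, which I expect to be the main obstacle, is verifying \ref{cond: max decrease} in $T^*$. The critical observation is that the modification can only destroy, never create, same-sign-nonzero coordinates: if the antecedent of \ref{cond: max decrease} is satisfied on an edge $(u, v)$ of $T^*$ at coordinate $i$, i.e., $u'_i, v'_i$ are both nonzero with a common sign, then either $t_i = 0$ or that common sign is opposite to $\sgn(t_i)$, so neither coordinate was modified. The original equality $|u_i - v_i| = \|u - v\|_\infty$ (in $T$) then survives with the identical values, and together with $\|u' - v'\|_\infty \le \|u - v\|_\infty$ this forces $|u'_i - v'_i| = \|u' - v'\|_\infty$ in $T^*$. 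The new edge $(s', \mathbf{0})$ has $\mathbf{0}_i = 0$ for every $i$, making the antecedent of \ref{cond: max decrease} vacuous. Property \ref{cond: short edges} is immediate since all edges among $P \cup X$ have non-increasing length and the new edge involves $\mathbf{0} \notin P \cup X$. Iterating this reduction until no terminal has degree at least $2$ produces the desired Steiner tree.
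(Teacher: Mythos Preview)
Your proof is correct and takes essentially the same approach as the paper: drop an edge $(t,s)$ to a Steiner neighbor on the non-$\mathbf{0}$ side, zero out the $t$-signed coordinates of Steiner points in that subtree, and reconnect $s'$ to $\mathbf{0}$. The paper separates out the case where $t$ is adjacent to $\mathbf{0}$ (deriving a contradiction via Claim~\ref{cla: Steiner point magnitude less than 1}) and restores \ref{cond: max decrease}--\ref{cond: short edges} by re-running the procedure of Lemma~\ref{lem: structural props}, whereas you treat all cases uniformly and verify preservation of \ref{cond: max decrease}--\ref{cond: short edges} directly---a slightly cleaner variant of the same argument.
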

\begin{proof}
Let $T(\tilde{P} \cup X, E)$ be an optimal Steiner tree satisfying \ref{cond: max decrease} and \ref{cond: short edges} (such a tree exists from Lemma \ref{lem: structural props}).  We modify $T$ without increasing the cost to yield the desired Steiner tree. 

First, by Corollary \ref{cor: no term term edges}, there are no edges between non-root terminals in $T$. Now, suppose that $t \in P$ is adjacent to $\mathbf{0}$ and some Steiner point $s$. Note that $t$ may be adjacent to other Steiner points. Drop the edge from $t$ to $\mathbf{0}$ and replace it with an edge from $\mathbf{0}$ to $s$. From Claim \ref{cla: Steiner point magnitude less than 1}, this decreases the cost of $T$, contradicting minimality. 

Finally, suppose that $t$ is adjacent to at least two Steiner points and is not adjacent to $\mathbf{0}$. Consider the subtrees of $T$ that would contain these Steiner points if we removed $t$ from the tree. In particular, one of the subtrees must contain $\mathbf{0}$ and the others cannot contain $\mathbf{0}$. Let $s$ be a Steiner point adjacent to $t$ whose resultant subtree does not contain $\mathbf{0}$. Call the subtree $T_s$. Now, drop the edge from $s$ to $t$. For each coordinate $i$ such that $t_i \neq 0$, for all Steiner points $s'$ in $T_s$, if $\sgn(s'_i) = \sgn(t_i)$, set $s'_i = 0$. This cannot increase the cost of $T_s$—the length of edges between Steiner points cannot increase, the length of edges between Steiner points and terminals with $i\textsuperscript{th} $ coordinate $0$ cannot increase, and, by \ref{cond: short edges}, any Steiner points adjacent to the terminal with opposite sign $i\textsuperscript{th} $ coordinate are unaffected.  Now add an edge  from $s$ to $\mathbf{0}$, reconnecting the tree. In particular, we now have $s_j \neq 0$ only if $t_j = 0$, so $\|s \|_{\infty}$ is at most the former distance between $s$ and $t$. Hence, these modifications did not increase the cost of the tree and removed a neighbor of $t$ without adding any neighbors to a nonzero terminal. Repeating this process—for other Steiner points $s$ adjacent to $t$ such that their subtrees resultant from removing $t$ do not contain $\mathbf{0}$—yields \ref{cond: term leaf node}. Crucially, these modifications are consistent with \ref{cond: max decrease} and \ref{cond: short edges}. After making these modifications, the tree remains optimal and we can re-apply the modifications detailed in the proof of Lemma \ref{lem: structural props} to ensure  \ref{cond: max decrease} and \ref{cond: short edges} since those modifications do not add edges to any terminals in $P$.
\end{proof}

Optimal Steiner trees with properties 
\ref{cond: max decrease}, \ref{cond: short edges}, and \ref{cond: term leaf node} reflect some the structure of the Steiner trees constructed in the completeness case. In particular, Corollary \ref{cor: independence struct} combined with \ref{cond: term leaf node} shows that these optimal Steiner trees induce a proper coloring of $V$. The color classes are defined by the (unique) Steiner points adjacent to non-root terminals, and each non-root terminal adjacent to $\mathbf{0}$ has its own color class. Given some optimal Steiner tree with properties \ref{cond: max decrease}, \ref{cond: short edges}, and \ref{cond: term leaf node}, the induced coloring of $V$ is given by the function $\pi_T$. Denote the set of non-root terminals adjacent to a Steiner point $s$ by $P(s)$ (note that $\Gamma^{-1}(P(s))$ is a color class defined by $\pi_T$).

Equipped with these structural observations, additional completeness structure emerges in the following lemma. In the statement of the lemma, $\sigma$ is defined exactly as in Section \ref{sec: l_inf completeness}, induced by the coloring $\pi_T$. In other words, for all Steiner points $s$ only adjacent to non-root terminals and $\mathbf{0}$, if $s$ is adjacent to some terminal $t$ with $t_i \neq 0$, then $s_i = t_i /2$.

\begin{lemma} \label{lem: coordinates of 1/2}
There exists an optimal Steiner tree  $T(\tilde{P} \cup X, E)$ satisfying \ref{cond: max decrease}, \ref{cond: short edges}, and \ref{cond: term leaf node}, and the following additional property.
\begin{enumerate}[label=(P\arabic*)]
\setcounter{enumi}{3}
    \item \textit{Completeness structure:}  For all Steiner points $s$ only adjacent to non-root terminals and $\mathbf{0}$, we have $s = \sigma(\pi_T(\Gamma^{-1}(P(s))))$. \label{cond: completeness structure}
\end{enumerate}
\end{lemma}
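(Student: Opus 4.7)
The plan is as follows. Starting from a minimum Steiner tree $T$ satisfying \ref{cond: max decrease}, \ref{cond: short edges}, and \ref{cond: term leaf node} (provided by Lemma~\ref{lem: term leaf node}), and after removing any Steiner points of degree at most $2$ in $T$ via the triangle inequality (which preserves \ref{cond: max decrease}--\ref{cond: term leaf node}), for every Steiner point $s$ whose only neighbors are non-root terminals and possibly $\mathbf{0}$ I would relocate $s$---keeping its adjacencies intact---to the point $s^* := \sigma(\pi_T(\Gamma^{-1}(P(s))))$, and verify that this local move does not increase the cost and preserves \ref{cond: max decrease}--\ref{cond: term leaf node}. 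Iterating over all such $s$ then yields an optimal tree satisfying \ref{cond: completeness structure}.

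Write $P(s) = \{t^{(1)}, \ldots, t^{(k)}\}$, $I_j := \{i : t^{(j)}_i \neq 0\}$, and let $\epsilon \in \{0,1\}$ record whether $s$ is adjacent to $\mathbf{0}$. By Corollary~\ref{cor: independence struct}, the supports $I_1,\ldots,I_k$ are pairwise disjoint, so directly from the definition of $\sigma$ in Section~\ref{sec: l_inf completeness}, $s^*$ equals $t^{(j)}_i/2 = \pm 1/2$ on $i \in I_j$ and $0$ elsewhere. In particular $\|s^*\|_\infty = 1/2$ and $\|s^* - t^{(j)}\|_\infty = 1/2$ for every $j$, so the edges incident to $s^*$ contribute exactly $(k + \epsilon)/2$ to the tree cost.

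It then remains to show the original $s$ contributed at least $(k+\epsilon)/2$. Set $a_j := \|s - t^{(j)}\|_\infty$ and $\alpha := \|s\|_\infty$, so the contribution equals $\sum_j a_j + \epsilon\,\alpha$. If $\alpha < 1/2$, then for each $i \in I_j$ we have $|s_i - t^{(j)}_i| \geq 1 - \alpha$, giving $a_j \geq 1 - \alpha$ and hence $\sum_j a_j + \epsilon\alpha \geq k(1-\alpha) + \epsilon\alpha \geq (k+\epsilon)/2$. If $\alpha \geq 1/2$ and every $a_j \geq 1/2$, the bound is immediate. In the remaining case some $a_{j_0} < 1/2$, which forces $|s_{i_0}| = c > 1/2$ with $\sgn(s_{i_0}) = \sgn(t^{(j_0)}_{i_0})$ for any $i_0 \in I_{j_0}$, so $a_{j_0} \geq 1-c$; the disjointness $I_{j_0}\cap I_{j'} = \emptyset$ gives $t^{(j')}_{i_0} = 0$ and hence $a_{j'} \geq c$ for $j'\neq j_0$. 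Combining with $\alpha \geq c$ and the degree bound $k+\epsilon \geq 3$ yields $\sum_j a_j + \epsilon\alpha \geq 1 + (k-2+\epsilon)c > (k+\epsilon)/2$ since $c > 1/2$.

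I expect the main obstacle to be this last case: one cannot rule out a single edge from $s$ being shorter than $1/2$, and the saving there must be strictly compensated by simultaneously using the disjoint supports of the $t^{(j)}$'s (to force the remaining $a_{j'}$'s to be at least $c$) and the degree lower bound (so that enough such $j'$ exist). Once the cost bound is secured, \ref{cond: max decrease}--\ref{cond: term leaf node} survive immediately: every edge incident to $s^*$ has length $1/2 < 1$; shared-sign nonzero coordinates of $s^*$ and a neighbor $t^{(j)}$ only occur at $i \in I_j$, where $|s^*_i - t^{(j)}_i| = 1/2 = \|s^* - t^{(j)}\|_\infty$; and relocation never changes terminal adjacencies.
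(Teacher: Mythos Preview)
Your argument is correct and takes a genuinely different route from the paper. The paper proceeds by a sequence of local coordinate adjustments on $s$: first zeroing out irrelevant coordinates, then clipping $\|s\|_\infty$ down to $1/2$, then pushing the corresponding coordinates up to magnitude $1/2$, checking at each stage that some edge shortens at least as much as another lengthens. You instead jump $s$ directly to $s^*$ in one move and bound the total incident cost from below via a clean three-case analysis on $\alpha=\|s\|_\infty$. Your approach is shorter and makes the role of the disjoint supports (Corollary~\ref{cor: independence struct}) more transparent; the paper's stepwise approach has the advantage that each step is a one-line monotonicity check.

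One small wrinkle: the preliminary ``remove all degree $\le 2$ Steiner points'' step is both unnecessary and not obviously harmless. It is unnecessary because for any $s$ of the type in question connectivity together with \ref{cond: term leaf node} forces $s$ to be adjacent to $\mathbf{0}$, so $\epsilon=1$; and optimality rules out degree~$1$, so $k\ge 1$ and hence $k+\epsilon\ge 2$. Your case analysis already works for $k+\epsilon\ge 2$ (in the third case the inequality becomes $\ge$ rather than $>$ when $k+\epsilon=2$, which is all you need). It is not obviously harmless because contracting a degree-$2$ Steiner point adjacent to two other Steiner points could produce an edge of length $\ge 1$, violating \ref{cond: short edges} globally even though your relocation of the relevant $s$'s is unaffected; the paper handles analogous contractions by explicitly re-running the \ref{cond: short edges} repair from Lemma~\ref{lem: structural props}. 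Simply dropping the removal step (and weakening the strict inequality in your third case to $\ge$) cleans this up with no loss.
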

\begin{proof}
Let $T(\tilde{P} \cup X, E)$ be a Steiner tree satisfying \ref{cond: max decrease}, \ref{cond: short edges}, and \ref{cond: term leaf node}. Consider some such Steiner point $s$ in $T$ adjacent to only non-root terminals and $\mathbf{0}$. Note that, by \ref{cond: term leaf node}, since nonzero terminals are leaf nodes, $s$ must be adjacent to $\mathbf{0}$ so that $T$ is connected.

First,  we may assume that if $s_i \neq 0$, then there exists $t \in P(s)$ such that $t_i \neq 0$. Otherwise setting $s_i = 0$ does not increase the cost of the tree.

Next, if there exists $t \in P(s)$ with $t_i = 1$, then, by Corollary \ref{cor: independence struct}, $t'$ such that $t'_i = -1$ is not an element of $P(s)$. Moreover, \ref{cond: short edges} then implies that $s_i > 0$. The analogous facts are true if there exists $t \in P(s) $ with $t_i = -1$.

Now suppose that $\norm{s}_{\infty} = 1/2 + \varepsilon$ for some $\varepsilon > 0$. Then, for each $i$ such that $s_i > 0$, set $s_i = \min(s_i, 1/2)$, and, for each $j$ such that $s_j <0$, set $s_j = \max(s_j, -1/2)$. Note that, for at most one $t \in P(s)$, $\|s - t\|_{\infty}$ will increase as a result of this operation (and it will increase by at most $\varepsilon$). However, $\|s\|_{\infty}$ will decrease by $\varepsilon$, implying that the overall cost of the tree does not increase as a result of this operation (due to the edge to $\mathbf{0}$).

Finally, suppose that there is some $i$ such that $0 < |s_i| < 1/2$. Let 
\[
\min_{\{i \,:\, 0 < |s_i| < 1/2\}} |s_i| = 1/2 - \varepsilon
\]
for some $\varepsilon > 0$ and let 
\[
i_0 = \argmin_{\{i \,:\, 0 < |s_i| < 1/2\}} |s_i|.
\]
Let $\tilde{t}$ be a terminal in $P(s)$ such that $|s_{i_0}| = 1/2 - \varepsilon$ and $\tilde{t}_{i_0} \neq 0$. Then, for each $i$ such that $s_i > 0$, set $s_i = 1/2$ and for each $j$ such that $s_j < 0$, set $s_j = -1/2$. Now, this modification decreases the distance from $s$ to $\tilde{t}$ by $\varepsilon$ while increasing the distance from $\mathbf{0}$ to $s$ by at most $\varepsilon$. Moreover, for all $t \in P(s)$, this does not increase $\|t - s\|_{\infty}$ since, from the previous modification, $\|t - s\|_{\infty} \geq 1/2$. Hence, this does not increase the cost of $T$ and yields \ref{cond: completeness structure}. Note that \ref{cond: max decrease}, \ref{cond: short edges}, and \ref{cond: term leaf node} are all preserved under these operations.
\end{proof}

Lemma \ref{lem: coordinates of 1/2} shows that when Steiner points are not adjacent to other Steiner points, they appear exactly like Steiner points in the completeness case. Our next objective is then to show that we may assume there are no edges between Steiner points. To do so, we will first consider the most ``extremal'' Steiner points which are not accounted for by Property \ref{cond: completeness structure}. Let $T(\tilde{P} \cup X, E)$ be an optimal Steiner tree satisfying \ref{cond: max decrease}, \ref{cond: short edges}, \ref{cond: term leaf node}, and \ref{cond: completeness structure}. Then, removing all terminals in $P$ will result in a smaller (connected) tree since all such terminals are leaf nodes by \ref{cond: term leaf node}. Call the non-terminal leaf nodes in the resultant tree \textit{Steiner leaves}. We show that Steiner leaves are highly structured in Lemma \ref{lem: Steiner leaf struct}.

\begin{lemma} \label{lem: Steiner leaf struct}
There exists an optimal Steiner tree  $T(\tilde{P} \cup X, E)$ satisfying \ref{cond: max decrease}, \ref{cond: short edges}, \ref{cond: term leaf node}, \ref{cond: completeness structure}, and the following additional property.

\begin{enumerate}[label=(P\arabic*)]
\setcounter{enumi}{4}
    \item \textit{Steiner leaf structure:}  For all Steiner leaves $s \in X$, $\|s\|_{\infty} = 1/2$ \label{cond: Steiner leaf structure}. Moreover, for each coordinate $i$ such that there exists $t \in P(s)$ with $t_i \neq 0$, $|s_i| = 1/2$.
\end{enumerate}
\end{lemma}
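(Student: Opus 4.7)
The plan is to start with an optimal Steiner tree $T$ satisfying \ref{cond: max decrease}--\ref{cond: completeness structure} and, for each Steiner leaf $s$ with unique non-terminal neighbor $s^{**}$, exhibit a replacement point $s'$ satisfying \ref{cond: Steiner leaf structure} whose substitution into $T$ does not increase cost. The standard triangle-inequality argument (Steiner points of degree one or two can be deleted or contracted without increasing cost) gives $\deg(s)\geq 3$ in $T$, hence $|P(s)|\geq 2$ since $s$ has exactly one non-terminal neighbor.

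The proof then splits on $s^{**}$. If $s^{**}=\mathbf{0}$, the Steiner point $s$ is adjacent only to non-root terminals and $\mathbf{0}$, so \ref{cond: completeness structure} applies and forces $s=\sigma(\pi_T(\Gamma^{-1}(P(s))))$; the construction of $\sigma$ in Section~\ref{sec: l_inf completeness} then immediately yields $\|s\|_\infty=1/2$ and $|s_i|=1/2$ on exactly the coordinates where some $t\in P(s)$ has $t_i\neq 0$. The main case is $s^{**}\in X$. The central structural ingredient is a pairwise bound: for distinct $t,t'\in P(s)$, the supports $N(t),N(t')$ are disjoint (distinct directed edges have distinct endpoints, and Corollary~\ref{cor: independence struct} rules out opposite signs in a shared coordinate), so $\|t-t'\|_\infty=1$, and the triangle inequality yields $\|s-t\|_\infty+\|s-t'\|_\infty\geq 1$. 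Summing over all unordered pairs and dividing by $|P(s)|-1$ gives $\sum_{t\in P(s)}\|s-t\|_\infty\geq|P(s)|/2$.

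I would define $s'$ coordinate-wise by $s'_i=t_i/2$ whenever some (necessarily unique) $t\in P(s)$ has $t_i\neq 0$, and $s'_i=\sgn(s^{**}_i)\cdot\min(|s^{**}_i|,1/2)$ otherwise. Then $\|s'\|_\infty=1/2$, with $|s'_i|=1/2$ on exactly the constrained coordinates, and a direct check gives $\|s'-t\|_\infty=1/2$ for each $t\in P(s)$. The cost-nonincrease step reduces to verifying
\[
\|s'-s^{**}\|_\infty - \|s-s^{**}\|_\infty \leq \sum_{t\in P(s)}\|s-t\|_\infty - |P(s)|/2.
\]
The right-hand side is the ``slack'' in the pairwise bound at $s$, and the plan is to establish this inequality by invoking optimality of $T$ globally: the alternative tree obtained by deleting $s$ and reconnecting each $t\in P(s)$ directly to $s^{**}$ has cost change $\sum_t\|t-s^{**}\|_\infty - \sum_t\|s-t\|_\infty - \|s-s^{**}\|_\infty$, which must be nonnegative by optimality of $T$; combined with a coordinate-wise comparison of $|s'_i-s^{**}_i|$ to $\max_{t\in P(s)}|t_i-s^{**}_i|$ (making use of $|s^{**}_i|<1$ from Claim~\ref{cla: Steiner point magnitude less than 1}), this yields the required bound. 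After substitution, the clean-up procedures of Lemmas~\ref{lem: structural props}, \ref{lem: term leaf node}, and \ref{lem: coordinates of 1/2} are reapplied to restore \ref{cond: max decrease}--\ref{cond: completeness structure}.

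The main obstacle is the regime $|P(s)|=2$, where the pairwise lower bound can already be saturated at $s$ and a purely local coordinate-wise modification may strictly increase cost. The resolution must be non-local: extremal configurations in which $s$ lies very close to one terminal and far from another allow the exchange that deletes $s$ and joins its terminal neighbors directly to $s^{**}$ to strictly decrease cost, contradicting optimality of $T$. The exchange argument also requires care when $\|t-s^{**}\|_\infty>1$ for some $t\in P(s)$, since a direct reconnection would violate Lemma~\ref{lem: max dist}; in such cases an intermediate Steiner point must be introduced to preserve the edge-length constraint, complicating the cost bookkeeping but not altering the basic strategy.
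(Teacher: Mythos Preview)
Your one-shot replacement of $s$ by $s'$ while holding the neighboring Steiner point $s^{**}$ fixed does not go through, and this is a genuine gap rather than a matter of detail. The two ingredients you invoke—the exchange inequality $\sum_t\|t-s^{**}\|_\infty \geq \sum_t\|s-t\|_\infty+\|s-s^{**}\|_\infty$ and the coordinate-wise bound $|s'_i-s^{**}_i|\leq\max_{t}|t_i-s^{**}_i|$—combine to give only $\|s'-s^{**}\|_\infty\leq\max_t\|t-s^{**}\|_\infty$, so your key inequality would require $\max_t\|t-s^{**}\|_\infty+|P(s)|/2\leq\sum_t\|t-s^{**}\|_\infty$. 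For $|P(s)|=2$ this forces the smaller of the two values $\|t-s^{**}\|_\infty$ to be at least $1$, which nothing in (P1)--(P4) guarantees. Concretely, with $t^{(1)}=(1,0)$, $t^{(2)}=(0,1)$, $s=(0.1,0.9)$, and $s^{**}=(-0.2,0.6)$, properties \ref{cond: max decrease} and \ref{cond: short edges} hold on all three edges at $s$; one has $\sum_t\|s-t\|_\infty=1$, $\|s-s^{**}\|_\infty=0.3$, but $\|s'-s^{**}\|_\infty=0.7$, so your displayed inequality reads $0.4\leq 0$. The exchange you invoke has cost change $1.2+0.4-(0.9+0.1+0.3)=+0.3$, so it does not ``strictly decrease cost'' and your proposed non-local rescue for $|P(s)|=2$ does not apply here either, even though $s$ is extremely close to $t^{(2)}$. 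In fact \emph{no} replacement of $s$ alone by a point satisfying \ref{cond: Steiner leaf structure} helps in this configuration: any such point has first coordinate $1/2$, hence distance at least $0.7$ to $s^{**}$, and local cost at least $1.7>1.3$.

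The paper's proof avoids exactly this obstruction by not freezing $s^{**}$. It shrinks the large coordinates of $s$ incrementally and simultaneously \emph{propagates} the reduction through $s'$ (your $s^{**}$) and beyond, using the same mechanism as in the proof of \ref{cond: max decrease}; this keeps the edge $(s,s^{**})$ from lengthening, so the cost balance has to be argued only among the edges $\{(s,t):t\in P(s)\}$, where the ``at least two terminals versus one'' count can be exploited. The propagation through the neighboring Steiner point is the missing idea in your plan; without it, a purely local substitution at $s$ cannot succeed in general.
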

\begin{proof}
Let $T(\tilde{P} \cup X, E)$ be a Steiner tree satisfying \ref{cond: max decrease}, \ref{cond: short edges}, \ref{cond: term leaf node}, and \ref{cond: completeness structure}. Suppose that  $s$ is a Steiner leaf in $T$ not satisfying \ref{cond: Steiner leaf structure}. Using the triangle inequality, we may assume that $s$ has degree at least $3$ (otherwise, the Steiner point may be replaced by an edge directly between its two neighbors without increasing the cost of the tree). Such an application of the triangle inequality is consistent with \ref{cond: max decrease}, \ref{cond: term leaf node}, and \ref{cond: completeness structure}. The argument in Lemma \ref{lem: structural props} to achieve \ref{cond: short edges} can be reapplied if this results in an edge of length $1$. Then, the fact that $s$ is a Steiner leaf implies that $|P(s)| \geq 2$. If $s$ is only adjacent to terminals, we may apply \ref{cond: completeness structure}.

Otherwise, $s$ must be adjacent to one Steiner point $s'$ and at least two nonzero terminals. Suppose that $\norm{s}_{\infty} = 1/2 + \varepsilon$ for some $\varepsilon > 0$. There are two cases to consider. 

First, suppose that for all $i$ such that $|s_i| = 1/2 + \varepsilon$, there exists $t \in P(s)$ such that $\sgn(t_i) = \sgn(s_i)$. Let $\gamma = \max(\{|s_i| \,:\, \forall t \in P(s), t_i = 0\} \cup \{ 1/2\}).$ That is, $\gamma$ is a maximum magnitude of a coordinate of $s$ not corresponding to a coordinate of a terminal in $P(s)$ (or $1/2$ if no such coordinates are greater than $1/2$). Now, for all coordinates $i$ such that $t_i = 1/2 + \varepsilon$, set $t_i = \gamma$. Likewise, for all coordinates $j$ such that $t_j = - 1/2 - \varepsilon$, set $t_j = -\gamma$. This increases $\| s - t\|_{\infty}$ for at most one $t \in N(s)$ by at most $1/2 + \varepsilon - \gamma$, but for all other $t' \in N(s)$, $\| s - t'\|_{\infty}$ decreases by $1/2 + \varepsilon - \gamma$. 
Hence, this can only increase the cost of tree if the tree if $\|s - s'\|_{\infty}$ increases. But, we may propagate this reduction in magnitude of coordinates through $s'$ as in Lemma \ref{lem: structural props} in order to reestablish \ref{cond: max decrease} and ensure that this change does not increase $\|s - s'\|_{\infty}$, thereby not increasing the cost of the tree. Note that this relies on our assumption that, for all $i$ such that $|s_i| = 1/2 + \varepsilon$, there exists $t \in P(s)$ such that $\sgn(t_i) = \sgn(s_i)$. Afterward, we may apply the modifications outlined in Lemma \ref{lem: structural props} to re-establish \ref{cond: max decrease} and \ref{cond: short edges} since they will not increase the magnitude of coordinates of Steiner leaves.

Now, if we still have  $\|s\|_{\infty} = 1/2 + \varepsilon' > 1/2$, there exists some coordinate $i$ such that $|s_i| = \|s\|_{\infty}$ and for all $t \in P(s)$, $t_i = 0$. Now, for coordinates $j$ such that $s_j > 1/2$, set $s_j = 1/2$ and, likewise, if $s_j < -1/2,$ set $s_j = -1/2$. Then, for all $t \in P(s)$, $\|s - t\|_{\infty}$ decreases by $\varepsilon'$ and $\|s - s'\|_{\infty}$ increases by at most $\varepsilon'$. Since $|P(s)| \geq 2$, this actually decreases the cost of the tree, contradicting minimality. Hence, this case cannot occur.

It remains to show that, for each coordinate $i$ such that there exists $t \in P(s)$ with $t_i \neq 0$, $|s_i| = 1/2$. Call such coordinates \textit{corresponding} coordinates of $s$ (with the terminal $t \in P(s)$ corresponding to such a coordinate referred to as a \textit{corresponding terminal}). Let $i$ be a corresponding coordinate of $s$ such that $|s_i|$ is minimal; say $|s_i| = 1/2 - \varepsilon$. By \ref{cond: short edges}, $\sgn(s_i) = \sgn(t_i)$ for corresponding terminal $t.$ By the above, $\|s\|_{\infty} \leq 1/2$. Now, increasing the magnitude of all corresponding coordinates of $s$ (while retaining their sign) will decrease the distance to $t$ by $\varepsilon$ and increase the distance from $s$ to $s'$ by at most $\varepsilon$. Hence, repeating this operation does not increase the cost of the Steiner tree while ensuring Property \ref{cond: Steiner leaf structure}.  If \ref{cond: max decrease} or \ref{cond: short edges} are ever violated, they be restored by the processes described in Lemma \ref{lem: structural props} without violating \ref{cond: Steiner leaf structure}. Properties \ref{cond: term leaf node} and \ref{cond: completeness structure} are unaffected by the modifications outlined above.
\end{proof}

The language introduced in Lemma \ref{lem: Steiner leaf struct} will be useful in the following lemmas. For $s$ a Steiner point, call $C(s) = \{i \,:\, t \in P(s), t_i \neq 0\}$ the set of corresponding coordinates of $s$. We observe an implication of Lemma \ref{lem: Steiner leaf struct}.

\begin{lemma} \label{lem: Steiner leaf independent neigbors}
There exists a optimal Steiner  $T(\tilde{P} \cup X, E)$ satisfying \ref{cond: max decrease}, \ref{cond: short edges}, \ref{cond: term leaf node}, \ref{cond: completeness structure}, \ref{cond: Steiner leaf structure}, and the following additional property:
\begin{enumerate}[label=(P\arabic*)]
\setcounter{enumi}{5}
    \item \textit{Steiner leaf independence:}
    Let $s$ be a Steiner leaf. Then, if $s''$ is a Steiner point adjacent to $s$, then $C(s) \cap C(s'') = \emptyset$. Moreover, for all other Steiner leaves $s'$ adjacent to $s''$, $C(s) \cap C(s') = \emptyset$. \label{cond: Steiner leaf independence}
\end{enumerate}
\end{lemma}
\begin{proof}
Let $T(\tilde{P} \cup X, E)$ be a Steiner tree satisfying \ref{cond: max decrease}, \ref{cond: short edges}, \ref{cond: term leaf node},  \ref{cond: completeness structure}, and \ref{cond: Steiner leaf structure}. Let $s$ be a Steiner leaf and $s''$ be a Steiner point neighboring $s$. First, suppose that $i \in C(s) \cap C(s'')$. Then, $s_i = 1/2$ (using \ref{cond: Steiner leaf structure}). Note that $s''_i < 0$ by \ref{cond: short edges}. Then, dropping the edge $(s, s'')$ and adding the edge $(s, \mathbf{0})$ maintains the connectivity of the tree while decreasing its cost, contradicting optimality of $T$.

Now, let $s'$ be a second Steiner leaf neighboring $s''$ and let $i \in C(s) \cap C(s')$. Without loss of generality assume $s_i = 1/2$ and $s_i' = -1/2$. Then, clearly either $\|s'' - s\|_{\infty} \geq 1/2$ or $\|s'' - s'\|_{\infty} \geq 1/2$. Assume that $\|s'' - s\|_{\infty} \geq 1/2$. Removing edge $(s, s'')$ and adding edge $(s, \mathbf{0})$ then removes this pair of Steiner points violating \ref{cond: Steiner leaf independence} while keeping the tree connected and not increasing its cost.  The fact that the cost does not increase follows from \ref{cond: Steiner leaf structure}. This modification may affect \ref{cond: max decrease}, \ref{cond: completeness structure}, \ref{cond: Steiner leaf structure}, but following the algorithms described in Lemmas \ref{lem: structural props}, and \ref{lem: coordinates of 1/2} and \ref{lem: Steiner leaf struct} can re-establish these properties without increasing the number of pairs of Steiner leaves violating \ref{cond: Steiner leaf independence}. Repeating this process then yields the result.
\end{proof}

We are now ready to prove the final and most important structural property of optimal Steiner trees: there exist optimal Steiner trees with no edges between Steiner points. 
\begin{lemma} 
There exists a optimal Steiner  $T(\tilde{P} \cup X, E)$ satisfying \ref{cond: max decrease}, \ref{cond: short edges}, \ref{cond: term leaf node}, \ref{cond: completeness structure}, \ref{cond: Steiner leaf structure}, \ref{cond: Steiner leaf independence}, and the following additional property:
\begin{enumerate}[label=(P\arabic*)]
\setcounter{enumi}{6}
    \item \textit{Steiner adjacency:} There are no edges between Steiner points in $T.$ \label{cond: no Steiner edges}
\end{enumerate}
\end{lemma}
\begin{proof}
Let $T(\tilde{P} \cup X, E)$ be an optimal Steiner tree satisfying \ref{cond: max decrease}, \ref{cond: short edges}, \ref{cond: term leaf node}, \ref{cond: completeness structure}, \ref{cond: Steiner leaf structure}, and \ref{cond: Steiner leaf independence}. To prove this result, we begin with an important claim.
\begin{claim}
We may assume that, for all non-leaf $s \in X$, $s$ is adjacent (in $T$) to $\mathbf{0}$ and at least one other non-leaf Steiner point or at least two non-leaf Steiner points.
\end{claim}
\begin{subproof}
Suppose some non-leaf Steiner point $s$ in $T$ does not have this property. Since $s$ is not a Steiner leaf, it must be adjacent to at least two points that are not nonzero terminals. Hence, it is adjacent to at least one Steiner leaf $s'$. By \ref{cond: Steiner leaf independence}, we have $C(s) \cap C(s') = \emptyset$ and, for any two Steiner leaves $s'$ and $s''$ adjacent to $s$,  $s'$ and $s''$, $C(s') \cap C(s'') = \emptyset$. 

Note that by \ref{cond: max decrease}, all coordinates in $C(s')$ behave the same in the following sense. Upon rooting the tree at $s'$ and directing all edges away from $s'$, these coordinates decrease maximally in magnitude until reaching $0$. So, until they reach $0$, they all have the same magnitude. Now, fix a Steiner leaf $s'$ adjacent to $s$. Remove all edges from $s$ to its neighboring nonzero terminals, add edges from those nonzero terminals to $s'$, and treat each coordinate previously in $C(s)$ like the coordinates previously in $C(s')$ (setting them to have the same magnitudes and appropriate signs), propagating through the tree until the coordinates reach $0$. This does not increase the cost of the tree since $s'$ is a Steiner leaf, the tree satisfies property \ref{cond: max decrease}, and we had $C(s) \cap C(s') = \emptyset$.

Do the same process outlined above for all other Steiner leaves adjacent to $s$ (replace their edges to nonzero terminals with edges from those terminals to $s'$ and have their corresponding coordinates emulate the coordinates of other nonzero terminals adjacent to $s'$). Also drop those Steiner leaf Steiner points.

After this process, $s$ has degree $2$, so, by the triangle inequality, it can be removed and an edge can be added directly between its two neighbors. We can repeat this process until $s'$ is either adjacent to $\mathbf{0}$ or a non-leaf Steiner point satisfying the desired property. At some points, we may need to restore \ref{cond: Steiner leaf independence}, but we can do so by following the algorithm described in Lemma \ref{lem: Steiner leaf independent neigbors}.

Repeating for all non-leaf Steiner points without this property then ensures the property holds in the resultant tree. After each step, we can restore properties \ref{cond: max decrease} and \ref{cond: short edges} by applying the algorithm described in Lemma \ref{lem: structural props}. The other properties are unaffected by these modifications.
\end{subproof}
Given the claim, consider removing non-root terminals and Steiner leaf nodes from $T$. By the claim, all remaining Steiner points are degree at least two. But then, by \ref{cond: term leaf node}, the only possible leaf node in the resultant graph is $\mathbf{0}$. The only tree with only one leaf node is a singleton vertex, implying all Steiner points in $T$ are Steiner leaves. Hence, we have \ref{cond: no Steiner edges}.
\end{proof}

Lemma \ref{lem: coordinates of 1/2} in fact show that there exist optimal Steiner trees with exactly the structure of the completeness case. This is formalized in Lemma \ref{lem: soundness final lemma}.
\begin{lemma} \label{lem: soundness final lemma}
There exists an optimal Steiner tree in which the length of each edge is $1/2$ and the number of Steiner points is exactly $\chi(G)$.
\end{lemma}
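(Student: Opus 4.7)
The plan is to use the seven structural properties \ref{cond: max decrease}--\ref{cond: no Steiner edges} to pin down an optimal Steiner tree essentially up to the completeness construction of Section~\ref{sec: l_inf completeness}, and then balance a lower bound on the cost (via the induced proper coloring) against the upper bound (via the completeness construction) to count the Steiner points.

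First, I would take $T(\tilde{P}\cup X, E)$ to be an optimal Steiner tree satisfying all of \ref{cond: max decrease}--\ref{cond: no Steiner edges}, which exists by Lemma~\ref{lem: Steiner leaf independent neigbors}. By \ref{cond: no Steiner edges} no two Steiner points are adjacent, and by \ref{cond: term leaf node} every non-root terminal is a leaf; since $T$ is connected this forces every Steiner point to be adjacent to $\mathbf{0}$. We may assume by the triangle inequality that every Steiner point has degree at least $2$, so $P(s)\neq\emptyset$ for each Steiner point $s$. Thus \ref{cond: completeness structure} applies and gives $s = \sigma(\pi_T(\Gamma^{-1}(P(s))))$, which in particular implies $\|s\|_\infty = 1/2$, so the edge $(s,\mathbf{0})$ has length $1/2$, and the distance computation already carried out in Section~\ref{sec: l_inf completeness} gives $\|s-t\|_\infty = 1/2$ for each $t\in P(s)$. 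Hence every edge of $T$ incident to a Steiner point has length exactly $1/2$.

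The only remaining edges are those of the form $(t,\mathbf{0})$ for $t\in P$, each of length $1$. For each such edge I would introduce a fresh Steiner point $s_t := t/2$ and replace $(t,\mathbf{0})$ by the two edges $(t,s_t)$ and $(s_t,\mathbf{0})$, each of length $1/2$. This preserves the total cost, keeps the object a Steiner tree, and extends $\pi_T$ by giving each $s_t$ the singleton color class $\{\Gamma^{-1}(t)\}$, which is trivially an independent set. By Corollary~\ref{cor: independence struct} together with \ref{cond: term leaf node}, the extended $\pi_T$ is a proper coloring of $G$.

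Let $k$ denote the number of Steiner points after this transformation. The tree now has $n+k+1$ vertices and $n+k$ edges, all of length $1/2$, so its cost is $(n+k)/2$. On the one hand, the completeness construction of Section~\ref{sec: l_inf completeness} gives a Steiner tree of $\tilde{P}$ of cost $(n+\chi(G))/2$, so by optimality of $T$ we get $k \le \chi(G)$. On the other hand, the extended $\pi_T$ is a proper $k$-coloring of $G$, so $k \ge \chi(G)$. Hence $k = \chi(G)$, every edge has length $1/2$, and the lemma follows. The only conceivable obstacle is verifying that the insertion step is compatible with \ref{cond: max decrease}--\ref{cond: no Steiner edges} and that the singleton classes really give a proper coloring, but both are immediate from the form $s_t = t/2$ and from the fact that $\Gamma^{-1}(t)$ is a single vertex. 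All the technical work has already been absorbed into Lemmas~\ref{lem: structural props}--\ref{lem: Steiner leaf independent neigbors}.
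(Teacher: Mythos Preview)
Your proposal is correct and follows essentially the same route as the paper's proof: start from a tree satisfying \ref{cond: max decrease}--\ref{cond: no Steiner edges}, observe that every Steiner point is adjacent to $\mathbf{0}$ and (via \ref{cond: completeness structure}) has all incident edges of length $1/2$, split each remaining length-$1$ edge $(t,\mathbf{0})$ by inserting $t/2$, and then sandwich the Steiner-point count between $\chi(G)$ using the induced proper coloring (Corollary~\ref{cor: independence struct}) and the completeness upper bound. The only superfluous remark is your worry about compatibility of the insertion step with \ref{cond: max decrease}--\ref{cond: no Steiner edges}; the lemma does not require the final tree to retain those properties, so no verification is needed there.
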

\begin{proof}
Let $T(\tilde{P} \cup X, E)$ be an optimal Steiner tree satisfying properties \ref{cond: max decrease}, \ref{cond: short edges}, \ref{cond: term leaf node}, \ref{cond: completeness structure}, \ref{cond: Steiner leaf structure}, \ref{cond: Steiner leaf independence}, and \ref{cond: no Steiner edges}. 

By \ref{cond: term leaf node}, \ref{cond: no Steiner edges}, \ref{cond: short edges}, and \ref{cond: completeness structure}, each terminal is either connected to a Steiner point by an edge of length $1/2$ or connected to $\mathbf{0}$ by an edge of length $1$. Each edge from a Steiner point to $\mathbf{0}$ is length $1/2$. 

 For each edge of length $1$ from a terminal, we can add an intermediate Steiner point halfway along that edge (corresponding to the terminal) such that the edge is split into two edges of length $1/2$. Then, the cost is fixed and each edge is of length $1/2$.

Now, by Corollary \ref{cor: independence struct}, for each Steiner point $s$, $P(s)$ corresponds to an independent set in $G$. Since each edge in the graph is length $1/2$, the cost of the tree is $(n + |X|)/2$, where $X$ is the set of Steiner points in the tree. Since each nonzero terminal is adjacent to a Steiner point, by Corollary \ref{cor: independence struct}, $T$ must have at least $\chi(G)$ Steiner points. But, the construction in the completeness case shows a tree with $\chi(G)$ Steiner points and this structure is possible (adding the intermediate Steiner points for edges directly to $\mathbf{0}$), implying the result.
\end{proof}

This completes the proof of the soundness case of Theorem \ref{thm:introellinf}.

\section{The Metric Steiner Problem on Graphs}\label{sec:metricspace}
In this section, we show a gap preserving reduction from \setpT to \dst in general metrics. In Section \ref{subsec: metric from sets} we describe how we associate a metric space to a set system. The resultant metric spaces will be called \textit{set system spaces}. When the distances in these metric spaces satisfy an additional collection of constraints (see Definition \ref{defn: Steiner embeddability}), these set system spaces will be sufficient to yield a gap preserving reduction from \setpT to \dst. The details of this reduction are provided in Subsection \ref{subsec: gap preserving red metric st}.

\subsection{From Set Systems to Metric Spaces} \label{subsec: metric from sets}
In this subsection, we describe how to associate a  metric space to a set system\footnote{For the purposes of our application to \dst, we specify this association to metric spaces only for  set systems where each set in the collection is of size exactly 3.}. We also introduce more restricted metric spaces that facilitate gap preserving reductions from \setpT to \dst.

\begin{definition}[Set System Space] \label{defn: set system space}
Let $([n], \mathcal{S})$ be a set system such that for all $S \in \mathcal{S}$, we have $|S| = 3$, and $|\mathcal{S}|:= m$. Let $\srd, \trd, \beta_{\text{in}}, \beta_{\text{out}}, \gamma_0, \gamma_1, \gamma_2, \td \in \R^+$. 
Then, we define the $(\srd, \trd, \beta_{\text{in}}, \beta_{\text{out}}, \gamma_0, \gamma_1, \gamma_2, \td)$-\textit{set system space} corresponding to $([n], \mathcal{S})$ to be the space  $(\tilde{P} \cup X, \Delta)$ containing $n + m + 1$ points and a distance function $\Delta$. We define $P := \{t_i \,:\, i \in [n]\}$, $\tilde{P} = P \cup \{r\}$, and $X := \{s_j \,:\, S_j \in \mathcal{S}\}$. We refer to the points in $P$ as \textit{universe elements}, $r$ as the \textit{root}, and the points in $X$ as \textit{facilities}. For convenience, for $s_j \in X$, we define $\Gamma(s_j) = S_j$.

Interpreting the points of the space as vertices in a graph, we define the weight function $\Delta$ by defining every pairwise edge weight as follows (the weight of the edge between each point and itself is $0$).
\begin{itemize}
    \item $\Delta(r, s) = \srd$ for all $s \in X$.
    \item $\Delta(r,t) = \trd $ for all $t \in P$.
    \item $\Delta(t_i, s) = \beta_{\text{in}}$ for all $i \in [n]$ such that $i \in \Gamma(s)$.
    \item $\Delta (t_i, s) = \beta_{\text{out}} $ for all $i \in [n]$ such that $i \not\in \Gamma(s)$.
    \item $\Delta(s, s') = \gamma_i$ for all $s, s' \in X$ such that $|\Gamma(s) \cap \Gamma(s')| = i$, for $i = 0,1,$ and $2$.
    \item $\Delta(t,t') = \td$ for all distinct $t,t' \in P$.
\end{itemize}
\end{definition}

To elucidate our choice of notation, we also provide intuitive definitions of each parameter. 
\begin{itemize}
\item $\srd$ is the distance from the root terminal to a Steiner point in $X$.
    \item $\trd$ is the distance from the root terminal to a terminal in $P$.
    \item $\beta_{\text{in}}$ is the distance between a Steiner point $s$ and non-root terminal $t_i$ such that the set corresponding to $s$ contains the universe element corresponding to $t_i$. 
    \item $\beta_{\text{out}}$ is the distance between a Steiner point $s$ and non-root terminal $t_i$ such that the set corresponding to $s$ \textit{does not} contain the universe element corresponding to $t_i$.
    \item $\gamma_i$ is the distance between Steiner points $s$ and $s'$ such that their corresponding sets have an intersection of size $i$.
    \item $\tau$ is the distance between non-root terminals.
\end{itemize}

Note that, although we say ``distance'' in the above intuitive descriptions, the weight function $\Delta$ may not define a metric, so set system spaces may not be metric spaces. The following definition resolves this issue.

\begin{definition}[Metric Compatibility] \label{defn: metric compat}
 Suppose that we have $\srd, \trd, \beta_{\text{in}}, \beta_{\text{out}}, \gamma_0, \gamma_1, \gamma_2, \td \in \R^+$, such that all of the following inequalities hold.
\begin{enumerate}
    \item $\srd \leq \min(\trd + \beta_{\text{in}},\trd + \beta_{\text{out}})$.
    \item $\trd \leq \min(\srd + \beta_{\text{in}}, \srd + \beta_{\text{out}}).$
    \item $\beta_{\text{in}} \leq  \min(\srd + \trd,$ $\beta_{\text{out}} + \td, \gamma_i + \beta_{\text{out}}$ \,:\, $i \in \{0,1,2\})$.
    \item $\beta_{\text{out}} \leq \min(\srd + \trd, \beta_{\text{in}} + \td, \gamma_i + \beta_{\text{in}} \,:\, i \in \{0,1,2\})$.
    \item $\gamma_i \leq \min(2 \srd, \gamma_j + \gamma_k, 2\beta_{\text{in}}, 2\beta_{\text{out}}$ \,:\, $i,j,k \in \{0,1,2\})$.
    \item $\td \leq \min(2 \trd$, $2\beta_{\text{in}},$ $2\beta_{\text{out}})$.
\end{enumerate}
In this case, we call the tuple $(\srd, \trd, \beta_{\text{in}}, \beta_{\text{out}}, \gamma_0, \gamma_1, \gamma_2, \td)$ \textit{metric compatible}. Note that the constraints $\gamma_i \leq \beta_{\text{in}} + \beta_{\text{out}}$ and $\td \leq \beta_{\text{in}} + \beta_{\text{out}}$ are implied by $\gamma_i \leq 2\beta_{\text{in}},  2\beta_{\text{out}}$ and $\td \leq 2\beta_{\text{in}}, 2\beta_{\text{out}}$. 
\end{definition}

A general diagram of an $(\srd, \trd, \beta_{\text{in}}, \beta_{\text{out}}, \gamma_0, \gamma_1, \gamma_2, \td)$-set system space is given in Figure \ref{fig: set system space}. Note that not all pairwise distances in the diagram are labeled.

\begin{figure}[!ht]
    \centering

	\includegraphics[width=0.7\textwidth]{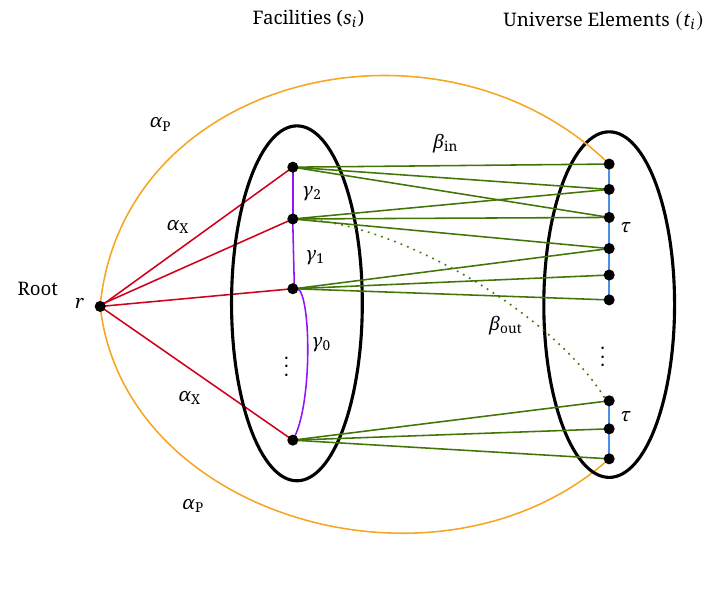}

    \caption{A general $(\srd, \trd, \beta_{\text{in}}, \beta_{\text{out}}, \gamma_0, \gamma_1, \gamma_2, \td)$-set system space with edges between different groups of nodes color-coded}
    \label{fig: set system space}
\end{figure}

\begin{proposition}
If $(\srd, \trd, \beta_{\text{in}}, \beta_{\text{out}}, \gamma_0, \gamma_1, \gamma_2, \td)$ is metric compatible, then any \\$(\srd, \trd, \beta_{\text{in}}, \beta_{\text{out}}, \gamma_0, \gamma_1, \gamma_2, \td)$-set system space is a metric space.
\end{proposition}
\begin{proof}
Inequalities 1–6 in Definition \ref{defn: metric compat} are precisely those necessary for the triangle inequality to hold with distance function $\Delta$. The other metric constraints follow trivially from Definition \ref{defn: set system space}.
\end{proof}

Hence, if $(\srd, \trd, \beta_{\text{in}}, \beta_{\text{out}}, \gamma_0, \gamma_1, \gamma_2, \td)$ is metric compatible, associated set system space weight functions $\Delta$ are distance functions. Now we introduce the notion of \textit{Steiner embeddability}, a further restriction on metric compatible tuples. Tuples with these restrictions induce a notion of distance on metric spaces related to set systems that facilitates the construction of a gap preserving reduction from \setpT to \dst.

\begin{definition}[Steiner embeddability and Steiner spaces] \label{defn: Steiner embeddability}
We call a metric compatible tuple \\ $(\srd, \trd, \beta_{\text{in}}, \beta_{\text{out}}, \gamma_0, \gamma_1, \gamma_2, \td)$ \textit{Steiner embeddable} if 
\begin{enumerate}[label=(P\arabic*)]
\item \textit{Steiner proximity:} $\srd \leq 3\gamma_2/2,$ $\trd,$ $\beta_{\text{in}},$ $\beta_{\text{out}},$ $\gamma_0$, $\gamma_1,$ $\td$. \label{cond: sr prox}
    \item \textit{Root proximity:} $\trd \leq \beta_{\text{out}}.$  \label{cond: res s adj}
    \item \textit{Steiner utility:} $\beta_{\text{in}} + \frac{\srd}{3} < \min(\trd, \td)$.  \label{cond: s util}
    \item \textit{Steiner diameter:} $\min(\trd, \td) \leq \beta_{\text{in}} + \gamma_2$.  \label{cond: ss' non adj}
\end{enumerate}

We call any $(\srd, \trd, \beta_{\text{in}}, \beta_{\text{out}}, \gamma_0, \gamma_1, \gamma_2, \td)$-set system space with $(\srd, \trd, \beta_{\text{in}}, \beta_{\text{out}}, \gamma_0, \gamma_1, \gamma_2, \td)$ Steiner embeddable a \textit{Steiner space}.  The constraints in Definition \ref{defn: metric compat} ensure that the parameters can be realized as distances in a metric space, and the constraints in Definition \ref{defn: Steiner embeddability} ensure sufficient properties for proving hardness of approximation of \dst via a reduction from $(\varepsilon, \delta)$-\setpT.
\end{definition}

\subsection{Hardness of \texorpdfstring{\dst}{DST} from Set Packing} \label{subsec: gap preserving red metric st}

 In this subsection we describe a general reduction from $(\varepsilon,\delta)$-\setpT to \dst using the language of Steiner spaces.

\begin{theorem} \label{thm: graph hardness}
Let $(\srd,$ $\trd,$ $\beta_{\text{in}},$ $\beta_{\text{out}},$ $\gamma_0,$ $\gamma_1,$ $\gamma_2,\td)$ be a Steiner embeddable tuple.
For an instance  $([n],\mathcal{S})$ of $(\varepsilon,\delta)$-\setpT where $|\mathcal{S}|=m$, let $(\tilde{P}\cup X,\Delta)$ be the Steiner space guaranteed by Definitions~\ref{defn: metric compat} and \ref{defn: Steiner embeddability}. Then,  we have the following guarantees on an instance $(\tilde{P} \cup X)$ of \dst over $(\tilde{P}\cup X,\Delta)$:
\begin{description}
        \item[Completeness:] If $([n],\mathcal{S})$ admits a set packing of size $n/3$, then there is a Steiner tree for $\tilde{P}$ of cost $$n(\srd/3+\beta_{\text{in}}).$$
        \item[Soundness:] Every Steiner tree of $\tilde{P}$ must be of cost at least
        $$n(\srd/3+\beta_{\text{in}})\cdot \left(1 - \varepsilon  + \frac{(4\varepsilon  - 2\delta)}{3} \cdot \frac{\min(\trd, \td, \beta_{\text{in}} + \srd/2)}{\srd/3+\beta_{\text{in}}} + \frac{(2\delta  - \varepsilon)}{3} \cdot \frac{\min(\trd, \td)}{\srd/3+\beta_{\text{in}}} \right).$$

        \item[Run Time:] Given $\srd,$ $\trd,$ $\beta_{\text{in}},$ $\beta_{\text{out}},$ $\gamma_0,$ $\gamma_1,$ $\gamma_2,\td$ and $([n],\mathcal{S})$, the above instance of \dst can be constructed in $\poly(n+m)$ time. 
        \end{description}
\end{theorem}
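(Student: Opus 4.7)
Completeness is immediate from the set packing. Given a size-$n/3$ packing $\mathcal{P}\subseteq\mathcal{S}$ partitioning $[n]$, I would include exactly the facilities $\{s_j:S_j\in\mathcal{P}\}$ as Steiner points, join each such $s_j$ to the root $r$ (cost $\srd$), and join each terminal $t_i$ to the unique $s_j$ with $i\in S_j=\Gamma(s_j)$ (cost $\beta_{\text{in}}$). The resulting tree has total cost $(n/3)\srd + n\beta_{\text{in}}=n(\srd/3+\beta_{\text{in}})$. The run-time claim is immediate, since the metric is specified by $O((n+m)^2)$ pairwise distances.

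For soundness, the plan is a structural normalization followed by a combinatorial count. First, I would show that any Steiner tree of $\tilde{P}$ can be transformed, without increasing its cost, into a \emph{canonical} tree $T$ with the following properties: (a) every non-root terminal is a leaf; (b) every used facility $s$ is adjacent to $r$; (c) every used facility is adjacent only to matching terminals $\{t_i : i\in\Gamma(s)\}$; (d) no two used facilities are adjacent; and (e) each used facility is \emph{maximally attached}, meaning that whenever $s$ is used and $i\in\Gamma(s)$, the terminal $t_i$ is adjacent to some used facility. Each modification is justified via the Steiner-embeddability conditions: (P1) makes $\srd$ the smallest relevant distance, so rerouting a Steiner point through $r$ is free, yielding (b) and (d); (P2) makes non-matching adjacencies strictly inferior to routing through $r$, yielding (c); (P3), together with metric compatibility, makes facility-mediated connection of matching terminals cheaper than any alternative, yielding (a) and (e); and (P4) bounds Steiner-to-Steiner chains so that no resistance to rerouting can arise. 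I expect this structural lemma to be the main technical obstacle, since each of (a)--(e) requires careful local-swap arguments and case analysis to verify strict cost non-increase.

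Having normalized, classify the $k$ used facilities in $T$ by the number $b\in\{1,2,3\}$ of adjacent matching terminals; let $n_b$ be the count of each type, so that $k=n_1+n_2+n_3$ and the number of terminals adjacent to some facility is $c=n_1+2n_2+3n_3$. Each of the $n-c$ remaining terminals is a leaf connected either to $r$ (cost $\trd$) or to another terminal (cost $\td$), hence
\[
\cost(T)\;\geq\;n_3(\srd+3\beta_{\text{in}})+n_2(\srd+2\beta_{\text{in}})+n_1(\srd+\beta_{\text{in}})+(n-c)\min(\trd,\td).
\]
By (e), the $n_3$ full facilities have pairwise disjoint $\Gamma$-sets and thus form a set packing covering $3n_3$ elements, so $(\varepsilon,\delta)$-\setpT soundness gives $3n_3 \leq (1-\varepsilon)n$. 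Moreover, by (e), if $c=n$ then the family $\{S_j : s_j\text{ is used}\}\subseteq\mathcal{S}$ set-covers $[n]$, so $n_1+n_2+n_3\geq(1+\delta)(n/3)$. Minimizing the displayed cost expression over $(n_1, n_2, n_3)$ subject to these two inequalities, and using the metric-compatibility inequality $\srd+\beta_{\text{in}}\geq\min(\trd,\td)$ (which follows from $\trd\leq\srd+\beta_{\text{in}}$ in Definition~\ref{defn: metric compat}), produces the stated three-term lower bound: the weight $(1-\varepsilon)$ on $A:=\srd/3+\beta_{\text{in}}$ comes from the packing, the weight $(4\varepsilon-2\delta)/3$ on $B:=\min(\trd,\td,\srd/2+\beta_{\text{in}})$ comes from $2$-facilities serving the ``overflow'' elements (or, if cheaper, the direct option), and the weight $(2\delta-\varepsilon)/3$ on $C:=\min(\trd,\td)$ comes from the $1$-facilities (or direct connections) forced by the cover constraint. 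The bookkeeping relies on the identity $(1-\varepsilon)+(4\varepsilon-2\delta)/3+(2\delta-\varepsilon)/3=1$, which ensures the coefficients sum to $n$.
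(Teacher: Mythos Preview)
Your outline follows the paper's strategy exactly (normalize to a canonical tree, then optimize), and your completeness argument is identical to theirs. However, there is a genuine gap in the soundness step.

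The problem is your set-cover constraint: you only invoke it when $c=n$. But the cost-minimizing canonical tree need not have $c=n$; in fact, when $\srd/2+\beta_{\text{in}}<\min(\trd,\td)$ and $\delta>\varepsilon/2$, the minimum of your cost expression over just the packing constraint $3n_3\le(1-\varepsilon)n$ (with $n_1=0$) is approached as $c\to n$ and equals $(1-\varepsilon)n\cdot(\srd/3+\beta_{\text{in}})+\varepsilon n\cdot(\srd/2+\beta_{\text{in}})$, which is strictly below the theorem's bound. So without a cover constraint in the $c<n$ regime you cannot reach the stated lower bound. The paper fixes this by constructing a set cover from \emph{any} canonical tree: take the $\Gamma$-sets of all used facilities and add one arbitrary set from $\mathcal{S}$ for each terminal not adjacent to any facility. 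This yields the unconditional inequality $(n_2+n_3)+(n-c)\ge(1+\delta)n/3$ (equivalently $x_1+x_2/2+x_3/3\ge(1+\delta)n/3$ in their terminal-count variables), and with that constraint in hand your final optimization goes through as written.

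A few smaller corrections. Your property (e) is neither needed nor correctly invoked: the pairwise disjointness of the $3$-facilities' $\Gamma$-sets follows already from (a)+(c) (each terminal is a leaf, hence adjacent to at most one facility), and the cover argument above does not use (e). Your (a) is stated too strongly and contradicts your own cost formula (if every non-root terminal is a leaf, none can be ``connected to another terminal''); the paper's canonical form instead allows a terminal adjacent to a Steiner point to carry additional leaf terminals. Finally, eliminating Steiner--Steiner edges (your (d)) is the technical heart of the proof: when $\gamma_2<\srd$, which (P1) permits, the simple reroute-through-$r$ argument fails, and the paper uses (P4) together with a careful case analysis on the number of Steiner neighbors---you should expect this to be the bulk of the work, not a one-liner.
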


\begin{remark}
The completeness case bound comes from Steiner trees of the form shown in Figure \ref{fig: DST completeness graph}. In the soundness case, we prove that the optimal structure of the Steiner tree involves packing as many terminals as possible into a tree structure analogous to the completeness case: groups of three terminals each at distance $\beta_{\text{in}}$ from a Steiner point connected to the root $r$ (we conclude this by Lemma \ref{lem: graph max packing}). This is the $(1-\varepsilon)$ term. Then, whether its better to try and group the remaining terminals into pairs connected to common Steiner points and handle them individually depends precisely on which term minimizes $\min(\trd, \td, \beta_{\text{in}} + \srd/2)$. The first term is the per-terminal cost of connecting the remaining terminals each indvidually to the root, the second term is the per-terminal cost of connecting the remaining terminals each to some terminal already connected (via a Steiner point) to the root, and the final term in the per-terminal cost of connecting paired up terminals to the root via a common Steiner point. If $\beta_{\text{in}} + \srd/2$ minimizes this expression, then it is optimal to pair up as many terminals as possible. However, the number of achievable pairs is limited by the minimum size set cover restriction in the soundness case of \setpT (this is where the $(4\varepsilon - 2\delta)/3$ comes from). Then, the remaining terminals are connected to the root either directly or via edges to already connected terminals.
\end{remark}

Nonetheless, the expression for the inapproximability factor in the soundness case is quite intricate. One may wonder if this theorem implies a stronger inapproximability result than the simple reduction from Set Cover mentioned in Section~\ref{sec:techniques} for \dst in general metrics.  
\begin{corollary} \label{cor: general metric useless?}
Suppose $(\varepsilon, \delta)$-\setpT is \np-hard and there exists a Steiner embeddable tuple $(\srd,$ $\trd,$ $\beta_{\text{in}},$ $\beta_{\text{out}},$ $\gamma_0,$ $\gamma_1,$ $\gamma_2,\td)$ that also satisfies 
\begin{enumerate}
    \item $\srd = \beta_{\text{in}}$,
    \item $\min(\trd, \td, \beta_{\text{in}} + \srd/2) = 3\srd/2,$ and
    \item $\min(\trd, \td) =2\srd$.
\end{enumerate}
Then, it is \np-hard to approximate \dst in general metric spaces within a factor of $1 + \delta/4$.
\end{corollary}
\begin{proof}
The hardness of approximation factor is achieved by dividing the costs in the soundness and completeness cases of Theorem \ref{thm: graph hardness}. Then, we wish to maximize both
\[
\frac{\min(\trd, \td, \beta_{\text{in}} + \srd/2)}{\srd/3+\beta_{\text{in}}} 
\]
and 
\[
\frac{\min(\trd, \td)}{\srd/3+\beta_{\text{in}}}.
\]
The numerator of the first fraction is at most $\beta_{\text{in}} + \srd/2$. The numerator of the second fraction is at most $\trd \leq \srd + \beta_{\text{in}}$ by the triangle inequality. Supposing that both inequalities were in fact tight, our problem would reduce to maximizing  $\frac{\srd/2 + \beta_{\text{in}} }{\srd/3+\beta_{\text{in}}}$ and $\frac{\srd + \beta_{\text{in}}}{\srd/3+\beta_{\text{in}}}$. Both are maximized for minimum $\beta_{\text{in}}$ and, since $\srd \leq \beta_{\text{in}}$ by \ref{cond: sr prox}, the best possible is $\beta_{\text{in}} = \srd$. Assuming there is some tuple $(\srd,$ $\trd,$ $\beta_{\text{in}},$ $\beta_{\text{out}},$ $\gamma_0,$ $\gamma_1,$ $\gamma_2,\td)$ that both satisfies these additional constraints and is Steiner embeddable, we would then have that \dst is \np-hard to approximate within a factor of 
\[
1 - \varepsilon + \frac{(4\varepsilon  - 2\delta)}{3}\cdot\frac{9}{8} + \frac{(2\delta  - \varepsilon)}{3} \cdot \frac{3}{2} = 1 + \delta/4.\qedhere
\]
\end{proof}
In fact, \dst in the $\ell_{\infty}$-metric can satisfy these constraints and achieve this hardness of approximation factor. See Theorem \ref{thm: graph l_inf hardness}.
\begin{remark}
Interestingly, as shown in Section \ref{sec:techniques}, there exists a simple reduction from Set Cover with sets of size $3$ to \dst which yields this same hardness in general metric spaces. Nonetheless, $\varepsilon$, the set packing parameter in $(\varepsilon, \delta)$-\setpT is highly relevant in certain other metric spaces. See, for example Theorem \ref{thm: hardness of lp dst} and Corollary \ref{cor: l_p tight hardness} for applications to $\ell_p$ metric spaces.
\end{remark}

We now prove Theorem \ref{thm: graph hardness}. Given any hard instance of \setpT, using the language of Subsection \ref{subsec: metric from sets}, we can consider the corresponding $(\srd,$ $\trd,$ $\beta_{\text{in}},$ $\beta_{\text{out}},$ $\gamma_0,$ $\gamma_1,$ $\gamma_2,\td)$-set system space $((P \cup \{r\})\cup X, \Delta)$. This space may be interpreted as an instance of \dst with terminals $\tilde{P} = P \cup \{r\}$, candidate Steiner points $X$, and pairwise distances given by $\Delta$ determining edge weights.

 \begin{paragraph}{\textit{Completeness.}}
 In the completeness case, the \setpT instance has a solution that partitions $[n]$ into sets from $\mathcal{S}$. Let $C$ be such a partition. For each $\{i,j,k\} \in C$, connect $t_i$, $t_j$, and $t_k$ to $s = \Gamma^{-1}(\{i,j,k\})$. Then, connect each such $s$ to $r$. This results in a Steiner tree of total cost $n(\srd/3 + \beta_{\text{in}})$. See Figure \ref{fig: DST completeness graph}.
 
 \begin{figure}[!ht]
     \centering
     
	\includegraphics[width=\textwidth]{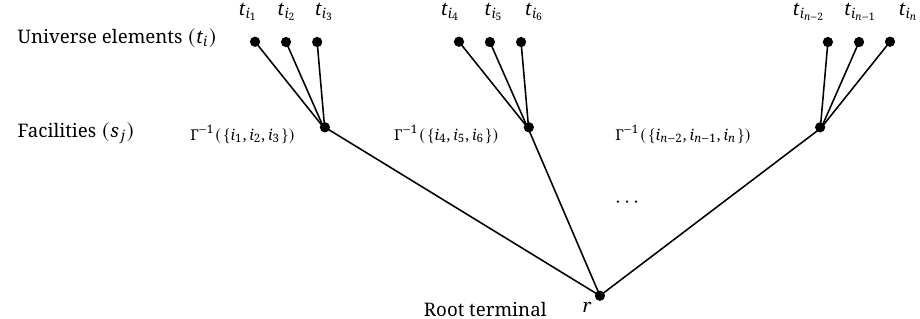}

     \caption{Steiner tree construction in the completeness case of Theorem \ref{thm: graph hardness}.}
     \label{fig: DST completeness graph}
 \end{figure}
 \end{paragraph}

 \begin{paragraph}{\textit{Soundness.}}
 In the soundness case, we prove a series of consistent claims about the adjacencies of universe elements (non-$r$ terminals) and Steiner points. The claims rely intimately on the properties of Steiner embeddable tuples. Ultimately, the claims will show that there is some minimum length Steiner tree mirroring much of the structure of the completeness case. This then yields insight into the size of a set packing in the \setpT instance, thereby lower-bounding the length of a minimum Steiner tree.

We want $r$ to serve as a root for the Steiner tree with its subtrees inducing groups of related Steiner points. To achieve this goal, we need to restrict the adjacencies of both universe elements and Steiner points. In Proposition \ref{prop: graph term st edges} we show that there exists a minimum length Steiner tree such that, for each universe element $t_i$, $t_i$ is adjacent to only Steiner points $s$ such that $i \in \Gamma(s)$. We hope to interpret the Steiner points in the Steiner tree as a choice of sets from $\mathcal{S}$. This property is important since it means the adjacency of a universe element to a Steiner point can be interpreted as the set corresponding to the Steiner point covering the universe element.

Lemma \ref{lem: graph deg condition term} further restricts the adjacencies of universe elements. It shows that we may assume that edges between universe elements cannot connect substantial portions of the tree—one of the endpoints of such edges must be a leaf node. Indeed, the leaf node endpoint will ultimately become a universe element not covered by the set packing induced by the Steiner tree. Notably, Lemma \ref{lem: graph deg condition term} leaves open the option that a universe element could serve as a connecting hub for the tree, avoiding the need for Steiner points and limiting the role of $r$. Lemma \ref{lem: graph term leaf node}, proved using Lemma \ref{lem: graph st existence}, discounts that possibility, enforcing that non-leaf node universe elements must be connected to Steiner points.  However, even with Lemma \ref{lem: graph deg condition term}, terminals could conceivably connect to many Steiner points adjacent to other terminals. Lemma \ref{lem: graph terminals only adj to one st} shows that we may assume that this is not the case. Proposition \ref{prop: graph terminals not adj to r if in triangle}, combined with Lemma \ref{lem: graph term leaf node}, shows that we may assume that universe elements of degree greater than $1$ are not adjacent to $r$. This yields some insight into how groups of terminals connected to common Steiner points are connected to $r$.

The combination of these results severely limits the adjacency of universe elements. They are either adjacent to the root $r$, adjacent to some other universe element connected to a Steiner point, or are connected to some Steiner point whose corresponding set contains their corresponding element of $[n]$. In the former two cases, the universe element is a leaf node in $T$. In the latter case, while the universe element may not be a leaf node, if it is not, its only other neighbors are leaf node terminals. Treating those leaf node terminals as ``uncovered'' universe elements allows us to treat all terminals as leaf nodes.

The most important and most technical part of the proof is Lemma \ref{lem: graph st to st}. Lemma \ref{lem: graph st to st} shows that we may assume that there are no edges between Steiner points in the Steiner tree. In combination with the severely limited adjacencies of universe elements, this shows that we may assume that removing $r$ from the tree (and ignoring terminals not adjacent to Steiner points) divides the tree into groups of universe elements connected to Steiner points. By Proposition \ref{prop: graph term st edges}, these universe elements correspond to elements of $[n]$ contained in the sets corresponding to the Steiner points. So, the Steiner points adjacent to three terminals actually correspond precisely to a packing of sets from $\mathcal{S}$.

We want this induced packing to be large to yield insight into the size of maximum packing in the \setpT instance. Indeed, the induced packing corresponds to a maximum packing, as shown in Lemma \ref{lem: graph max packing}. Lemma \ref{lem: graph st deg at least 3} merely facilitates the proof of Lemma \ref{lem: graph max packing}. The rest of the proof follows from this correspondence.

 Now we proceed with the proof. Let $T = (T_V, T_E)$ be a minimum Steiner tree of $\tilde{P}$.  We wish to show that we may assume a nearly identical structure to the completeness case, particularly in that it is optimal to pack as many terminals into sets as possible. As we will show, these packings arise in the Steiner tree as groups of three terminals $t_i$, $t_j$, $t_k$ connected to a single Steiner point $s = \Gamma^{-1}(\{i,j,k\})$ which is in turn connected to $r$. In each successive claim, we assume that the properties of the previous claims initially hold.

 \begin{proposition} \label{prop: graph term st edges}
 We may assume that $(t_i, s) \in T_E$ for $t_i \in P$ and $s \in X$ only if $i \in \Gamma(s)$.
 \end{proposition}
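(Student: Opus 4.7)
The plan is to argue by a local exchange: among all minimum-cost Steiner trees of $\tilde{P}$, take one minimizing the number of ``bad'' edges, namely edges $(t_i,s) \in T_E$ with $t_i \in P$, $s \in X$, and $i \notin \Gamma(s)$, and derive a contradiction unless there are zero such edges. The key inequalities we will invoke are $\Delta(t_i,s) = \beta_{\text{out}}$ when $i \notin \Gamma(s)$, together with $\trd \leq \beta_{\text{out}}$ from root proximity \ref{cond: res s adj} and $\srd \leq \beta_{\text{out}}$ from Steiner proximity \ref{cond: sr prox}.

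Suppose $T$ contains such a bad edge $e = (t_i,s)$. Removing $e$ splits $T$ into two subtrees $T_1 \ni t_i$ and $T_2 \ni s$; the root terminal $r$ lies in exactly one of them. If $r \in T_2$, form a new tree $T'$ from $T$ by deleting $e$ and inserting the edge $(t_i,r)$ of weight $\trd$; then $T'$ is still a spanning tree of the same vertex set (it reconnects $t_i$ to $T_2$ through $r$), and by root proximity its cost satisfies
\[
\cost_\Delta(T') = \cost_\Delta(T) - \beta_{\text{out}} + \trd \leq \cost_\Delta(T).
\]
If instead $r \in T_1$, form $T'$ by deleting $e$ and inserting the edge $(r,s)$ of weight $\srd$; by Steiner proximity,
\[
\cost_\Delta(T') = \cost_\Delta(T) - \beta_{\text{out}} + \srd \leq \cost_\Delta(T).
\]
In either case $T'$ is a Steiner tree of $\tilde{P}$ with cost no more than that of $T$, so $T'$ is still optimal.

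Finally, the replacement edge added in either case is an edge incident to $r$, hence not a bad edge, and no other edges are altered; so $T'$ has strictly fewer bad edges than $T$. This contradicts the minimality of the bad-edge count for $T$, and so we may indeed assume every edge $(t_i,s) \in T_E$ with $t_i \in P$ and $s \in X$ satisfies $i \in \Gamma(s)$. The main (very mild) subtlety to double-check is simply that both replacement edges genuinely reconnect the two components without creating a cycle, which follows from the fact that $r$ sits in exactly one side of the split.
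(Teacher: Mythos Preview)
Your proof is correct and follows essentially the same exchange argument as the paper: remove the offending edge $(t_i,s)$ of weight $\beta_{\text{out}}$, reconnect the two components through $r$ using whichever of $(r,t_i)$ or $(r,s)$ crosses the cut, and invoke $\trd \leq \beta_{\text{out}}$ and $\srd \leq \beta_{\text{out}}$ from \ref{cond: res s adj} and \ref{cond: sr prox} to conclude the cost does not increase while the count of bad edges strictly drops. The only cosmetic difference is that you phrase the termination as an extremal choice (minimum bad-edge count among optimal trees) whereas the paper phrases it as iterating the swap; these are equivalent.
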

 \begin{proof}
  Suppose $(t_i,s) \in T_E$ for $t_i \in P$, $s \in X$, and $i \not\in \Gamma(s)$. Then, $\Delta(t_i,s) = \beta_{\text{out}} \geq \trd, \srd$ from \ref{cond: sr prox} and \ref{cond: res s adj} of Definition \ref{defn: Steiner embeddability}. Remove $(t_i,s)$ from $T_E$. Then, one of the resulting connecting components contains $r$ so we may add edge $(r,s)$ or $(r, t_i)$ to reconnect the graph. The change in cost is then either $\srd - \beta_{\text{out}}$ or $\trd - \beta_{\text{out}}$ which, in either case, is at most $0$. This operation reduces the number of edges of the form $(t_i,s) \in T_E$ for $t_i \in P$, $s \in X$, and $i \not\in \Gamma(s)$, so repeating this process completes the proof since it yields a minimum cost $T$ satisfying the claim.
 \end{proof}

  \begin{lemma} \label{lem: graph deg condition term}
 We may assume that if $(t, t') \in T_E$  for some $t, t' \in P$, then $\deg(t) = 1$ or $\deg(t') = 1$ in $T$.
 \end{lemma}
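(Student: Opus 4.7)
The plan is to argue by contradiction on the number of \emph{bad} edges in $T$---edges $(t,t')\in T_E$ with $t,t'\in P$ and $\deg_T(t),\deg_T(t')\ge 2$. I will orient $T$ as rooted at $r$ and assume WLOG that $t$ is the parent of $t'$. Let $C'$ denote the subtree of $T$ rooted at $t'$; since $\deg_T(t')\ge 2$, we have $|V(C')|\ge 2$. The proof splits on whether $C'$ contains any Steiner point.

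In the first case, when $V(C')\cap X\neq\emptyset$, I pick any $s^*\in V(C')\cap X$, delete $(t,t')$, and add the edge $(r,s^*)$. This reconnects the two components of $T-(t,t')$ into a new Steiner tree $T'$. The cost change is $\srd-\td$, which is nonpositive by the Steiner proximity property \ref{cond: sr prox}, since $\srd \leq \td$ is one of its conjuncts. The only new edge, $(r,s^*)$, is root--Steiner rather than terminal--terminal, so no new bad edges are introduced, and the bad edge $(t,t')$ is eliminated.

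In the second case, $V(C')\subseteq P$, so $C'$ consists entirely of terminals connected by edges of weight $\td$. Here I use a \emph{flattening} operation: for each edge $(v,w)$ internal to $C'$ with $w$ a child of $v$ in $C'$, swap it for the edge $(t,w)$. Since $t,v,w\in P$, the set system space gives $\Delta(t,w)=\Delta(v,w)=\td$, so each swap preserves the total cost and keeps $T$ a spanning tree (the cycle produced by adding $(t,w)$ to $T$ passes through $(v,w)$). Applying these swaps top-down, starting with the children of $t'$, then their children, and so on, every vertex of $C'$ eventually becomes a direct child of $t$ and therefore a leaf of the resulting tree $T'$. In particular, the edge $(t,t')$ and every new edge $(t,v)$ with $v\in V(C')$ is incident to a leaf of $T'$, so none of them is a bad edge.

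In either case the new tree has cost at most $\cost(T)$ and strictly fewer bad edges, so iterating yields a minimum cost Steiner tree with no bad edges whatsoever, which proves the claim. The main subtlety is Case~B: one must argue that the top-down swap process really terminates with all of $V(C')$ as leaves. This follows by induction on the depth of a vertex of $C'$ below $t'$: each swap strictly reduces the number of $C'$-vertices at positive depth (in the current tree, relative to $t$) by exactly one, without touching any edge outside $C'\cup\{(t,t')\}$, so after finitely many steps every vertex of $C'$ is a direct leaf-child of $t$.
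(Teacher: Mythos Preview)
Your proof is correct and follows essentially the same approach as the paper's. Both arguments root the tree at $r$, split into two cases depending on whether the subtree below $t'$ contains a Steiner point, reconnect via an $(r,s)$ edge in the first case (using $\srd\le\td$ from \ref{cond: sr prox}), and in the second case replace the all-terminal subtree by a star centered at $t$ (which preserves cost since every swapped edge has weight $\td$); the paper phrases the second case as a bulk remove-and-reconnect rather than your iterative top-down swap, but the resulting tree and the bad-edge count argument are identical. One minor point you omit but which the paper notes explicitly: the modifications add only root--Steiner or terminal--terminal edges, so the property from Proposition~\ref{prop: graph term st edges} is preserved and the two claims remain consistent.
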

 \begin{proof}
 Suppose otherwise. Let $(t,t') \in T_E$ with $t,t' \in P$ and $\deg(t), \deg(t') > 1$. Remove $(t,t')$. If the resultant connected component not containing $r$ contains a Steiner point $s$, we may add an edge $(r,s)$ to reconnect the tree. Since $\srd \leq \td$ by \ref{cond: sr prox}, the cost of the tree does not increase.
 
 Otherwise, without loss of generality assume that the connected component containing $t'$ does not contain a Steiner point and does not contain $r$. Hence it merely contains terminals $\{t_{i_1}, t_{i_2}, \ldots, t_{i_k}\} \subseteq P$. Remove all $(k-1)$ edges in this connected component. Each was of length $\td$. Now, add an edge $(t, t_{i_j})$ for $1 \leq j \leq l$. This reconnects the tree and fixes its original cost. 
 
 Note that in either case, the degree of each terminal is non-increasing. Hence, we reduced the number of edges between terminals with both terminals having degree $> 1$, so repeating this process yields the desired result. Also note that we do not add any edges of the form considered in Proposition \ref{prop: graph term st edges}, so the two claims are consistent.
 \end{proof}
 
 \begin{lemma} \label{lem: graph st existence}
 $T$ includes some $s \in X$. In particular, there exists $(s,t) \in T_E$ such that $s \in X$ and~$t \in P$.
 \end{lemma}
 \begin{proof}
 If there are no Steiner points in $T$, then we may assume that either every $t \in P$ is a leaf node connected to $r$ (in the case of $\trd < \td$) or there exists $\tilde{t} \in P$ such that $\tilde{t}$ is connected to $r$ and every other terminal is a leaf node with an edge to $\tilde{t}$. This is because, since $T$ is a tree, it has a fixed number of edges. Additionally, each edge is of length $\trd$ or $\td$. Then, in the former case, since $\trd < \td$, there is a unique tree with all edges of length $\trd$ (namely connecting every universe element to $r$), and the tree is minimum, that must be the tree. In the latter case, the minimum possible length of the tree is having one edge of length $\trd$ (to connect some universe element to $r$) and the rest of length $\td$. The described tree is such a tree. 
 
 Now, let $t_i, t_j, t_k \in P$ and $s \in X$ such that $\Gamma(s) = \{i,j,k\}$. We may assume w.l.o.g. that none of the terminals are $\tilde{t}$ in the latter case (using $m >3$). Then, remove the edges incident to $t_i, t_j, t_k$, disconnecting the graph into four connected components. We may reconnect the graph by introducing $s$ as a Steiner point, adding edges from $t_i$, $t_j$, and $t_k$ to $s$ and then adding an edge from $s$ to $r$. The change in cost is $3\beta_{\text{in}} + \srd - 3\min(\trd, \td)$. In either case the change in cost is negative, using \ref{cond: s util}, contradicting minimality of $T$.
 \end{proof}

  \begin{lemma} \label{lem: graph term leaf node}
 We may assume that $t \in P$ has $\deg(t) >1$ in $T$ only if $t$ is adjacent to a Steiner point.
 \end{lemma}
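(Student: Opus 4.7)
The plan is a ``we may assume'' argument in the style of the preceding lemmas: assume $t \in P$ has $\deg(t) \geq 2$ in $T$ but no Steiner point neighbor, and exhibit a modification that does not increase $\cost(T)$ while eliminating this violation.

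First I would pin down the structure around $t$. By Lemma~\ref{lem: graph deg condition term}, every terminal neighbor $t'$ of $t$ in $T$ must be a leaf. Consequently, all of $t$'s non-$r$ neighbors are leaves, so the unique $t$–$r$ path in $T$ cannot be carried through a terminal neighbor of $t$; thus $t$ must itself be adjacent to $r$. Writing $t = t_i$, the neighborhood of $t$ is therefore exactly $\{r, t_{i_1}, \ldots, t_{i_k}\}$ for some $k \geq 1$, contributing local cost $\trd + k\td$ to $T$.

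Next I would introduce a Steiner point following the template of Lemma~\ref{lem: graph st existence}. Since every element of $[n]$ lies in some set in $\mathcal{S}$, pick $s \in X$ with $\Gamma(s) = \{i, j, l\} \ni i$. I would then delete the edges currently incident to $t_i, t_j, t_l$ in $T$ and reconnect by adding the four edges $(s,t_i), (s,t_j), (s,t_l), (s,r)$, together with direct edges to $r$ for any leaves of $t$ that got orphaned. After this move $t$ is adjacent to the Steiner point $s$, so the violation at $t$ is removed. The key quantitative bound is property~\ref{cond: s util}, which yields $3\beta_{\text{in}} + \srd < 3\min(\trd,\td)$; the factor-of-three slack is precisely what absorbs the new $\srd$ edge from $s$ to $r$, while \ref{cond: sr prox} and \ref{cond: res s adj} bound the cost of rewiring any orphaned leaves. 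I would also need to argue that none of Proposition~\ref{prop: graph term st edges} or Lemma~\ref{lem: graph deg condition term} is violated by this move: all new Steiner-to-terminal edges are of type $\beta_{\text{in}}$ by construction of $s$, and any newly created edge between two high-degree terminals can be cleaned up by re-applying Lemma~\ref{lem: graph deg condition term}, which only rearranges edges within the same cost class without reintroducing a violation at $t$.

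The main obstacle is precisely this cost bookkeeping: the naive swap of $(t,r)$ for $(t,s) + (s,r)$ by itself costs $\beta_{\text{in}} + \srd - \trd$, and \ref{cond: s util} only guarantees $\beta_{\text{in}} + \srd/3 < \trd$, so a single-terminal move is not enough. The triple move above is designed so that three terminal-to-root-like edges are pooled, making the slack factor three in \ref{cond: s util} pay for the single additional $\srd$. Some care is also required in the sub-case where $t_j$ or $t_l$ coincides with one of $t$'s existing leaves (then a shorter modification using only two fresh terminals suffices) and in the sub-case $k=1$ (where $t_j, t_l$ must be drawn from outside $t$'s current neighborhood and their previous edges must be re-examined via \ref{cond: sr prox}); iterating the procedure over all violating terminals ultimately yields the desired minimum $T$.
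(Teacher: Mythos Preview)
Your approach differs from the paper's and is more intricate than necessary. The paper never introduces a new Steiner point here; instead it simply drops the degree of $t$ to $1$. It splits on the sign of $\trd-\td$: if $\td>\trd$, it removes all edges incident to $t$ and reconnects $t$ together with each of its former leaf neighbors directly to $r$ (same number of edges, each $\td$ replaced by $\trd$); if $\td\le\trd$, it invokes Lemma~\ref{lem: graph st existence} to find some $\tilde t\in P$ already adjacent to a Steiner point and reroutes $t$ and all its leaves to $\tilde t$ (each new edge has cost $\td$, while the single deleted $(t,r)$ edge had cost $\trd\ge\td$). Either way $t$ becomes a leaf, no new Steiner point is needed, and \ref{cond: s util} is not used at all.

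Your three-terminal move has a concrete gap in the orphan-rewiring step. You write ``direct edges to $r$ for any leaves of $t$ that got orphaned,'' but each such edge costs $\trd$ while the deleted edge cost $\td$, and nothing in Steiner embeddability forces $\trd\le\td$. When $\trd>\td$ the rewiring contributes $k(\trd-\td)$ to the cost change, and since $k$ (the number of leaves of $t$) is unbounded this swamps the fixed gain $3\min(\trd,\td)-3\beta_{\text{in}}-\srd$ from \ref{cond: s util}. Neither \ref{cond: sr prox} nor \ref{cond: res s adj} gives $\trd\le\td$, so the appeal to those properties does not close the gap. A repair is possible (reconnect the orphaned leaves to $t_i$ itself, which after the move is adjacent to $s$, or to an existing $\tilde t$ as the paper does), but then the cost change for the $t_i$ portion collapses to $\beta_{\text{in}}+\srd-\trd$ and you must lean entirely on the $t_j,t_l$ contributions; the case analysis you flag for $t_j,t_l$ lying outside $N(t)$ then has to carry real weight, and in particular must handle $t_j$ of degree larger than $1$ without creating new orphaned subtrees whose reconnection costs a fresh $\srd$. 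This can all be made to work, but it is substantially more bookkeeping than the paper's two-line case split, which sidesteps the issue by never touching $t_j,t_l$ at all.
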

 \begin{proof}
  Suppose that $t \in P$ has degree greater than $1$ but is not adjacent to any Steiner points. By Lemma \ref{lem: graph deg condition term}, then all of the neighbors of $t$ are either $r$ or $t' \in P$ with $\deg(t') = 1$. 
  
  Now, if $\td > \trd$, removing all of $t$'s incident edges and connecting $t$ and all of its neighbors $t' \in P$ to $r$ reduces the cost of the tree, preserves connectivity, and decreases the degree of $t$ to $1$. Note that the number of edges removed equals the number of edges added. This is because $t$ must be adjacent to $r$ since all of its neighbors are leaf nodes from Lemma \ref{lem: graph deg condition term} and $T$ is connected.
  
  If $\td \leq \trd$, since there must be some $\tilde{t} \in P$ connected to a Steiner point $s$ by Lemma \ref{lem: graph st existence}, we can drop all of the edges incident to $t$ and add edges from $t$ and all its previously neighboring terminals to $\tilde{t}$. 
  This does not increase the cost of the tree since $\td \leq \trd$ in this case. It also preserves connectivity of the tree.
  
  Repeating this process then yields the claim. Note that this process is consistent with Proposition \ref{prop: graph term st edges} since it does not add any edges from Steiner points to terminals. This process is also consistent with Lemma \ref{lem: graph deg condition term} since one endpoint of every edge added between elements of $P$ in the second case is a leaf node.
 \end{proof}

 \begin{lemma} \label{lem: graph terminals only adj to one st}
 We may assume that each $t \in P$ is adjacent to at most one Steiner point.
 \end{lemma}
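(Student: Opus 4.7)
Suppose for contradiction that some $t_i \in P$ is adjacent to two distinct Steiner points $s_1, s_2 \in X$ in $T$. By Proposition~\ref{prop: graph term st edges}, we may assume $i \in \Gamma(s_1) \cap \Gamma(s_2)$, so both incident edges have length $\beta_{\text{in}}$. My plan is to show that one of these two edges can be swapped for an edge $(r, s_j)$ at no increase in cost, and that this swap is compatible with all previously established structural properties, so the process can be iterated until the lemma holds.

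The key observation is that in the tree $T$ there is a unique path from $t_i$ to the root $r$, and this path uses at most one of the edges $(t_i, s_1), (t_i, s_2)$. WLOG say the path from $t_i$ to $r$ does not use $(t_i, s_2)$. Then removing $(t_i, s_2)$ from $T_E$ splits $T$ into two components: the one containing $t_i$ also contains $r$, while the one containing $s_2$ does not. Adding the edge $(r, s_2)$ reconnects the tree, and the net change in cost is
\[
\srd - \beta_{\text{in}} \;\le\; 0,
\]
by the Steiner proximity condition \ref{cond: sr prox}. Thus we obtain another minimum Steiner tree in which the number of terminal--Steiner adjacencies has strictly decreased by one.

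Next I would verify that this modification preserves the properties established in Proposition~\ref{prop: graph term st edges} and Lemmas~\ref{lem: graph deg condition term} and~\ref{lem: graph term leaf node}. The only edge added is $(r, s_2)$, which is neither a terminal--Steiner edge (so Proposition~\ref{prop: graph term st edges} is untouched) nor a terminal--terminal edge (so Lemma~\ref{lem: graph deg condition term} is untouched). The operation only decreases degrees of terminals: $t_i$'s degree drops by one, but it remains adjacent to $s_1$, so Lemma~\ref{lem: graph term leaf node} still holds for $t_i$; no other terminal's incidences change. Lemma~\ref{lem: graph st existence} is obviously preserved since $s_1, s_2$ remain in the tree.

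Since each application strictly decreases the (nonnegative, integer-valued) quantity $|\{(t, s) \in T_E : t \in P, s \in X\}|$, the process terminates. The resulting tree is still minimum cost, still satisfies all previously asserted properties, and has each $t \in P$ adjacent to at most one Steiner point, as desired. The main thing to be careful about in writing this out in full is just the WLOG step: spelling out that the tree structure forces at least one of $s_1, s_2$ to lie in the component not containing $r$ after removing the corresponding edge, which is a one-line consequence of uniqueness of paths in a tree.
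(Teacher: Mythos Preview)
Your proposal is correct and follows essentially the same argument as the paper: drop a terminal--Steiner edge whose removal leaves the terminal in the component of $r$, replace it by an edge from $r$ to the separated Steiner point, and invoke $\srd \le \beta_{\text{in}}$ from \ref{cond: sr prox}. The paper removes all surplus edges at once rather than one at a time, and uses as its progress measure the number of terminals with more than one Steiner neighbor rather than the total number of terminal--Steiner edges, but these are cosmetic differences.
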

 \begin{proof}
  Suppose otherwise, let $t_i \in P$ be adjacent to more than one Steiner point. From Proposition \ref{prop: graph term st edges}, $t_i$ is adjacent only to $s \in S$ such that $i \in \Gamma(s)$. Drop edges from $t_i$ to Steiner points such that $t_i$ remains in the connected component with $r$. Then, add the same number of edges from $r$ to a Steiner point in each other connected component. This does not increase the cost of $T$ since $\srd \leq \beta_{\text{in}}$ from \ref{cond: sr prox}. Additionally, this process reduces the number of elements of $P$ adjacent to more than one Steiner point, so repeating this completes the claim.
  
  This process is consistent with Proposition \ref{prop: graph term st edges} since none of the edges considered in that claim are added. It is consistent with Lemmas \ref{lem: graph deg condition term} and \ref{lem: graph term leaf node} since the degree of non root terminals only decreases in this process.
 \end{proof}

 \begin{proposition} \label{prop: graph terminals not adj to r if in triangle}
 We may assume that if $t \in P$ such that $t$ is adjacent to a Steiner point $s$, then $t$ is not adjacent to $r$.
 \end{proposition}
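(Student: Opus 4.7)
The plan is to apply the same local-swap strategy the authors have been using throughout this sequence of claims. Suppose some $t \in P$ violates the claim, so $t$ is adjacent in $T$ to both $r$ and a Steiner point $s \in X$. Because $T$ is a tree and $(t,r), (t,s) \in T_E$, the unique $r$-to-$s$ path in $T$ passes through $t$, so $(s,r) \notin T_E$. Delete $(t,r)$ and insert $(s,r)$: removing $(t,r)$ splits $T$ into two components, one containing $r$ and one containing $\{t,s\}$ (since the edge $(t,s)$ is preserved), and the added edge $(s,r)$ reconnects them into a tree on the same vertex set. The cost changes by $\srd - \trd$, which is $\le 0$ by property \ref{cond: sr prox} of Steiner embeddability, so minimality of $T$ is preserved.

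Next I would check that the swap is consistent with every structural guarantee established earlier, since this is the standard bookkeeping in this section. The only edge deleted is a root-terminal edge and the only edge inserted is a root-Steiner edge, so Proposition~\ref{prop: graph term st edges} is untouched (no terminal acquires an edge to a Steiner point whose set does not contain it), and Lemma~\ref{lem: graph deg condition term} is untouched (no edge between two non-root terminals is added). The degree of every vertex other than $t$ and $r$ is unchanged; the degree of $t$ drops by one, but $t$ retains its edge to $s$, so if $\deg_T(t) > 1$ after the swap, $t$ is still adjacent to the Steiner point $s$ and Lemma~\ref{lem: graph term leaf node} continues to hold. Finally, no terminal gains a new Steiner-point neighbor, so Lemma~\ref{lem: graph terminals only adj to one st} is preserved as well.

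For termination, observe that each application of the swap strictly reduces the number of terminals in $P$ that are simultaneously adjacent to $r$ and to a Steiner point, and creates no new such terminals: the only edge added is the root-Steiner edge $(s,r)$, which does not alter any terminal's incidence to $r$. Iterating the swap therefore halts in a finite number of steps and yields a minimum-cost Steiner tree in which, for every $t \in P$ adjacent to a Steiner point, $t$ is not adjacent to $r$.

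There is no real technical obstacle here beyond the two routine checks: verifying the cost inequality (which is exactly \ref{cond: sr prox}) and verifying that each of the previous structural claims remains valid after the swap. The simplicity of the argument is consistent with the pattern of the surrounding propositions; this one is the easiest of the bunch because the swap only touches edges incident to the root.
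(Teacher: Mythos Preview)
Your proof is correct and mirrors the paper's own argument: delete $(t,r)$, insert $(s,r)$, use $\srd \le \trd$ from \ref{cond: sr prox}, verify consistency with the earlier structural claims, and iterate. Your treatment is in fact more careful than the paper's; the only slip is the remark that ``the degree of every vertex other than $t$ and $r$ is unchanged''—the degree of $s$ increases by one—but this is harmless since $s$ is a Steiner point and none of the earlier claims constrain Steiner-point degrees.
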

 \begin{proof}
 Suppose that $(t,r) \in T_E$. Drop $(t,r)$ and add edge $(s,r)$. This reconnects the tree and does not increase its cost since $\srd \leq \trd$ by \ref{cond: sr prox}. As usual, repeating this process yields the desired claim. Moreover, this process is consistent with the above claims. We do not add an edges of the form to violate Proposition \ref{prop: graph term st edges} or Lemma \ref{lem: graph terminals only adj to one st}, and we only decrease the degree of terminals so this process is consistent with Lemmas \ref{lem: graph deg condition term} and \ref{lem: graph term leaf node}.
 \end{proof}

 \begin{lemma} \label{lem: graph st to st}
 We may assume that $(s, s') \not\in T_E$ for $s, s' \in X$.
 \end{lemma}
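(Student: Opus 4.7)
The plan is an iterative exchange argument. I take $T$ to be a minimum Steiner tree satisfying Propositions \ref{prop: graph term st edges}, \ref{prop: graph terminals not adj to r if in triangle} and Lemmas \ref{lem: graph deg condition term}, \ref{lem: graph term leaf node}, \ref{lem: graph terminals only adj to one st} with the fewest Steiner--Steiner edges, and suppose for contradiction that some $(s, s') \in T_E$ exists with $s, s' \in X$. I pick this edge so that $s'$ is a leaf of the Steiner-subgraph of $T$ on $X$, so that $s$ is the only Steiner neighbor of $s'$ in $T$; such a leaf always exists when the Steiner-subgraph is nonempty. The analysis splits on $i := |\Gamma(s) \cap \Gamma(s')| \in \{0, 1, 2\}$; in each case I will modify $T$ into a tree of no greater cost with strictly fewer Steiner--Steiner edges and still satisfying the previous invariants, contradicting the choice of $T$.

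For the easy cases $i \in \{0, 1\}$, I would delete the edge $(s, s')$, which breaks $T$ into two components, and reconnect the non-root component by adding the edge from $r$ to whichever of $s$ or $s'$ lies in that component. Steiner proximity \ref{cond: sr prox} gives $\srd \le \gamma_0, \gamma_1$, so the cost is non-increasing; the only new edge is a root--Steiner edge, and no previous invariant is violated, while the number of Steiner--Steiner edges strictly decreases.

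For the substantive case $i = 2$, write $\Gamma(s) = \{i, j, k\}$ and $\Gamma(s') = \{i, j, \ell\}$. Here \ref{cond: sr prox} only gives $\srd \le 3\gamma_2/2$, so a direct root swap may strictly increase cost. Minimality rules out $\deg_T(s') = 1$. By Proposition \ref{prop: graph term st edges}, Lemma \ref{lem: graph terminals only adj to one st}, and the leaf choice of $s'$, every non-$s$ neighbor of $s'$ in $T$ is a terminal in $\{t_i, t_j, t_\ell\}$ not already adjacent to $s$. The modification is to delete $s'$ and all its incident edges, then attach each of $t_i, t_j$ that was a neighbor of $s'$ to $s$ at cost $\beta_{\text{in}}$ (legal by Proposition \ref{prop: graph term st edges} since $i, j \in \Gamma(s)$), and reconnect $t_\ell$ (if it was a neighbor of $s'$) either directly to $r$ at cost $\trd$ when $\trd \le \td$, or through an existing leaf terminal adjacent to another Steiner point when $\trd > \td$ (such a terminal exists by Lemma \ref{lem: graph st existence}, and rerouting through it at cost $\td$ preserves Lemma \ref{lem: graph deg condition term}). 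The total saving of $\gamma_2 + \beta_{\text{in}}$ from the removed edges $(s, s')$ and $(s', t_\ell)$ covers the replacement cost by the Steiner diameter condition \ref{cond: ss' non adj}, $\min(\trd, \td) \le \beta_{\text{in}} + \gamma_2$.

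The main obstacle will be the bookkeeping in the $i = 2$ case: one must verify the cost accounting uniformly over $\deg_T(s') \in \{2, 3, 4\}$ and over the subsets of $\{t_i, t_j, t_\ell\}$ that appear as neighbors of $s'$, and check at each step that the replacement edges respect Propositions \ref{prop: graph term st edges}, \ref{prop: graph terminals not adj to r if in triangle} and Lemmas \ref{lem: graph deg condition term}, \ref{lem: graph term leaf node}, \ref{lem: graph terminals only adj to one st}. Selecting $s'$ as a leaf of the Steiner-subgraph is crucial, since otherwise the surgery could create new Steiner--Steiner edges incident to other Steiner neighbors of $s'$ and fail to actually decrement the overall count.
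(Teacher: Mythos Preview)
Your strategy of selecting $s'$ as a leaf of the Steiner-subgraph is a clean simplification compared to the paper's argument, which instead fixes an arbitrary $s$ with Steiner neighbours and separately treats the cases where $s$ has one versus two-or-three such neighbours (invoking the bound $\srd \le 3\gamma_2/2$ from \ref{cond: sr prox} in the latter). By going straight to a leaf you land directly in what the paper calls Case~2, and your overlap-$0$ and overlap-$1$ cases are handled identically to the paper.

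However, your overlap-$2$ analysis has two genuine gaps. First, your claim that ``every non-$s$ neighbour of $s'$ in $T$ is a terminal in $\{t_i, t_j, t_\ell\}$'' is false: nothing in the prior lemmas prevents $s'$ from being adjacent to the root $r$. If $(s',r) \in T_E$, then after deleting $s'$ the root lies in a component separate from $s$, and your reconnection scheme (which only reattaches $t_i,t_j,t_\ell$) leaves $s$'s side disconnected from $r$. The paper handles this up front by transferring any edge $(s',r)$ to $(s,r)$ at no cost change before the case split; you need the same preliminary step, and your stated degree bound $\deg_T(s') \le 4$ is off by one without it.

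Second, when you reconnect $t_\ell$ (either to $r$ or to a terminal adjacent to another Steiner point), you have not accounted for the possibility that $t_\ell$ itself had degree greater than one, with leaf-terminal neighbours hanging off it as permitted by Lemma~\ref{lem: graph deg condition term}. After your surgery $t_\ell$ is no longer adjacent to any Steiner point, so if it retains those leaf neighbours it violates Lemma~\ref{lem: graph term leaf node}, and your ``minimum tree satisfying the invariants with fewest Steiner--Steiner edges'' framing collapses. The paper's Case~2 explicitly reroutes those leaf neighbours to some other terminal that is still adjacent to a Steiner point, at unchanged cost $\td$ per edge; you need the same patch. Both fixes are routine and compatible with your leaf-based strategy, but without them the proof as written does not close.
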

 \begin{proof}
  First, if $\gamma_0, \gamma_1, \gamma_2 \geq \srd$, then the result follows easily. Suppose $(s,s') \in T_E$ for $s, s' \in X$. Then, remove the edge between them. Assume w.l.o.g. that $s$ is in the resultant connected component with $r$. Then, add edge $(s', r)$ to reconnect the graph. Since $\gamma_i = \Delta(s,s') \geq \Delta(s', r) = \srd$, the cost of the tree does not increase under this operation.
  
  If instead $\gamma_0, \gamma_1 \geq \srd$ but $\gamma_2 < \srd$ (the only other case by \ref{cond: sr prox}), we proceed carefully. By applying the procedure above we may assume that $(s,s') \in T_E$ only if $|\Gamma(s) \cap \Gamma(s')| = 2$. Fix $s \in X$ with edges to other Steiner points. We may assume that $s$ has at most $3$ neighboring Steiner points. This is because each of $s$'s neighbors overlap $\Gamma(s) = \{i,j,k\}$ on two elements by Proposition \ref{prop: graph term st edges}, and there are three choices of pairs of elements to overlap on ($\{i,j\}$, $\{i,k\}$, or $\{j,k\}$). Hence, for all but one neighbor per pair of overlapping set elements, the edges to $s$ can be adjusted to be edges to one another. For example, if $s'$ and $s''$ are Steiner points adjacent to $s$ such that $\{i,j\} \subset \Gamma(s'), \Gamma(s'')$, then $\Delta(s', s'') = \Delta(s, s') = \Delta(s, s'') = \beta_{\text{in}}$ so the edge $(s, s'')$ may be replaced by the edge $(s', s'')$ without changing the cost or connectivity of the tree.
  
  Similarly, we may assume that $s$ is not adjacent to $r$ by moving an edge from $s$ to $r$ to be an edge from $r$ to one of $s$'s neighboring Steiner points. Now we consider two cases based on the number of neighboring Steiner points to $s$.

  \begin{paragraph}{Case 1: $s$ is adjacent to two or three Steiner points.}
  If $s$ is adjacent to at least two Steiner points $s'$ and $s''$, then $\Gamma(s) \subseteq \Gamma(s') \cup \Gamma(s'')$ since $|\Gamma(s) \cap \Gamma(s')| = 2$, $|\Gamma(s) \cap \Gamma(s'')| = 2$, and $\Gamma(s),\Gamma(s'),\Gamma(s'')$ are distinct sets of size $3$. Hence, using Proposition \ref{prop: graph term st edges}, for each non-root terminal adjacent to $s$, we may remove its edge to $s$ and add a new edge to $s'$ or $s''$ (depending on which induces less cost) without changing the connectivity or cost of $T$.

  At this point, $s$ has degree $2$ or $3$. If $s$ has degree $2$, $s$'s only neighbors are two Steiner points $s'$ and $s''$ such that $|\Gamma(s') \cap \Gamma(s'')| = 1$. Drop $s$ and its incident edges. Add an edge from one of $s$'s former neighbors to $r$ to reconnect $T$. This induces an additive cost of $\srd - 2\gamma_2 \leq 0$. This inequality holds since $2\gamma_2 \geq \gamma_1$ by the triangle inequality and $\srd \leq \gamma_1$ by \ref{cond: sr prox}.
  
  So, we may assume that $s$ has degree exactly $3$. But consider removing $s$ and its incident edges. This reduces the cost of $T$ by $3\gamma_2$. To reconnect $T$, we add edges from $r$ to a Steiner point in each of the connected components of $T$ not containing $r$. Both components may be clearly contain Steiner points. This costs $2\srd$. Hence, since $3\gamma_2/2 \geq \srd$ by \ref{cond: sr prox}, this does not increase the cost of the tree.
  \end{paragraph}
  
  \begin{paragraph}{Case 2: $s$ is adjacent to exactly one Steiner point.}
  Suppose $s$ is only adjacent to one other Steiner point, $s'$. Then, since $|\Gamma(s) \cap \Gamma(s')| = 2$, we can remove edges from $s$ to all but one of its neighboring universe elements and replace those with edges from the terminals to $s'$ without changing the connectivity or cost of $T$. Now, we have that $s$ is not adjacent to $r$, $s$ is adjacent to at most one Steiner point (our present case), and $s$ is adjacent to at most one universe element. Then, remove $s$ and its incident edges from $T$. If $s$ was not adjacent to any terminals, the tree remains connected and its cost decreases (and note that $s$ is not adjacent to $r$).
  
  Otherwise, suppose that $s$ was adjacent to a universe element $t$. If $t$ had degree greater than $1$, then, by Lemmas \ref{lem: graph term leaf node} and \ref{lem: graph terminals only adj to one st} and Proposition \ref{prop: graph terminals not adj to r if in triangle}, $t$'s other neighbors are exactly leaf node universe elements. Remove the edges from those terminals to $t$ and replace those edges with edges to some other terminal adjacent to a Steiner point (for example, some terminal adjacent to $s'$). This fixes the cost of the tree.
  
   Now, add an edge from $t$ to either a terminal adjacent to a Steiner point or $r$ (whichever edge is cheaper). This changes the cost of the tree by an additive factor of
  \[
  \min(\trd, \td) - \gamma_2 - \beta_{\text{in}} \leq 0
  \]
  and reconnects the tree. The inequality holds by \ref{cond: ss' non adj}.
  \end{paragraph}

  Repeating this process removes all edges between Steiner points since it reduces the number of edges between Steiner points by at least one in each iteration. 
  
  Now we check that this process is consistent with the above claims. We do not add any edges to violate Proposition \ref{prop: graph term st edges}.  This process is consistent with Lemmas \ref{lem: graph deg condition term} and \ref{lem: graph term leaf node}  since the only time we add an edge between non-$r$ terminals in this process is in Case 2 and all such added edges are between a leaf node and a terminal adjacent to a Steiner point. This process is consistent with Lemma \ref{lem: graph terminals only adj to one st} since we only replace edges from non-$r$ terminals to Steiner points with different Steiner point endpoints than adding additional edges to Steiner points. Finally, the only time we add an edge from a non-$r$ terminal to $r$ is in Case $2$, and we only do this when the non-$r$ terminal is a leaf node. Hence, this process is consistent with Proposition \ref{prop: graph terminals not adj to r if in triangle}.
 \end{proof}
 
One important consequence of the above claims is that any path between two Steiner points in $T$ must pass through $r$. That facilitates computing the cost of the tree. Before completing the proof, we prove a lemma that facilitates our analysis. 

\begin{lemma} \label{lem: graph st deg at least 3}
We may assume that each Steiner point has degree at least $3$ in $T$.
\end{lemma}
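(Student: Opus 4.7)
The plan is to reduce Steiner points of degree $1$ or $2$ to a state satisfying the lemma, while preserving the structural invariants established in Propositions \ref{prop: graph term st edges} and \ref{prop: graph terminals not adj to r if in triangle} and Lemmas \ref{lem: graph deg condition term}, \ref{lem: graph term leaf node}, \ref{lem: graph terminals only adj to one st}, and \ref{lem: graph st to st}. By Lemma \ref{lem: graph st to st}, every Steiner point's neighbors lie in $\tilde{P}$, and, by Proposition \ref{prop: graph term st edges}, any terminal neighbor $t_i$ of a Steiner point $s$ must satisfy $i \in \Gamma(s)$. So the neighbors of a Steiner point are either $r$ or such terminals.

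First I would handle degree-$1$ Steiner points. If $s$ is a leaf Steiner point, delete $s$ together with its unique incident edge. This strictly decreases the cost of $T$ (edge weights are positive) and leaves $T$ connected. None of the earlier invariants is invalidated because we neither add edges nor increase any vertex's degree.

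Next I would handle degree-$2$ Steiner points. If $s$ has neighbors $r$ and $t_i$ (with $i \in \Gamma(s)$), delete $s$ and its two incident edges and insert the edge $(r, t_i)$; the cost changes by $\trd - \srd - \beta_{\text{in}}$, which is non-positive by the metric compatibility inequality $\trd \leq \srd + \beta_{\text{in}}$ (item 2 of Definition \ref{defn: metric compat}). If instead $s$ has neighbors $t_i$ and $t_j$ with $i, j \in \Gamma(s)$, delete $s$ and insert the edge $(t_i, t_j)$; the cost changes by $\td - 2\beta_{\text{in}} \leq 0$ by item 6 of Definition \ref{defn: metric compat}. In both cases no cycle is created, since in $T$ the two edges through $s$ were the only path between $s$'s neighbors.

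The main obstacle is verifying consistency with the previously established claims and ensuring termination. In the $(r, t_i)$ case the newly added edge is harmless since $t_i$'s unique Steiner neighbor (namely $s$) has just been removed, so Lemma \ref{lem: graph terminals only adj to one st} and Proposition \ref{prop: graph terminals not adj to r if in triangle} are preserved. In the $(t_i, t_j)$ case, if one of $t_i, t_j$ had degree greater than $1$ before the modification, the modification may temporarily violate Lemma \ref{lem: graph deg condition term} or Lemma \ref{lem: graph term leaf node}; we then re-apply the reductions from those earlier lemmas. Crucially, none of those earlier reductions ever decreases the degree of a Steiner point (they only delete terminal-terminal edges or replace edges with edges incident to $r$ or an existing Steiner point), so they never downgrade a Steiner point of degree $\geq 3$ into degree $\leq 2$. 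Since each iteration either strictly decreases cost or strictly decreases the number of Steiner points, the process terminates in polynomially many steps, and in the resulting tree every Steiner point has degree at least $3$.
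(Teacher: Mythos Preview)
Your approach---shortcut degree-$2$ Steiner points via the triangle inequality and delete degree-$1$ ones outright---is the same idea the paper uses, but two points need attention.

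First, in the $(r, t_i)$ case you assert that all prior invariants survive, yet Lemma~\ref{lem: graph term leaf node} can break: $t_i$ may have had leaf terminal neighbors (allowed by Lemma~\ref{lem: graph deg condition term}), and after you delete $s$ and insert $(r, t_i)$, the vertex $t_i$ has degree greater than $1$ but no Steiner neighbor. You already know the fix---re-apply Lemma~\ref{lem: graph term leaf node}, exactly as you propose for the $(t_i,t_j)$ case---but it must be invoked here as well. The paper handles this via a case split on $\trd$ versus $\td$: if $\trd \le \td$ it first strips $t_i$ of its leaf neighbors by rerouting them to $r$, and if $\td < \trd$ it shows the configuration contradicts minimality of $T$ outright (by rerouting through a terminal adjacent to some other Steiner point, whose existence is guaranteed by Lemma~\ref{lem: graph st existence}).

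Second, the $(t_i, t_j)$ case is actually impossible under the standing invariants, and the paper observes this explicitly. If the only neighbors of $s$ are $t_i, t_j \in P$, then by Proposition~\ref{prop: graph terminals not adj to r if in triangle} neither is adjacent to $r$; by Lemma~\ref{lem: graph terminals only adj to one st} neither is adjacent to another Steiner point; and by Lemma~\ref{lem: graph deg condition term} any remaining neighbor of $t_i$ or $t_j$ is a leaf terminal. Hence the component containing $s$ has no path to $r$, contradicting connectedness of $T$. Your treatment of that case is therefore vacuous, and the consistency concerns you raise there never actually arise.
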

\begin{proof}
 If there is a Steiner point of degree $1$, dropping the Steiner point and its incident edge yields a lower cost tree, violating minimality of $T$. Now suppose that $s$ is a Steiner point of degree $2$ in $T$. By Lemma \ref{lem: graph st to st} and Proposition \ref{prop: graph term st edges}, $s$'s only possible neighbors are $r$ and $t_i \in P$ such that $i \in \Gamma(t_i)$. One of the two neighbors must be $r$ since $T$ is connected and, by Lemma \ref{lem: graph term leaf node} and Proposition \ref{prop: graph terminals not adj to r if in triangle}, none of $s$'s neighboring universe elements nor their neighboring universe elements may be adjacent to $r$. Then, $s$'s neighbors are precisely $r$ and some universe element. 

If $\trd \leq \td$, then we may assume that $s$'s neighboring universe element is a leaf node. In that case, drop $s$ and its incident edges and connect $s$'s neighboring universe element to $r$. This does not increase the cost of the tree by the triangle inequality and maintains connectivity.

If $\td < \trd$, let $t$ be $s$'s universe element neighbor. Remove $s$ and its incident edges and add an edge from $r$ to $t$. As above, this retains connectivity and does not increase the cost of the tree by the triangle inequality. Now, the tree must have an additional Steiner point by Lemma \ref{lem: graph st existence} (otherwise we contradict minimality). That Steiner point must be adjacent to a universe element $\tilde{t}$ since otherwise it is degree $1$. But replacing the edge $(t,r)$ with an edge $(t, \tilde{t})$ then reduces the cost of the tree while preserving connectivity, contradicting minimality of the tree.

Hence, degree $2$ Steiner points are only even possible when $\trd \leq \td$. In that case, this simple modification is evidently consistent with all of the above claims.
\end{proof}
 
 \begin{lemma} \label{lem: graph max packing}
The Steiner points in $T$ adjacent to $3$ terminals correspond exactly to a maximum packing of sets in $\mathcal{S}$.
 \end{lemma}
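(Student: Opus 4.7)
The plan is to first observe that the Steiner points in $T$ adjacent to three terminals correspond to a packing $\mathcal{S}_T$ of $\mathcal{S}$, and then use the optimality of $T$ together with \ref{cond: s util} to conclude via an exchange argument that $\mathcal{S}_T$ has maximum size.

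By Proposition \ref{prop: graph term st edges}, each Steiner point $s$ in $T$ adjacent to three terminals is adjacent exactly to $\{t_i : i \in \Gamma(s)\}$. Set $\mathcal{S}_T := \{\Gamma(s) : s\text{ is such a Steiner point}\}$; by Lemma \ref{lem: graph terminals only adj to one st}, these triples are pairwise disjoint, so $\mathcal{S}_T$ is a packing. Combining the prior structural lemmas, every Steiner point in $T$ must also be adjacent to $r$ (otherwise its cluster would be disconnected from $r$ in $T$), and every terminal's contribution to $\cost(T)$ takes one of four per-terminal values: $\srd/3 + \beta_{\text{in}}$ in a type-3 cluster, $\srd/2 + \beta_{\text{in}}$ in a type-2 cluster, $\trd$ as a leaf of $r$, or $\td$ as a leaf of a cluster terminal. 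By \ref{cond: s util} and $\srd > 0$, the type-3 value is strictly smallest.

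Suppose for contradiction that $|\mathcal{S}_T| < p^*$ with $\mathcal{S}^*$ a maximum packing of size $p^*$. I would construct an alternative tree $T'$ whose type-3 Steiner points realize $\mathcal{S}^*$ and whose uncovered terminals are arranged using available type-2 clusters, singletons to $r$, and leaves of cluster terminals to minimize cost. The net effect is to shift at least $3(p^* - |\mathcal{S}_T|)$ terminals from a non-type-3 configuration in $T$ into a type-3 cluster in $T'$, saving at least $\min(\trd, \td, \srd/2 + \beta_{\text{in}}) - (\srd/3 + \beta_{\text{in}}) > 0$ per shifted terminal. Any incidental increase in per-terminal cost on the remaining uncovered terminals in $T'$ versus $T$ is controlled by the metric compatibility bound $\trd \leq \srd + \beta_{\text{in}}$ from Definition \ref{defn: metric compat}, which upper-bounds the cost of the fallback singleton-to-$r$ configuration always available in $T'$. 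A careful accounting then gives $\cost(T') \leq \cost(T)$ with strict inequality when $|\mathcal{S}_T| < p^*$, contradicting the optimality of $T$.

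The main obstacle is the case when $\mathcal{S}_T$ is maximal under inclusion but not of maximum size---then no single set of $\mathcal{S}$ is disjoint from $\cup \mathcal{S}_T$, so the packing cannot be extended by a single set. To handle this I would perform the exchange iteratively via an augmenting-structure argument on the symmetric difference $\mathcal{S}^* \triangle \mathcal{S}_T$: at each step remove a set from $\mathcal{S}_T$ and add two sets from $\mathcal{S}^*$ that overlap it on two elements each, producing a packing of strictly larger size. Using the per-terminal inequality above together with the metric compatibility constraints of Definition \ref{defn: metric compat}, each such exchange strictly decreases the tree cost, and repeating until the packing has size $p^*$ yields the required contradiction with optimality of $T$.
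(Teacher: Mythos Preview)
Your opening is right and matches the paper: you correctly observe that the type-3 Steiner points form a packing, you record the per-terminal costs, and you note that the type-3 value $c = \beta_{\text{in}}+\srd/3$ is strictly smallest by \ref{cond: s util}. That is exactly the content of the paper's argument.

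Where you go astray is in the ``main obstacle'' paragraph. The fact that $\mathcal{S}_T$ may be maximal but not maximum is \emph{not} an obstacle: the comparison tree $T'$ is built from scratch using an arbitrary maximum packing $\mathcal{S}^*$, not by extending $\mathcal{S}_T$. There is no need to transform $\mathcal{S}_T$ into $\mathcal{S}^*$ one set at a time. The paper's proof simply writes the cost of any tree satisfying the accumulated structural lemmas as $a x_1 + b x_2 + c x_3$ with $x_1+x_2+x_3=n$, observes that $c<\min(a,b)$, and concludes that among such trees the minimum cost is attained at the largest feasible $x_3$; since a tree with $x_3=3p^*$ is always feasible (take $\mathcal{S}^*$ for type-3 and put every uncovered terminal in type-1), optimality of $T$ forces $x_3=3p^*$. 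No exchange or augmentation is needed.

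Your proposed augmenting-structure step is also incorrect on its own terms. You write ``remove a set from $\mathcal{S}_T$ and add two sets from $\mathcal{S}^*$ that overlap it on two elements each,'' but two size-$3$ sets that each meet a fixed size-$3$ set $S$ in two elements must share an element of $S$ (two $2$-subsets of a $3$-set always intersect), hence cannot both lie in the packing $\mathcal{S}^*$. More generally, $3$-set packing does not admit the kind of augmenting paths you are appealing to, so this route cannot be completed. Drop the exchange argument entirely and finish the direct cost comparison you already set up in your second paragraph.
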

 \begin{proof}
By Proposition \ref{prop: graph term st edges} and Lemma \ref{lem: graph terminals only adj to one st}, we know that a given terminal $t_i$ is adjacent to $s$ only if $i \in \Gamma(s)$ and is adjacent to at most one such $s$. Then, since every $S \in \mathcal{S}$ has $|S| = 3$, the Steiner points adjacent to $3$ terminals correspond exactly to a disjoint choice of sets from $\mathcal{S}$.

It remains to prove that this choice of sets is of maximum size. To show this, we need to understand the cost of $T$. 

Partition the edges of $T$ into three sets $E_1, E_2,$ and $E_3$ defined as follows: edges incident to universe elements not adjacent to Steiner points, edges incident to Steiner points adjacent to two universe elements, and edges incident to Steiner points adjacent to three universe elements. 
Lemma \ref{lem: graph deg condition term} and Proposition \ref{prop: graph terminals not adj to r if in triangle} imply that every edge without a leaf universe element involves a Steiner point. Then, Proposition \ref{prop: graph term st edges}, Lemma \ref{lem: graph st deg at least 3}, and the fact that the sets in $\mathcal{S}$ are each cardinality $3$ shows that this accounts for all for all edges in $T$.

Now, let $x_1$ be the number of universe elements not adjacent to Steiner points, $x_2$ the number of universe elements adjacent to Steiner points adjacent to two universe elements, and $x_3$ the number of universe elements adjacent to Steiner points adjacent to three universe elements. By Lemma \ref{lem: graph deg condition term} and Proposition \ref{prop: graph terminals not adj to r if in triangle} combined with the restrictions on the adjacency of Steiner points from Proposition \ref{prop: graph term st edges} and Lemma \ref{lem: graph st to st}, every Steiner point must be adjacent to $r$ since $T$ is connected. Then, the total cost contributed by edges in $E_1$ is $\min(\trd, \td)x_1$, $(\srd/2 + \beta_{\text{in}})x_2$ for $E_2$, and $(\srd/3 + \beta_{\text{in}})x_3$ for $E_3$.

Then, the cost of $T$ is precisely 
\[
\min(\trd, \td) x_1  + (\beta_{\text{in}} + \srd/2)x_2 + (\beta_{\text{in}} + \srd/3)x_3.
\]
But, since $\beta_{\text{in}} + \srd/3 < \min(\trd, \td)$
by \ref{cond: s util} and $\beta_{\text{in}} + \srd/3 <\beta_{\text{in}} + \srd/2$,
this cost is minimized for maximizing $x_3$. This means that the set of Steiner points in $T$ adjacent to $3$ terminals must correspond to a maximum packing of sets in $\mathcal{S}$ (by minimality of the cost of $T$).
 \end{proof}
 
 As a result of Lemma \ref{lem: graph max packing}, we have the following.
 \begin{corollary}
 In the soundness case we have
\[
\cost_\Delta(T) \geq n(\srd/3+\beta_{\text{in}})\cdot \left(1 - \varepsilon  + \frac{(4\varepsilon  - 2\delta)}{3} \cdot \frac{\min(\trd, \td, \beta_{\text{in}} + \srd/2)}{\srd/3+\beta_{\text{in}}} + \frac{(2\delta  - \varepsilon)}{3} \cdot \frac{\min(\trd, \td)}{\srd/3+\beta_{\text{in}}} \right).
\]
 \end{corollary}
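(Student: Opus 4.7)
The plan is to combine the exact cost formula from Lemma~\ref{lem: graph max packing} with two inequalities on $(x_1,x_2,x_3)$ extracted from the soundness guarantee of $(\varepsilon,\delta)$-\setpT, and then lower bound the resulting two-variable linear program via LP duality. Write $A := \min(\trd,\td)$, $B := \beta_{\text{in}}+\srd/2$, $C := \beta_{\text{in}}+\srd/3$, and note $M := \min(\trd,\td,\beta_{\text{in}}+\srd/2) = \min(A,B)$. Lemma~\ref{lem: graph max packing} then reads $\cost_\Delta(T) = Ax_1 + Bx_2 + Cx_3$ with $x_1+x_2+x_3=n$.

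For the packing inequality, Lemma~\ref{lem: graph max packing} tells us that the $x_3/3$ Steiner points of terminal-degree three induce a disjoint subfamily of $\mathcal{S}$ covering exactly $x_3$ universe elements, so the packing clause of soundness yields $x_3 \le (1-\varepsilon)n$. For the cover inequality, we construct a set cover of $[n]$ by taking every Steiner-point set appearing in $T$ (there are $x_2/2 + x_3/3$ of them, and by Proposition~\ref{prop: graph term st edges} they already cover all $x_2+x_3$ terminals that are adjacent to Steiner points in $T$) and adjoining one arbitrary covering set per element counted in $x_1$. Soundness then gives $x_1 + x_2/2 + x_3/3 \ge (1+\delta)n/3$. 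Combined with $x_1+x_2+x_3=n$, these simplify to $x_1+x_2 \ge \varepsilon n$ and $4x_1+x_2 \ge 2\delta n$.

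Finally, rewriting $\cost_\Delta(T) = Cn + (A-C)x_1 + (B-C)x_2$ and distributing $C$ through the target expression reduces the claim to $(A-C)x_1 + (B-C)x_2 \ge n\bigl[\tfrac{4\varepsilon-2\delta}{3}(M-C) + \tfrac{2\delta-\varepsilon}{3}(A-C)\bigr]$. For any multipliers $\lambda,\mu \ge 0$ satisfying $\lambda+4\mu \le A-C$ and $\lambda+\mu \le B-C$, weak LP duality against the two extracted inequalities gives $(A-C)x_1 + (B-C)x_2 \ge \lambda\varepsilon n + 2\mu\delta n$. When $A \ge B$, choose $\mu=(A-B)/3$ and $\lambda=(4B-A-3C)/3$; nonnegativity of $\lambda$ reduces to $A \le 4B-3C = \srd+\beta_{\text{in}}$, which holds by metric compatibility since $A \le \trd \le \srd+\beta_{\text{in}}$, and substituting recovers the target with $M=B$. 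When $A<B$, the simpler choice $\lambda=A-C$, $\mu=0$ is feasible and yields $(A-C)\varepsilon n$, matching the target with $M=A$ because $\tfrac{4\varepsilon-2\delta}{3}+\tfrac{2\delta-\varepsilon}{3}=\varepsilon$. The only point requiring real care is the case split forced by the sign of $A-B$, which is exactly what the $\min$ defining $M$ is engineered to absorb; everything else is algebraic verification.
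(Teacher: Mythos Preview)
Your proof is correct and sets up the same LP as the paper: both extract the cost formula from the proof of Lemma~\ref{lem: graph max packing}, derive the packing inequality $x_1+x_2\ge\varepsilon n$ and the cover inequality $4x_1+x_2\ge 2\delta n$ from soundness, and then lower-bound $(A-C)x_1+(B-C)x_2$ over this region. The only difference is the final step: the paper argues directly for the minimizing vertex (setting $x_3=(1-\varepsilon)n$, then bounding $x_2\le(4\varepsilon-2\delta)n/3$ and splitting on whether $\beta_{\text{in}}+\srd/2<\min(\trd,\td)$), whereas you certify the same bound via explicit dual multipliers $(\lambda,\mu)$, which is slightly cleaner and makes the appeal to the metric-compatibility inequality $\trd\le\srd+\beta_{\text{in}}$ (needed for $\lambda\ge 0$ in the $A\ge B$ case) explicit.
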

 \begin{proof}
 Define $x_1, x_2,$ and $x_3$ as in the proof of Lemma \ref{lem: graph max packing} and let $P_1, P_2,$ and $P_3$ partition $P$ into the universe elements counted by $x_1$, $x_2$, and $x_3$.  From the proof of Lemma \ref{lem: graph max packing}, we have that the cost of the $T$ is given by 
 \[
 \cost_{\Delta}(T) = \min(\trd, \td) x_1  + (\beta_{\text{in}} + \srd/2)x_2 + (\beta_{\text{in}} + \srd/3)x_3.
 \]
 From Lemma \ref{lem: graph max packing}, this is minimized for maximum $x_3$. Now, as stated in Lemma \ref{lem: graph max packing}, the sets in $\mathcal{S}$ corresponding to each Steiner point adjacent to three universe elements correspond to a packing of sets. Since we are in the soundness case,  we then have $x_3 \leq (1 - \varepsilon)n$. Since we are lower bounding the cost of $T$ and maximizing $x_3$ minimizes the cost of $T$, suppose  $x_3 = (1 - \varepsilon)n$. 
 
 Now we make use of $\delta$, describing how to construct a set cover $\mathcal{S}'$ from $T$. For each Steiner point $s \in T$, add $\Delta(s)$ to $\mathcal{S}'$. For each remaining uncovered universe element, add some set covering that element. This yields a set cover of size at most $x_1 + x_2/2 + x_3/3$. Since we are in the soundness case (and using $x_3 = (1-\epsilon)n$) we have that
 \[
 x_1 + x_2/2 + (1-\epsilon)(n/3) \geq (1+\delta)(n/3)
 \]
 and 
\[
x_1 + x_2 + (1-\epsilon)n = n.
\]
Then, this implies $(4\varepsilon  - 2\delta)/3 \cdot n \geq x_2$. Our bound on $\cost_{\Delta}(T)$ then follows by maximizing $x_2$ when $\beta_{\text{in}} + \srd/2 < \min(\trd, \td)$.
 \end{proof}
 \end{paragraph}

\section{\texorpdfstring{\apx}{APX}-hardness of  \texorpdfstring{\dst}{DST} in \texorpdfstring{$\ell_p$}{lp}-metrics} \label{sec: l_p}

In this section, we describe the embedding of the metric Steiner tree instances (described in the previous section), into $\R^n$ equipped with the $\ell_p$-metric for $p \in (1, \infty]$. The \apx-hardness when $p = 1$ follows easily from the known \apx-hardness of  \cst in $\ell_1$-metric or $\ell_0$-metric given in  \cite{Trevisan00} (or see Theorem~\ref{thm: vx cover corr} for a simplified reduction). 

\subsection{An Embedding into \texorpdfstring{$\ell_p$}{lp}-metric Spaces} \label{subsec: ell_p embedding}

For a given $p$, our corresponding instance of  \dst in $\ell_p$-metric will be as follows. For an instance  $([n],\mathcal{S})$ of $(\varepsilon,\delta)$-\setpT where $|\mathcal{S}|=m$, let $P = \{\mathbf{e_i} \,:\, i \in [n] \}$. Then, the set of terminals are defined to be $\tilde{P} = P \cup \{\mathbf{0}\}$ (with $r = \mathbf{0}$ the distinguished ``root'' terminal). Let $\pcoord$ be a scalar defined for each $p$. Then,  the set of facilities is defined to be 
\[
X = \{ \pcoord \cdot (\mathbf{e_i} + \mathbf{e_j} + \mathbf{e_k}) \,:\, \{i,j,k\} \in \mathcal{S} \}.
\]
We choose $\pcoord$ such that the following properties hold:
\begin{enumerate}
    \item $0 < \pcoord \leq 1/2$ and 
    \item $3 ((1-\pcoord)^p + 2\pcoord^p)^{1/p} + (3\pcoord^p)^{1/p} < 3.$
\end{enumerate}
Note that this is possible since, for
\[
f_p(x) = 3 ((1-x)^p + 2x^p)^{1/p} + (3x^p)^{1/p} - 3
\]
we have $f_p(0) = 0$ and $\frac{df_p(0)}{dx} = -3 + 3^{1/p} < 0$ for $p > 1$. This point configuration induces a complete graph $G = (\tilde{P} \cup X, E)$ with edge costs given by distances in the $\ell_p$-metric. 

Now, we check that the conditions of Theorem \ref{thm: graph hardness} hold. Clearly our choices of distances are metric compatible since they arise from a metric space. 
\begin{remark} \label{rmk: l_p distances}
Note that, for $p \in (1, \infty)$,
\begin{multicols}{2}
\begin{itemize}
    \item $\srd= 3^{1/p}\pcoord$,
    \item $\trd= 1$,
    \item $\beta_{\text{in}} = ((1-\pcoord)^p + 2\pcoord^p)^{\frac{1}{p}}$,
    \item $\beta_{\text{out}} = (1 + 3\pcoord^p)^{\frac{1}{p}}$,
    \item $\gamma_0 = 6^{1/p}\pcoord$,
    \item $\gamma_1 = 4^{1/p}\pcoord$,
    \item $\gamma_2 = 2^{1/p} \pcoord$, and 
    \item $\td = 2^{1/p}$.
\end{itemize}
\end{multicols}

When $p = \infty$, we instead have
\begin{itemize}
    \item $\srd = \gamma_0 = \gamma_1 = \gamma_2 = \pcoord$,
    \item $\trd = \beta_{\text{out}} = \td =  1$, and
    \item $\beta_{\text{in}} = 1-\pcoord$.
\end{itemize}
\end{remark}

\begin{proposition} \label{prop: l_p Steiner embed}
For each $p \in (1, \infty]$, the corresponding tuple $(\srd, \trd, \beta_{\text{in}}, \beta_{\text{out}}, \gamma_0, \gamma_1, \gamma_2, \td)$ from the above is Steiner embeddable.
\end{proposition}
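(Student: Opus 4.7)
The plan is to verify that the tuple in Remark~\ref{rmk: l_p distances} satisfies all four properties (P1)--(P4) of Definition~\ref{defn: Steiner embeddability}, with metric compatibility (Definition~\ref{defn: metric compat}) being automatic because the distances arise as genuine $\ell_p$-distances between concrete points in $\R^n$. I will handle $p \in (1,\infty)$ uniformly and then dispatch $p = \infty$ by direct substitution (everything either becomes trivial or reduces to $\pcoord \le 1/2$ and $\pcoord > 0$).

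I will dispose of (P2)--(P4) first, since they are short: (P2) $\trd \le \beta_{\text{out}}$ reduces to $1 \le (1+3\pcoord^p)^{1/p}$, which is immediate. Since $2^{1/p}\ge 1$ we have $\min(\trd,\td)=1$, so (P3), $\beta_{\text{in}}+\srd/3 < 1$, is exactly condition~2 after multiplying by $3$. For (P4) I will use the two coordinate-wise lower bounds $\beta_{\text{in}} \ge 1-\pcoord$ (from $\beta_{\text{in}}^p \ge (1-\pcoord)^p$) and $\gamma_2 = 2^{1/p}\pcoord \ge \pcoord$, which together give $\beta_{\text{in}}+\gamma_2 \ge 1 = \min(\trd,\td)$.

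The bulk of the work, and the main obstacle, is (P1), which packages seven inequalities $\srd \le X$. Five of them I expect to dispatch by direct $p$th-power comparisons: $\srd \le 3\gamma_2/2$ is $(3/2)^{1/p} \le 3/2$ (true for $p\ge 1$); $\srd \le \gamma_0,\gamma_1$ come from $3 \le 4,6$; $\srd \le \td$ reduces via $\pcoord\le 1/2$ to $(3/2)^{1/p}\le 2$; $\srd\le\beta_{\text{in}}$ follows from $\beta_{\text{in}}^p-\srd^p = (1-\pcoord)^p - \pcoord^p \ge 0$, where condition~1 $\pcoord\le 1/2$ is used; and $\srd \le \beta_{\text{out}}$ from $\beta_{\text{out}}^p - \srd^p = 1$.

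The subtle inequality is $\srd \le \trd$, i.e.\ $3^{1/p}\pcoord \le 1$, which neither condition on $\pcoord$ implies on its own. Here is where I combine them: from $\beta_{\text{in}}^p \ge 2\pcoord^p$ we get the sharper lower bound $\beta_{\text{in}} \ge 2^{1/p}\pcoord$, and inserting this into condition~2 gives $3\cdot 2^{1/p}\pcoord + 3^{1/p}\pcoord < 3$. Rearranging yields
\[
\srd \;=\; 3^{1/p}\pcoord \;<\; \frac{3\cdot 3^{1/p}}{3^{1/p} + 3\cdot 2^{1/p}} \;=\; \frac{3}{1 + 3(2/3)^{1/p}} \;\le\; 1,
\]
where the last step uses $(2/3)^{1/p} \ge 2/3$ for $p \ge 1$. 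This is the only place both conditions on $\pcoord$ are needed simultaneously, so I expect this to be the delicate step to highlight in the write-up.
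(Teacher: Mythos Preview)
Your proof is correct and follows essentially the same case-by-case verification as the paper. One simplification worth noting: for $\srd \le \trd$ the paper simply chains the inequality $\srd \le \beta_{\text{in}}$ (which you already established from $\pcoord \le 1/2$) with $\beta_{\text{in}} < \beta_{\text{in}} + \srd/3 < 1$ (condition~2 divided by $3$), bypassing your separate computation; incidentally, your own argument for this step uses only condition~2 (the bound $\beta_{\text{in}} \ge 2^{1/p}\pcoord$ holds for all $\pcoord \ge 0$), so your remark that this is ``the only place both conditions are needed simultaneously'' is slightly off.
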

\begin{proof}
For all $p$, the tuple is clearly metric compatible since it arises from a metric space. It clear that \ref{cond: sr prox}, \ref{cond: res s adj}, \ref{cond: s util}, and \ref{cond: ss' non adj} hold when $p = \infty$. Now we prove that these constraints hold when $p$ is finite. Note first 
that
\begin{align*}
3^{1/p}\pcoord &\leq ((1-\pcoord)^p + 2\pcoord^p)^{1/p} \\
&< ((1-\pcoord)^p + 2\pcoord^p)^{1/p} + 3^{1/p}\pcoord/3 < 1 \leq 2^{1/p}.
\end{align*}
The first inequality follows from $0 < \pcoord \leq 1/2$ and the last inequality follows from $3 ((1-\pcoord)^p + 2\pcoord^p)^{1/p} + (3\pcoord^p)^{1/p} < 3.$ This immediately implies $\srd \leq \trd, \beta_{\text{in}}, \td$. Since $0 < \pcoord \leq 1/2$, $\beta_{\text{in}} < \beta_{\text{out}}$, and $\srd \leq \gamma_1, \gamma_2$ is clear. Also observe that $_{st} \leq 3\gamma_2/2$ if and only if $(3/2)^{\frac{1}{p}} \leq 3/2$ which holds for all $p \geq 1$. This proves \ref{cond: sr prox} holds. The inequality $\trd \leq \beta_{\text{out}}$ is immediate (showing \ref{cond: res s adj}) and $\beta_{\text{in}} + \frac{\srd}{3} < \min(\trd, \td)$ holds by our choice of $\pcoord$ (showing \ref{cond: s util}). 

It then remains to show $\min(\trd, \td) \leq \beta_{\text{in}} + \gamma_2$, \ref{cond: ss' non adj}. But note that 
\[
1 = (1 - \pcoord) + \pcoord \leq \beta_{\text{in}} + \gamma_2,
\]
completing the proof.
\end{proof}

Then, applying Theorem \ref{thm: graph hardness} to Proposition \ref{prop: l_p Steiner embed}, we have the following.
\begin{theorem} \label{thm: hardness of lp dst}
Let $p \in (1, \infty)$. Suppose that 
\begin{enumerate}
    \item $0 < \pcoord \leq 1/2$,
    \item $3[(1-\pcoord)^p + 2\pcoord^p]^{\frac{1}{p}} + 3^{\frac{1}{p}}\pcoord < 3$,
    \item $(\varepsilon, \delta)$-\setpT is \np-hard.
\end{enumerate}
Then,  \dst in the  $\ell_p$-metric  is \np-hard to approximate within a factor of 
\[
(1 - \varepsilon) + \frac{(4\varepsilon  - 2\delta)}{3} \cdot \frac{\min(1, [(1-\pcoord)^p + 2\pcoord^p]^{\frac{1}{p}} +\frac{3^{1/p}}{2}\pcoord)}{[(1-\pcoord)^p + 2\pcoord^p]^{\frac{1}{p}} + 3^{\frac{1-p}{p}}\pcoord} + \frac{(2\delta  - \varepsilon)}{3} \cdot \frac {1}{[(1-\pcoord)^p + 2\pcoord^p]^{\frac{1}{p}} + 3^{\frac{1-p}{p}}\pcoord}> 1.
\]
In particular, for all $p \in (1,\infty)$, there exists $\pcoord$ such that the first two conditions above hold and Theorem~\ref{thm: triangle cov hardness} implies $(\varepsilon_0,\varepsilon_0/2)$-\setpT is \np-hard for some $\varepsilon_0>0$.
\end{theorem}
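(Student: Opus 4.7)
The plan is to instantiate the general hardness framework of Theorem~\ref{thm: graph hardness} with the specific $\ell_p$-metric embedding described at the start of Section~\ref{sec: l_p}. First, I would invoke Proposition~\ref{prop: l_p Steiner embed}, which already verifies that, under the two conditions $0 < \pcoord \leq 1/2$ and $3\beta_{\text{in}} + \srd < 3$, the distance tuple in Remark~\ref{rmk: l_p distances} is Steiner embeddable. Combined with the assumed \np-hardness of $(\varepsilon,\delta)$-\setpT, this immediately yields a gap-preserving reduction producing a \dst instance in $\ell_p$ whose completeness cost is $n(\srd/3 + \beta_{\text{in}})$ and whose soundness cost is at least the expression stated in Theorem~\ref{thm: graph hardness}.

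Next, I would simplify the soundness-to-completeness ratio by plugging in the explicit values from Remark~\ref{rmk: l_p distances}. The key simplification is that $\min(\trd, \td) = \min(1, 2^{1/p}) = 1$ for every $p \in (1,\infty)$, so the factor $\min(\trd,\td)/(\srd/3 + \beta_{\text{in}})$ becomes $1/([(1-\pcoord)^p + 2\pcoord^p]^{1/p} + 3^{(1-p)/p}\pcoord)$, and the $\min(\trd,\td,\beta_{\text{in}} + \srd/2)$ factor becomes $\min(1, [(1-\pcoord)^p + 2\pcoord^p]^{1/p} + \tfrac{3^{1/p}}{2}\pcoord)$. This recovers exactly the ratio stated in the theorem.

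The technical heart will be verifying that this ratio is strictly greater than $1$, i.e., that
\[
-\varepsilon + \frac{4\varepsilon - 2\delta}{3} \cdot \frac{d}{c} + \frac{2\delta - \varepsilon}{3} \cdot \frac{1}{c} > 0,
\]
where I abbreviate $c := \beta_{\text{in}} + \srd/3$ and $d := \min(1, \beta_{\text{in}} + \srd/2)$. The hypothesis $3\beta_{\text{in}} + \srd < 3$ rewrites exactly as $c < 1$, and since $\srd > 0$ we also have $d > c$. When $d = 1$ the inequality collapses, after clearing denominators, to $3\varepsilon > 3\varepsilon c$, which follows from $c<1$. When $d = \beta_{\text{in}} + \srd/2$, rearrangement gives the equivalent inequality $(\varepsilon - 2\delta)(\beta_{\text{in}} - 1) + (\varepsilon - \delta)\srd > 0$; using $\delta \in [\varepsilon/2, 2\varepsilon]$ (from Remark~\ref{rem:epsdelta}) and $\beta_{\text{in}} < 1$, I will check sign-wise that the boundary case $\delta = 2\varepsilon$ reduces precisely to $3 - 3\beta_{\text{in}} > \srd$, which is again the assumed constraint on $\pcoord$, and the intermediate values follow by linearity in $\delta$. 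This is the step I expect to require the most care.

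Finally, to justify the ``in particular'' clause, I would show that for every $p \in (1, \infty)$ some admissible $\pcoord$ exists by setting $f_p(x) := 3[(1-x)^p + 2x^p]^{1/p} + 3^{1/p} x - 3$ and noting $f_p(0) = 0$ with $f_p'(0) = 3^{1/p} - 3 < 0$ for $p > 1$, so small positive $\pcoord$ satisfies both conditions $0 < \pcoord \leq 1/2$ and $f_p(\pcoord) < 0$. Combined with the \np-hardness of $(\varepsilon_0, \varepsilon_0/2)$-\setpT from Theorem~\ref{thm: triangle cov hardness}, this completes the argument.
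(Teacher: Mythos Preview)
Your proposal is correct and follows essentially the same approach as the paper: invoke Proposition~\ref{prop: l_p Steiner embed} to verify Steiner embeddability, apply Theorem~\ref{thm: graph hardness}, substitute the $\ell_p$ distances from Remark~\ref{rmk: l_p distances} using $\min(\trd,\td)=1$, and appeal to the $f_p'(0)<0$ argument for the existence of a valid $\pcoord$. Your treatment actually goes beyond the paper's in that you explicitly verify the ``$>1$'' inequality via the case split on $d$ and the linear-in-$\delta$ argument with $\delta\in[\varepsilon/2,2\varepsilon]$; the paper simply asserts this as part of the theorem statement after saying ``applying Theorem~\ref{thm: graph hardness}, we have the following.''
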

\begin{remark} \label{rmk: role of eps and delta}
Curiously, we observe that the importance of $\varepsilon$ and $\delta$ depends dramatically on $p$. For $p \in (1, 1/\log_3(4/3))$, $\delta$ is sometimes irrelevant. This occurs when $\theta_p$ is chosen such that $1 < [(1-\pcoord)^p + 2\pcoord^p]^{\frac{1}{p}}$. Intuitively, this means that in the soundness case, the optimal Steiner tree not only corresponds to a maximal set packing (as proved in Lemma~\ref{lem: graph max packing}), but also that the terminals (universe elements) that are not yet covered by the packing, would like to connect by an edge directly with the root in the optimal Steiner tree, instead of having two terminals (universe elements) connecting to a common Steiner point (which is then connected to the root). This happens for $\theta_2 = 1/6$, for example. See Figure \ref{fig: eps delta depend}.
\end{remark}

By applying the Fr\'echet embedding to the hard instance in Theorem \ref{thm: graph hardness},  we have that only $\delta$ is relevant to the hardness of  \dst in $\ell_{\infty}$-metric (and not $\varepsilon$). 
In many other settings, it appears that both $\varepsilon$ and $\delta$ play a role with larger $\delta$ permitting better hardness of approximation. Below we prove the non-dependency of $\varepsilon$ for \dst in $\ell_\infty$-metric, without invoking the Fr\'echet embedding.

\begin{theorem} \label{thm: graph l_inf hardness}
If $(\varepsilon, \delta)$-\setpT is \np-hard, then \dst in $\ell_{\infty}$-metric is \np-hard to approximate within a factor of $1 + \delta/4$.
\end{theorem}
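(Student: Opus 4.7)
The plan is to invoke Theorem~\ref{thm: graph hardness} via the $\ell_\infty$ embedding already set up in Section~\ref{subsec: ell_p embedding}, specializing the scale parameter to $\pcoord = 1/2$ (the largest value allowed). Concretely, given an instance $([n],\mathcal{S})$ of $(\varepsilon,\delta)$-\setpT, I would take terminals $\tilde P = \{\mathbf e_i : i \in [n]\} \cup \{\mathbf 0\}$ and facilities $X = \{\tfrac{1}{2}(\mathbf e_i + \mathbf e_j + \mathbf e_k) : \{i,j,k\} \in \mathcal{S}\}$, with pairwise distances measured in $\ell_\infty$. By the $p=\infty$ case of Remark~\ref{rmk: l_p distances}, the resulting weighted complete graph realizes
\[
\srd = \beta_{\text{in}} = \gamma_0 = \gamma_1 = \gamma_2 = \tfrac{1}{2}, \qquad \trd = \beta_{\text{out}} = \td = 1.
\]

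Next, I would verify Steiner embeddability directly. Metric compatibility is automatic from the $\ell_\infty$ geometry. For the four additional properties: \ref{cond: sr prox} reduces to $\srd = 1/2 \leq 3/4 = 3\gamma_2/2$ together with $\srd = 1/2 \leq \min(\trd,\beta_{\text{in}},\beta_{\text{out}},\gamma_0,\gamma_1,\td)$, both immediate; \ref{cond: res s adj} becomes $\trd = 1 \leq 1 = \beta_{\text{out}}$; \ref{cond: s util} becomes $\beta_{\text{in}} + \srd/3 = 2/3 < 1 = \min(\trd,\td)$; and \ref{cond: ss' non adj} becomes $\min(\trd,\td) = 1 \leq 1 = \beta_{\text{in}} + \gamma_2$.

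Finally, I would substitute these values into Theorem~\ref{thm: graph hardness}. The completeness cost is $n(\srd/3+\beta_{\text{in}}) = 2n/3$. For the soundness cost, we compute $\srd/3+\beta_{\text{in}} = 2/3$, $\min(\trd,\td,\beta_{\text{in}}+\srd/2) = \min(1,1,3/4) = 3/4$, and $\min(\trd,\td) = 1$, so the multiplier of $2n/3$ telescopes to
\[
1 - \varepsilon + \frac{4\varepsilon-2\delta}{3}\cdot\frac{9}{8} + \frac{2\delta-\varepsilon}{3}\cdot\frac{3}{2} \;=\; 1 + \frac{\delta}{4},
\]
giving the claimed inapproximability ratio $1 + \delta/4$.

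I do not anticipate any real obstacle: the proof is essentially a specialization of the general framework. The only mild subtlety is the choice $\pcoord = 1/2$, which is precisely the setting that simultaneously achieves $\srd = \beta_{\text{in}}$ and $\trd = \srd + \beta_{\text{in}}$ in the $\ell_\infty$ metric; as observed in Corollary~\ref{cor: general metric useless?}, this is the regime that saturates the $1 + \delta/4$ bound. Any smaller $\pcoord$ would yield a strictly smaller coefficient on $\delta$.
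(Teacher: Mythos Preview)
Your proof is correct and follows essentially the same route as the paper: set $\theta_\infty = 1/2$, read off the $\ell_\infty$ distances from Remark~\ref{rmk: l_p distances}, verify Steiner embeddability, and plug into Theorem~\ref{thm: graph hardness} to obtain the ratio $1 + \delta/4$. The paper's own proof is slightly terser, appealing to Proposition~\ref{prop: l_p Steiner embed} for embeddability and to the computation already carried out in Corollary~\ref{cor: general metric useless?} rather than redoing the arithmetic, but the substance is identical.
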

\begin{proof}
Note that $\theta_{\infty} = 1/2$ satisfies the required properties for Proposition \ref{prop: l_p Steiner embed}. Then, from Corollary \ref{cor: general metric useless?}, it suffices to simply check that, for $p = \infty$, $\beta_{\text{in}} + \srd/2 = \min(\trd, \td, \beta_{\text{in}} + \srd/2)$, $\min(\trd, \td) = \srd + \beta_{\text{in}}$, and $\beta_{\text{in}} = \srd$. This is clear from observing the distances in Remark \ref{rmk: l_p distances}.
\end{proof}

\begin{corollary}
For all $p \in [1, \infty]$,  \dst in $\ell_p$-metric is \apx-hard.
\end{corollary}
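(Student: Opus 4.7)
My plan is to split on $p$, assembling three already-established pieces from the excerpt. For $p = 1$, I would invoke the known \apx-hardness of Trevisan \cite{Trevisan00} (either by applying Theorem~\ref{thm: reduction from cst to dst} to transfer hardness from \cst to \dst, or by appealing to the simplified reduction in Theorem~\ref{thm: vx cover corr} cited in the text at the top of Section~\ref{sec: l_p}). For $p = \infty$, I would directly apply Theorem~\ref{thm: graph l_inf hardness}, with the hypothesis of $(\varepsilon, \delta)$-\setpT hardness supplied by Theorem~\ref{thm: triangle cov hardness} at $\delta = \varepsilon_0/2$; this produces inapproximability factor $1 + \varepsilon_0/8 > 1$.

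The remaining range $p \in (1, \infty)$ is the substance and is essentially the content of Theorem~\ref{thm: hardness of lp dst}. That theorem explicitly asserts that for every such $p$ there exists a valid $\pcoord$; the underlying reason is that the function $f_p(x) := 3[(1-x)^p + 2x^p]^{1/p} + 3^{1/p} x - 3$ satisfies $f_p(0) = 0$ together with $f_p'(0) = 3^{1/p} - 3 < 0$ whenever $p > 1$, so $f_p(\pcoord) < 0$ holds for all sufficiently small $\pcoord \in (0, 1/2]$. Condition~3 of the theorem is then met via Theorem~\ref{thm: triangle cov hardness}.

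The only non-mechanical step — and the one I would flag as the main (minor) obstacle — is verifying that the inapproximability factor promised by Theorem~\ref{thm: hardness of lp dst} is strictly greater than $1$ under these choices. Setting $\delta = \varepsilon_0/2$ kills the coefficient $(2\delta - \varepsilon_0)/3$, reducing the factor to
\[
(1 - \varepsilon_0) + \varepsilon_0 \cdot \frac{\min\!\left(1,\ \beta_{\text{in}} + \srd/2\right)}{\beta_{\text{in}} + \srd/3},
\]
where $\beta_{\text{in}} = [(1-\pcoord)^p + 2\pcoord^p]^{1/p}$ and $\srd = 3^{1/p}\pcoord$. Condition~2 of Theorem~\ref{thm: hardness of lp dst} rewrites exactly as $\beta_{\text{in}} + \srd/3 < 1$, and since $\srd > 0$ we also have $\beta_{\text{in}} + \srd/2 > \beta_{\text{in}} + \srd/3$. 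Hence the ratio strictly exceeds $1$ in either branch of the $\min$, so does the entire factor, and \dst in the $\ell_p$-metric is \np-hard to approximate to within a constant bounded away from $1$. Together with the two boundary cases, this completes the proof for all $p \in [1, \infty]$.
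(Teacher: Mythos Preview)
Your proof is correct and follows essentially the same three-case split as the paper: $p=1$ via Trevisan (or Theorem~\ref{thm: vx cover corr}), $p\in(1,\infty)$ via Theorem~\ref{thm: hardness of lp dst}, and $p=\infty$ via Theorem~\ref{thm: graph l_inf hardness}. Your additional verification that the inapproximability factor strictly exceeds $1$ is already asserted in the statement of Theorem~\ref{thm: hardness of lp dst}, but spelling it out does no harm.
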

\begin{proof}
The case of \dst in $\ell_1$-metric follows from the known \apx-hardness of  \cst in $\ell_1$-metric or $\ell_0$-metric given in  \cite{Trevisan00} (or see Theorem~\ref{thm: vx cover corr}).   The cases of $p \in (1, \infty)$ and $p = \infty$ follow from Theorems \ref{thm: hardness of lp dst} and \ref{thm: graph l_inf hardness}, respectively.
\end{proof}

\subsection{Hardness of Approximating  \texorpdfstring{\dst}{DST} in \texorpdfstring{$\ell_p$}{lp}-metrics}\label{sec:figureexp}
Optimizing for the choice of $\pcoord$ in the construction from Proposition \ref{prop: l_p Steiner embed}, we have the following.
\begin{corollary} \label{cor: l_p tight hardness}
If $(\varepsilon, \delta)$-\setpT is \np-hard, then for $p \in [\frac{1}{\log_3(4/3)}, \infty]$,  \dst in $\ell_p$-metric is \np-hard to approximate within a
factor
\[ 
1+ \varepsilon \left(\frac{1}{2}-\frac{1}{2\cdot3^{1/p}} \right) + 2\delta \left(\frac{1}{2\cdot3^{1/p}}-\frac{3}{8} \right).
\]
In particular, we have  \dst in $\ell_p$-metric is \np-hard to approximate within a factor of $1 + \varepsilon/8$.
\end{corollary}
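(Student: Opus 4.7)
The plan is to obtain the claimed hardness by invoking Theorem~\ref{thm: hardness of lp dst} with the specific choice $\pcoord = 1/2$, which is feasible exactly in the regime $p \in [1/\log_3(4/3), \infty]$, and then just simplifying. First I would check feasibility. Condition $0 < \pcoord \leq 1/2$ is immediate. For condition $3[(1-\pcoord)^p + 2\pcoord^p]^{1/p} + 3^{1/p}\pcoord < 3$, substituting $\pcoord = 1/2$ collapses the left-hand side to $3 \cdot (3^{1/p}/2) + 3^{1/p}/2 = 2 \cdot 3^{1/p}$, and observing that $p \geq 1/\log_3(4/3)$ is equivalent to $3^{1/p} \leq 4/3$, we get $2 \cdot 3^{1/p} \leq 8/3 < 3$, as required.

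Next I would compute the basic quantities at $\pcoord = 1/2$ using Remark~\ref{rmk: l_p distances}: $\beta_{\text{in}} = [(1/2)^p + 2(1/2)^p]^{1/p} = 3^{1/p}/2$ and $\srd/3 = 3^{(1-p)/p}(1/2) = 3^{1/p}/6$, so the denominator in the hardness factor of Theorem~\ref{thm: hardness of lp dst} becomes $\beta_{\text{in}} + \srd/3 = 2 \cdot 3^{1/p}/3$, and the inner term of the min becomes $\beta_{\text{in}} + (3^{1/p}/2)\pcoord = 3 \cdot 3^{1/p}/4$. Since $3^{1/p} \leq 4/3$ in our regime, $3 \cdot 3^{1/p}/4 \leq 1$, so the min equals $3 \cdot 3^{1/p}/4$.

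Plugging these into the hardness factor of Theorem~\ref{thm: hardness of lp dst}, the first $\min$-containing ratio simplifies to $(3 \cdot 3^{1/p}/4)/(2 \cdot 3^{1/p}/3) = 9/8$ (independent of $p$), and the second ratio becomes $3/(2 \cdot 3^{1/p})$. Collecting the coefficients of $\varepsilon$ and $\delta$ in
\[
(1-\varepsilon) + \frac{4\varepsilon - 2\delta}{3} \cdot \frac{9}{8} + \frac{2\delta - \varepsilon}{3} \cdot \frac{3}{2 \cdot 3^{1/p}}
\]
gives exactly $1 + \varepsilon\bigl(\tfrac{1}{2} - \tfrac{1}{2 \cdot 3^{1/p}}\bigr) + 2\delta\bigl(\tfrac{1}{2 \cdot 3^{1/p}} - \tfrac{3}{8}\bigr)$. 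The boundary case $p = \infty$ is handled by interpreting $3^{1/p} = 1$ (and the resulting factor $1 + \delta/4$ matches Theorem~\ref{thm: graph l_inf hardness} directly). For the final ``in particular'' claim, I would invoke Theorem~\ref{thm: triangle cov hardness}, which says $(\varepsilon, \varepsilon/2)$-\setpT is \np-hard for some $\varepsilon > 0$; substituting $\delta = \varepsilon/2$ cancels all $3^{1/p}$-dependent terms and yields $1 + \varepsilon(1/2 - 3/8) = 1 + \varepsilon/8$.

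There is no real obstacle here, since Theorem~\ref{thm: hardness of lp dst} does the heavy lifting; the proof is purely a verification plus algebra. The only conceptual point worth flagging is that the threshold $p \geq 1/\log_3(4/3)$ arises in two independent ways that happen to coincide: it is precisely the regime where $\pcoord = 1/2$ satisfies condition 2 of Theorem~\ref{thm: hardness of lp dst}, and it is also precisely the regime where the $\delta$-coefficient $\tfrac{1}{2 \cdot 3^{1/p}} - \tfrac{3}{8}$ is nonnegative (vanishing at the left endpoint, where $3^{1/p} = 4/3$), so the bound is genuinely greater than $1$ throughout this range.
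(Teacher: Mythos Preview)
Your proposal is correct and follows essentially the same approach as the paper: set $\pcoord = 1/2$, verify feasibility, and compute directly from Theorem~\ref{thm: hardness of lp dst}. One minor imprecision in your closing remark: the threshold $p \geq 1/\log_3(4/3)$ is \emph{not} where condition~2 of Theorem~\ref{thm: hardness of lp dst} becomes tight (that happens at $p = 1/\log_3(3/2) \approx 2.71$, a weaker constraint); rather, it is precisely the threshold at which $\beta_{\text{in}} + \tfrac{3^{1/p}}{2}\pcoord = \tfrac{3}{4}\cdot 3^{1/p} \leq 1$, so that the $\min$ in the first ratio is achieved by the inner term---which you use correctly in the main computation.
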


\begin{proof}
The second part of the claim occurs when $\delta = \varepsilon/2$. In this case, the second fractional term in Theorem \ref{thm: hardness of lp dst} is nullfied. Then, observe that, for a fixed $p$, while 
\[
[(1-\pcoord)^p + 2\pcoord^p]^{\frac{1}{p}} +\frac{3^{1/p}}{2}\pcoord \leq 1,
\]
the hardness of approximation factor is maximized for maximal $\pcoord$. By our constraints on $\pcoord$, $\pcoord \leq 1/2$. When $\pcoord = 1/2$ is valid we have
\[
3[(1-1/2)^p + 2(1/2)^p]^{\frac{1}{p}} + 3^{\frac{1}{p}}(1/2) = 2\cdot3^{1/p} < 3.
\]
This then holds for  $\frac{1}{\log_3(3/2)} < p.$
Additionally, 
\[
[(1-\pcoord)^p + 2\pcoord^p]^{\frac{1}{p}} +\frac{3^{1/p}}{2}\pcoord \leq 1,
\]
holds for $\pcoord = 1/2$ if and only if  $3^{1/p} \cdot \frac{3}{4} \leq 1$ so $p \geq \frac{1}{\log_3(4/3)}.$ Hence, for $p \in [\frac{1}{\log_3(4/3)}, \infty]$, the embedding with $\pcoord = 1/2$ yields that   \dst in $\ell_p$-metric is hard to approximate within a factor of $1 + \varepsilon/8$.

If $(\varepsilon, \delta)$-\setpT is \np-hard for $\delta > \varepsilon/2$, we get improved hardness for $p \in [\frac{1}{\log_3(4/3)}, \infty]$ by using $\theta_p = 1/2$. Computing yields the claimed expression.
\end{proof}

When $\delta > \varepsilon/2$, optimizing the hardness exactly is more challenging. This is because, for $p$ finite, the maximum value of 
\[
\frac{1}{[(1-\pcoord)^p + 2\pcoord^p]^{\frac{1}{p}}}
\]
arises for $\theta_p < 1/2$, so the optimal value depends intimately on $\delta$. Likewise, when $p \in (1, \frac{1}{\log_3(4/3)})$, we need $\pcoord < 1/2$. At that point even finding a permissible choice of $\pcoord$ requires an explicit computation. 

We summarize the dependence of the hardness of  \dst in $\ell_p$-metric  on $\varepsilon$ and $\delta$ (and therefore the dependence on the hardness of Set Packing and Set Cover) in Figure \ref{fig: eps delta depend}. 
The purple region, $p \in \{1, \infty\}$, is the range where there is no dependency on $\varepsilon$, and hence the reduction only uses the hardness of set packing (see Corollary \ref{cor: general metric useless?} and Theorem \ref{thm: vx cover corr}). In the red range, $[1/(\log_3(4/3), \infty)$, we show in Corollary \ref{cor: l_p tight hardness} a dependency on both $\varepsilon$ and $\delta$.

The behavior in the yellow range is less clear. For $p$ such that the optimal choice of $\theta_p$ satisfies
\[
[(1-\pcoord)^p + 2\pcoord^p]^{\frac{1}{p}} +\frac{3^{1/p}}{2}\pcoord < 1,
\]
the hardness depends both on $\delta$ and $\varepsilon$; larger $\delta$ will yield increased hardness of approximation. This is, however, not the case for all values of $p$ in the yellow region. The choice of $p=2$ is a notable exception (see Corollary \ref{cor: Euc hardness}). In the green region there is no dependence on $\delta$. This is because the optimal choice of $\theta_p$ satisfies
\[
[(1-\pcoord)^p + 2\pcoord^p]^{\frac{1}{p}} +\frac{3^{1/p}}{2}\pcoord \geq 1.
\]

~\section{Reduction from \texorpdfstring{\cst}{CST} to \texorpdfstring{\dst}{DST}} \label{sec: reduction from cst to dst}
In this section, we link \cst to \dst, proving that \dst is computationally harder than \cst. The sense in which this holds is formalized in Theorem \ref{thm: reduction from cst to dst}.
\begin{theorem} \label{thm: reduction from cst to dst}
Let $(\mathcal{X}, d)$ be a metric space such that there exists an algorithm for \cst in $(\mathcal{X}, d)$ that runs in $O(f(n))$ time for some computable function $f$ and $n$ the number of input terminals. Let $P$ be an instance of \cst in $(\mathcal{X}, d)$, and let $T^*$ be an optimal Steiner tree of $P$. Then, for any $\varepsilon > 0$, there exists a $\poly(n, \varepsilon)$-time algorithm outputting $X \subset \mathcal{X}$  with the following properties.
\begin{enumerate}
    \item $|X| = \poly(n, \varepsilon).$
    \item If $T$ is the optimal Steiner tree on \dst instance $(P, X)$, then 
    \[
    \cost(T^*) \leq \cost(T) \leq (1 + \varepsilon)\cost(T^*).
    \]
\end{enumerate}
\end{theorem}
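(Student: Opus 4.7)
The plan is to leverage the structural result of Bartal and Gottlieb \cite{Bartal_Gottlieb_2021} alluded to in the introduction: for every $\varepsilon > 0$, there is a constant $k = k(\varepsilon)$ such that every terminal set $P$ admits a $(1+\varepsilon)$-approximate Steiner tree $T'$ which decomposes into \emph{components}, each being an optimal Steiner tree on some subset of at most $k$ terminals of $P$, with distinct components linked only by edges whose endpoints are both terminals. The task then reduces to efficiently enumerating a candidate set $X \subseteq \mathcal{X}$ of Steiner points rich enough to realize, for every terminal subset $Q \subseteq P$ of size at most $k$, the Steiner points of an optimal Steiner tree of $Q$.

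Given this structural result, the construction of $X$ is straightforward. Compute $k = k(\varepsilon)$ and enumerate all subsets $Q \subseteq P$ with $|Q| \le k$; there are $\binom{n}{k} = n^{O(k)} = \poly(n,\varepsilon)$ such subsets. For each $Q$, invoke the assumed CST oracle on $Q$ and add the Steiner points of the returned optimal tree into $X$. Since every CST call is on a constant-size instance, each takes time $f(k) = O_{\varepsilon}(1)$, so $X$ is built in $\poly(n,\varepsilon)$ time and satisfies $|X| \le k \cdot \binom{n}{k} = \poly(n,\varepsilon)$, establishing property~(1) of the theorem.

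For correctness, let $T$ denote an optimal DST tree on $(P, X)$. The lower bound $\cost(T^*) \le \cost(T)$ is immediate since any DST solution on $(P,X)$ is a valid Steiner tree in $(\mathcal{X},d)$, and $T^*$ is optimal for the latter. For the upper bound, take $T'$ as in the Bartal--Gottlieb decomposition, so $\cost(T') \le (1+\varepsilon)\cost(T^*)$. Each component of $T'$ is an optimal Steiner tree of some $Q$ with $|Q| \le k$, so by construction its Steiner points lie in $X$, while the edges linking distinct components use only terminals. Hence $T'$ is itself a valid Steiner tree for the DST instance $(P,X)$, yielding $\cost(T) \le \cost(T') \le (1+\varepsilon)\cost(T^*)$, which is property~(2). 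The main point of care is verifying that the Bartal--Gottlieb result has precisely the flavor required here: small, \emph{exactly} optimal components glued by terminal--terminal edges, with the component size $k$ depending only on $\varepsilon$ (and not on the metric, dimension, or $n$). Should their result only guarantee near-optimal small components, the argument still goes through by instantiating it at parameter $\varepsilon/2$ and absorbing the extra slack. Thus the reduction is essentially a black-box consequence of the structural decomposition of near-optimal Steiner trees, with the only nontrivial ingredient being that structural theorem itself.
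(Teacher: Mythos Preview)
Your proposal is correct and follows the same approach as the paper: invoke the Bartal--Gottlieb decomposition (the paper instantiates it at parameter $\varepsilon/2$ and carries an extra $(1+\varepsilon')$ slack for numerical precision, exactly as you anticipate in your final remark), enumerate all terminal subsets of size at most $C_{\varepsilon/2}$, run the \cst oracle on each, and collect the resulting Steiner points into $X$. One small imprecision worth noting: optimal Steiner trees need not be unique, so the component of $T'$ on a subset $Q$ may use different Steiner points than the tree your oracle computed for $Q$; the paper sidesteps this by comparing the optimal \dst cost on $(Q,X)$ to the \cst optimum on $Q$ directly, which is equivalent to swapping your computed tree in for each component at equal cost.
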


To prove this, we appeal to a powerful structural result about optimal Steiner trees.
\begin{theorem}[\cite{DuZhangFeng91, Borchers97, Bartal_Gottlieb_2021}] \label{thm: bg subtree thm}
Let $P$ be an instance of \cst in metric space $(\mathcal{X}, d)$ and let $T^*$ be an optimal Steiner tree of $P$. For any $\varepsilon > 0$, there exists constant $C_{\varepsilon}$ and Steiner tree $T'$ of $P$ with the following properties.
\begin{enumerate}
    \item $\cost(T^*) \leq \cost(T') \leq (1 + \varepsilon)\cost(T^*)$.
    \item $T'$ is formed by  partitioning $P$ into parts of size at most $C_{\varepsilon}$, finding optimal Steiner trees of each part, and connecting those Steiner trees via edges between their respective terminals.
\end{enumerate}
\end{theorem}
This follows from Theorem 3.1 of \cite{DuZhangFeng91} with $C_{\varepsilon} \leq 2^{1 + \varepsilon^{-1}}$. The best possible $C_\varepsilon$ can be derived from Theorem 3.2 of \cite{Borchers97}. This result can also be derived from the proof of Theorem 3.2 of \cite{Bartal_Gottlieb_2021} (with somewhat weaker $C_{\varepsilon} \leq 2^{32/\varepsilon \ln (8/\varepsilon)}$).

\begin{proof}[Proof of Theorem \ref{thm: reduction from cst to dst}.]
Consider every subset of $P$ of size at most $C_{\varepsilon/2}$ as an instance of \cst in $(\mathcal{X}, d)$. There are $\poly(n, \varepsilon)$ such instances and optimal Steiner trees (optimal up to an arbitrarily small factor $1 + \varepsilon'$ due to numerical precision) can be computed in constant time (in $\varepsilon$ and $\varepsilon'$). Let $X$ be the union of the Steiner points used in the Steiner tree solutions for these instances. Note that $|X| = \poly(n)$ since Steiner trees on $n$ terminals may be assumed to use at most $n-2$ Steiner points (this follows by the triangle inequality and a degree counting argument—see Section 3.4 of \cite{Gilbert_Pollak_1968}, for example). Let $T'$ be the near-optimal Steiner tree from  Theorem \ref{thm: bg subtree thm}. Let $Q$ be some set of terminals in the partition of terminals induced by $T'$. Then, the optimal Steiner tree of \dst instance $(Q, X)$ costs at most a $(1 + \varepsilon')$ factor more than the cost of the optimal Steiner tree for \cst instance $Q$. Then, since this holds for each part in the partition of terminals induced by $T'$ from Theorem \ref{thm: bg subtree thm}, if $T$ is the optimal Steiner tree on \dst instance $(P, X)$, then $\cost(T) \leq (1 + \varepsilon') \cost(T') \leq (1 + \varepsilon') (1 + \varepsilon/2)  \cost(T^*)$. Since $\varepsilon'$ can be made arbitrarily small, this implies the desired result.
\end{proof}

In particular, Theorem \ref{thm: reduction from cst to dst} holds for all $\ell_p$-metric spaces and in any dimension. This shows that \cst is essentially no harder than \dst in those spaces.
\begin{corollary} \label{cor: reduction from CST to DST for l_p metrics}
For all $\varepsilon > 0$ and $p \in [1, \infty]$, if there exists a polynomial time $(1 + \alpha)$-approximation algorithm for \dst in the $\ell_p$-metric, then there exists a polynomial time $(1 + \alpha + \varepsilon)$-approximation algorithm for \cst in the $\ell_p$-metric.
\end{corollary}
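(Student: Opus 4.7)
The plan is to obtain the corollary as an essentially immediate consequence of Theorem~\ref{thm: reduction from cst to dst}, by composing the hypothetical $(1+\alpha)$-approximation for \dst with the polynomial-time construction of candidate Steiner points produced there. Concretely, given a \cst instance $P$ in the $\ell_p$-metric, I would first fix a sufficiently small auxiliary parameter $\varepsilon' > 0$ (whose precise value is chosen at the end of the argument as a function of $\alpha$ and $\varepsilon$) and invoke Theorem~\ref{thm: reduction from cst to dst} with this $\varepsilon'$ to produce, in time $\poly(n, \varepsilon')$, a set $X \subset \mathcal{X}$ with $|X| = \poly(n, \varepsilon')$ such that the optimal Steiner tree $T$ on the \dst instance $(P, X)$ satisfies $\cost(T^*) \le \cost(T) \le (1+\varepsilon')\cost(T^*)$, where $T^*$ is an optimal \cst solution on $P$.

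Next, I would run the assumed polynomial-time $(1+\alpha)$-approximation algorithm for \dst in the $\ell_p$-metric on the instance $(P, X)$. Since $|X| = \poly(n, \varepsilon')$, the input size is polynomial, so this step also runs in polynomial time. Let $\tilde T$ denote the output tree. By the approximation guarantee and the bound from Theorem~\ref{thm: reduction from cst to dst},
\[
\cost(\tilde T) \;\le\; (1+\alpha)\cost(T) \;\le\; (1+\alpha)(1+\varepsilon')\cost(T^*).
\]
Note that $\tilde T$ is, in particular, a valid Steiner tree for the original \cst instance $P$ (its Steiner points lie in $X \subset \mathcal{X}$), so this bound is a bona fide \cst approximation guarantee.

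It then remains to choose $\varepsilon'$ so that $(1+\alpha)(1+\varepsilon') \le 1+\alpha+\varepsilon$, i.e., $\varepsilon' \le \varepsilon/(1+\alpha)$. Any such positive choice works, and it depends only on the fixed constants $\alpha$ and $\varepsilon$, so all polynomial dependencies on $\varepsilon'$ remain polynomial in $n$. Combining, the overall algorithm is polynomial-time and produces a $(1+\alpha+\varepsilon)$-approximation for \cst in the $\ell_p$-metric, as required.

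There is no real obstacle here beyond the bookkeeping of $\varepsilon'$: all the heavy lifting is in Theorem~\ref{thm: reduction from cst to dst} (which in turn relies on the Bartal--Gottlieb structural theorem and the existence of a \ptas for \cst in low-dimensional $\ell_p$-space, used as a subroutine to compute near-optimal Steiner trees on each of the $\poly(n)$ small subsets of terminals). The only point worth stating explicitly in the final write-up is that Theorem~\ref{thm: reduction from cst to dst} applies to each $\ell_p$-metric, $p \in [1,\infty]$, which follows since Arora-style \ptas's (or their $\ell_\infty$ counterparts) exist for constant-dimensional \cst in every $\ell_p$-metric, making the per-subset subroutine available in the hypotheses of the theorem.
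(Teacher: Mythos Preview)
Your proposal is correct and follows essentially the same approach as the paper: verify that Theorem~\ref{thm: reduction from cst to dst} applies to $\ell_p$-metrics (via Arora's \ptas on the low-dimensional span of the terminals), then compose the resulting candidate Steiner point set with the assumed \dst approximation and absorb the multiplicative $(1+\varepsilon')$ loss into the additive $\varepsilon$. The paper's write-up is terser---it simply checks the $O(f(n))$-time hypothesis of Theorem~\ref{thm: reduction from cst to dst} and leaves the composition implicit---but the content is the same.
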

\begin{proof}
From Theorem \ref{thm: reduction from cst to dst}, it suffices to show that \cst in the $\ell_p$-metric admits an algorithm that runs in $O(f(n))$ time for some computable function $f$ and $n$ the number of input terminals. Given a set of terminals $P$ with $|P| = n$, consider the subspace of dimension at most $n$ formed by those points. Then, the approximation scheme of Arora (see 1.1.1 of \cite{Arora_1998}) runs in $O(f(n))$ time for a computable function $f$, yielding the result.
\end{proof}

Theorem \ref{thm: reduction from cst to dst} and Corollary \ref{cor: reduction from CST to DST for l_p metrics} give a construction for efficient approximation algorithms of \cst using the literature on \dst.
\begin{corollary}\label{cor:alg}
For all $\varepsilon > 0$ and $p \in [1, \infty]$,  \cst in the $\ell_p$-metric admits a polynomial time $(\ln(4) + \varepsilon)-$approximation algorithm.
\end{corollary}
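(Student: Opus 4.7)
The plan is to combine two ingredients that are already in hand: (a) the approximation algorithm of Byrka, Grandoni, Rothvo{\ss}, and Sanit\'a \cite{byrka2013steiner}, which gives a polynomial-time $(\ln(4)+\varepsilon')$-approximation for \dst in general metrics (and hence in every $\ell_p$-metric, since the metric space on the input terminals and facilities can be treated as a general metric), and (b) Corollary~\ref{cor: reduction from CST to DST for l_p metrics}, which transfers any polynomial-time approximation algorithm for \dst in the $\ell_p$-metric to one for \cst in the same metric at the cost of an arbitrarily small additive loss in the ratio.

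Concretely, given $\varepsilon>0$ and $p\in[1,\infty]$, I would set $\varepsilon'=\varepsilon/2$ and first invoke the Byrka et al.\ algorithm to obtain a polynomial-time $(1+\alpha)$-approximation for \dst in the $\ell_p$-metric with $1+\alpha=\ln(4)+\varepsilon/2$. Then I would apply Corollary~\ref{cor: reduction from CST to DST for l_p metrics} with error parameter $\varepsilon/2$: given a \cst instance $P$ in the $\ell_p$-metric, use Theorem~\ref{thm: reduction from cst to dst} in polynomial time to construct a facility set $X\subseteq\mathcal{X}$ of polynomial size such that the optimal \dst solution on $(P,X)$ has cost at most $(1+\varepsilon/2)\cost(T^{*})$, where $T^{*}$ is an optimal Steiner tree of $P$. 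Running the $(1+\alpha)$-approximation for \dst on $(P,X)$ then yields, in polynomial time, a Steiner tree of $P$ of cost at most $(\ln(4)+\varepsilon/2)(1+\varepsilon/2)\cost(T^{*})\leq (\ln(4)+\varepsilon)\cost(T^{*})$ for $\varepsilon$ small enough, which is the desired guarantee.

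There is essentially no main obstacle: the statement is a direct two-line composition of Corollary~\ref{cor: reduction from CST to DST for l_p metrics} with the best-known \dst approximation. The only thing to be careful about is that the reduction of Theorem~\ref{thm: reduction from cst to dst} requires the existence of an exact (or $(1+\varepsilon')$-approximate) algorithm for \cst on constant-sized instances; this is supplied, as in the proof of Corollary~\ref{cor: reduction from CST to DST for l_p metrics}, by running Arora's scheme on each $O(C_{\varepsilon})$-size subset embedded in its own low-dimensional affine span, so the overall running time remains polynomial in $n$ for any fixed $\varepsilon$.
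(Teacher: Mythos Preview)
Your proposal is correct and follows essentially the same approach as the paper's own proof: construct a \dst instance via Theorem~\ref{thm: reduction from cst to dst}/Corollary~\ref{cor: reduction from CST to DST for l_p metrics}, then apply the $(\ln(4)+\varepsilon')$-approximation of Byrka et al.\ with $\varepsilon'$ chosen appropriately small. Your write-up is in fact slightly more explicit about the $\varepsilon$ bookkeeping than the paper's two-line version.
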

\begin{proof}
Run the algorithm implicit in \ref{thm: reduction from cst to dst} and Corollary \ref{cor: reduction from CST to DST for l_p metrics} to compute a \dst instance. Then, apply the $(\ln(4) + \varepsilon')-$approximation algorithm for \dst from \cite{byrka2017improved} and choose $\varepsilon'$ to be appropriately small. 
\end{proof}

 \section*{Acknowledgement}
We would like to thank Fabrizio Grandoni for discussion about \dst.
\newpage

\printbibliography
\newpage

\appendix
\section{Explicit Gaps for \texorpdfstring{\dst}{DST} in Specific \texorpdfstring{$\ell_p$}{lp}-metrics}\label{sec:explicit}
In this section, we compute explicit hardness of approximation constants by plugging in the best inapproximability results known for \setpT and \vcov in literature.
\subsection{Hardness of Approximation Bounds for Euclidean \texorpdfstring{\dst}{DST}}
As the Euclidean metric is of particular importance, we explicitly state our resultant hardness factor for $p = 2$. In this special case, we can make do with a somewhat weaker completeness case for \setpT than that in $(\varepsilon, \delta)$-\setpT. Instead we consider the decision problem $(a,b)$-Gap \setpT 
\begin{definition}[$(a,b)$-Gap \setpT]
Given a set system $([n], \mathcal{S})$ where  for all $S \in \mathcal{S}$, we have $|S| = 3$, and $|\mathcal{S}| = m$, the  $(a,b)$-Gap \setpT is  the problem of deciding which of the following cases hold.
\begin{itemize}
    \item \textit{Completeness:} There exists a subcollection of $\mathcal{S}$ which are pairwise disjoint and cover at least $(1-a)n$ elements in $[n]$.
    \item \textit{Soundness:} Any subcollection of pairwise disjoint sets of $\mathcal{S}$ cover at most $(1-b)n$ elements in $[n]$.
\end{itemize}
\end{definition}
Observe that $(\varepsilon, \varepsilon/2)$-\setpT is actually $(1, 1-\varepsilon)$-Gap \setpT since $\delta$ is unnecessary in this formulation. Then, we have the following.
\begin{corollary} \label{cor: Euc hardness}
Let $\varepsilon>0$. If  $(1, 1-\varepsilon)$-Gap \setpT is \np-hard then 
Euclidean \dst is \np-hard to approximate within a factor of
\[
1 + \frac{\varepsilon(9\sqrt{3} -15)}{15} \geq 1 + 0.039\varepsilon.
\]
More generally, if $(a, b)$-Gap \setpT is \np-hard, then Euclidean \dst is \np-hard to approximate within a factor of
\[
\frac{b(5\sqrt{3}/9 - 1) + 1}{a(5\sqrt{3}/9 - 1) + 1}.
\]
\end{corollary}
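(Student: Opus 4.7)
My plan is to specialize the general reduction from Section~\ref{sec: l_p} to $p = 2$ with the particular choice $\pcoord = 1/6$, and then revisit the completeness and soundness accounting to accommodate the weaker $(a,b)$-Gap \setpT formulation (in which only the maximum packing size is controlled, not the minimum set cover size).

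First I would verify that $\pcoord = 1/6$ satisfies the hypotheses of Theorem~\ref{thm: hardness of lp dst}: evidently $0 < 1/6 \leq 1/2$, and $3\sqrt{(5/6)^2 + 2(1/6)^2} + \sqrt{3}/6 = 3\cdot(\sqrt{3}/2) + \sqrt{3}/6 = 5\sqrt{3}/3 < 3$, so Proposition~\ref{prop: l_p Steiner embed} yields a Steiner embeddable tuple. From Remark~\ref{rmk: l_p distances} I extract $\srd = \sqrt{3}/6$, $\beta_{\text{in}} = \sqrt{3}/2$, $\trd = 1$, $\td = \sqrt{2}$, so that the basic cost unit $\srd/3 + \beta_{\text{in}}$ evaluates to $5\sqrt{3}/9$. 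The key observation is that $\beta_{\text{in}} + \srd/2 = 7\sqrt{3}/12 > 1 = \min(\trd, \td)$, which as noted in Remark~\ref{rmk: role of eps and delta} makes the minimum-set-cover parameter $\delta$ irrelevant, so the hardness only depends on the fraction of $[n]$ coverable by a packing---precisely the content of $(a,b)$-Gap \setpT.

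The next step is to rerun the cost decomposition from the proof of Theorem~\ref{thm: graph hardness}, in particular Lemma~\ref{lem: graph max packing}. Any minimum Steiner tree on the constructed \dst instance partitions its $n$ non-root terminals into three groups: $x_3$ elements attached to a Steiner point adjacent to three terminals, $x_2$ elements attached to a Steiner point adjacent to two terminals, and $x_1$ elements connected directly to the root. The total cost equals
\[
x_1 \cdot \min(\trd, \td) + x_2 \cdot (\beta_{\text{in}} + \srd/2) + x_3 \cdot (\beta_{\text{in}} + \srd/3) = x_1 + (7\sqrt{3}/12)x_2 + (5\sqrt{3}/9)x_3.
\]
Since $5\sqrt{3}/9 < 1 < 7\sqrt{3}/12$, the optimum maximizes $x_3$ subject to the packing structure of $\mathcal{S}$ and sets $x_2 = 0$; so if the largest packing in $\mathcal{S}$ covers an $\alpha$-fraction of $[n]$, then the optimal Steiner tree has cost exactly $n[(1-\alpha) + \alpha\cdot 5\sqrt{3}/9] = n[\alpha(5\sqrt{3}/9 - 1) + 1]$.

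Plugging in the $(a,b)$-Gap \setpT guarantees (the packing covers an $a$-fraction in completeness and at most a $b$-fraction in soundness) the ratio of the soundness cost over the completeness cost gives the claimed bound $\frac{b(5\sqrt{3}/9 - 1) + 1}{a(5\sqrt{3}/9 - 1) + 1}$, and specializing to $(a,b) = (1, 1-\varepsilon)$ and rationalizing yields $1 + \varepsilon(9\sqrt{3} - 15)/15 \geq 1 + 0.039\varepsilon$. The main subtlety I expect is re-deriving the structural lemmas of Theorem~\ref{thm: graph hardness} without the full-partition completeness assumption of $(\varepsilon,\delta)$-\setpT, but since those lemmas concern any minimum Steiner tree of the constructed \dst instance and do not depend on the completeness promise, they transfer directly to the gap setting.
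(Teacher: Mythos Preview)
Your proposal is correct and follows essentially the same approach as the paper's own proof: specialize to $p=2$ with $\pcoord=1/6$, verify the Steiner-embeddability hypotheses, observe that $\beta_{\text{in}}+\srd/2=7\sqrt{3}/12>1$ so the $\delta$ parameter drops out, and then read off the completeness and soundness costs from the structural analysis of Theorem~\ref{thm: graph hardness}. The only cosmetic difference is that the paper derives the first inequality by plugging directly into the formula of Theorem~\ref{thm: hardness of lp dst}, whereas you obtain both parts uniformly from the exact cost expression $n[\alpha(5\sqrt{3}/9-1)+1]$; your remark that the soundness lemmas are statements about any optimal tree of the constructed instance (and hence carry over unchanged to the $(a,b)$-Gap setting) is exactly the point the paper makes when it says the soundness case is ``exactly identical.''
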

\begin{proof}
For $p = 2$, setting $\theta_2 = 1/6$ turns out to yield the maximum hardness of approximation for embeddings of this kind (this can be verified graphically). Note that $\theta_2 = 1/6$ satisfies the requirements of Theorem \ref{thm: hardness of lp dst}: the first requirement is clear and the second follows from $3\sqrt{3}/2 + \sqrt{3}/6 = \sqrt{3}(5/3) < 3$. Then, substitute into Theorem \ref{thm: hardness of lp dst} and observe that
\[
1 < \beta_{\text{in}} + \srd/2 = \sqrt{3}/2 + \sqrt{3}/12 = 7\sqrt{3}/12,
\]
as noted in Remark \ref{rmk: role of eps and delta}. This yields hardness of approximation within a factor of 
\[
1 - \varepsilon + \frac{\varepsilon}{\sqrt{3}/2 + \sqrt{3}/18} = 1 + \frac{9\sqrt{3} -15}{15} \cdot \varepsilon,
\]
proving the first part of the result.

The second part of the proof follows almost identically. We repeat almost the same reduction as in Theorem \ref{thm: graph hardness}: The soundness case is exactly identical and the resultant Steiner tree then has cost at least 
\[
b(\sqrt{3}/2 + \sqrt{3}/18) + 1 - b = b(5\sqrt{3}/9 -1) + 1.
\]
In the completeness case, we no longer have that the sets partition the universe. Let $C$ be a maximum packing of sets. For each $\{i,j,k\} \in C$, connect $\mathbf{e_i}$, $\mathbf{e_j}$, and $\mathbf{e_k}$ to $s = \Gamma^{-1}(\{i,j,k\})$  (following the notation of Section \ref{sec:metricspace}).  Then connect $s$ to $\mathbf{0}$. For the terminals corresponding to universe elements that are not covered by the sets in $C$, we connect those terminals directly to $\mathbf{0}$. This incurs costs $a(5\sqrt{3}/9 - 1) + 1$. Taking the ratio in the two cases then yields the desired result.
\end{proof}

The below hardness of $(a,b)$-\setpT is known.
\begin{theorem}[\cite{Chlebik_Chlebikova_2006}] \label{thm: IS hardness}
We have $(0.979,0.969)$-Gap \setpT is \np-hard.
\end{theorem}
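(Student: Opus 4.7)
My plan is to obtain this via a direct gap-preserving reduction from Max Independent Set on cubic graphs, for which Chlebik and Chlebikova themselves give quantitatively tight inapproximability built ultimately on H{\aa}stad-style PCP constructions (through Max-E3-LIN-2 and bounded-occurrence Max-3-SAT). For our purposes we need an \np-hard gap between cubic instances whose maximum independent set has size at least $\alpha n$ versus at most $\beta n$, for explicit constants $\alpha > \beta$.

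Given such a hard cubic instance $G = (V, E)$, I would build a 3-uniform hypergraph on a universe $U$ as follows. For each vertex $v \in V$, introduce a private ``vertex token'' $x_v$, and for each edge $e \in E$ a shared ``edge token'' $y_e$. The collection $\mathcal{S}$ contains, for each $v$, a 3-element set $S_v = \{x_v, y_{e_1}, y_{e_2}\}$ where $e_1, e_2$ are two edges incident to $v$ (padding with fresh private dummies when $\deg(v) < 2$). Then $S_v \cap S_u = \emptyset$ precisely when $u$ and $v$ are non-adjacent in $G$, so a subcollection of pairwise disjoint sets corresponds bijectively to an independent set in $G$, and the fraction of $U$ covered by such a subcollection is linear in the size of that independent set.

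Transporting the cubic Max IS gap through this reduction yields a $(a, b)$-Gap \setpT instance, with $a$ and $b$ controlled by $\alpha$, $\beta$, and the ratio $|V|/|U|$. The plan is then to tune the gadget --- choice of shared tokens, number of dummy elements, degree-balancing preprocessing, and possibly replacement of the raw cubic reduction by one that also imposes near-perfect coverage in the completeness case --- so that these parameters land at the precise targets $a = 0.979$ and $b = 0.969$.

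The main obstacle is this tuning step: generic textbook reductions accumulate unavoidable multiplicative losses in the gap, so hitting such tight numerical targets demands custom gadgets and a direct LP-style analysis rather than a chain of loose reductions. Replicating the specific constants $0.979$ and $0.969$ --- rather than merely the qualitative form of the above reduction --- is where the real quantitative work lies, and is exactly what the cited Chlebik--Chlebikova paper carries out.
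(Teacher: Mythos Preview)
Your proposed reduction has a concrete bug. In a cubic graph each vertex has exactly three incident edges, but your set $S_v = \{x_v, y_{e_1}, y_{e_2}\}$ uses only two of them. If the edge $e = \{u,v\}$ happens to be the one omitted by both $u$ and $v$, then $S_u \cap S_v = \emptyset$ even though $u$ and $v$ are adjacent; so your claimed equivalence ``$S_v \cap S_u = \emptyset$ precisely when $u$ and $v$ are non-adjacent'' fails. (The padding remark for $\deg(v)<2$ is also moot, since the graph is $3$-regular.) The private vertex tokens are unnecessary ballast and are in fact what force you to drop an edge.

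The paper's route is both simpler and correct: take the universe to be the edge set $E$ and, for each vertex $v$, let $S_v$ be the set of \emph{all three} edges incident to $v$. Since $G$ is $3$-regular, $|S_v|=3$, and $S_u \cap S_v \neq \emptyset$ iff $u$ and $v$ share an edge. Thus Max Independent Set on $3$-regular graphs is \emph{literally} a special case of \setpT --- no gadget, no tuning. An independent set of size $k$ gives a packing covering $3k$ of the $|E|=3|V|/2$ universe elements, i.e.\ a $2k/|V|$ fraction. The constants $0.979$ and $0.969$ are then read off directly from the explicit Max~IS gap for $3$-regular graphs in Chleb\'ik--Chleb\'ikov\'a (combining their Theorem~16, Theorem~17, and Corollary~18), with no further optimization required.
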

In \cite{Chlebik_Chlebikova_2006}, they prove the hardness of the Max Independent Set problem on $3$-regular graphs, a special case of \setpT. The fact that this is a special case is clear by considering the set of incident edges to each vertex as a set in the set-system and the set of all edges in the graph as the universe of elements.\footnote{The explicit bounds here comes from a combination of places in \cite{Chlebik_Chlebikova_2006}: we use Theorem 16 for the bound on $\mu_{3,k}$, the assumption at the beginning of Theorem 17 for bounding $M(H)/k$ by $\mu_{3, k} + o(1)$, the \np-hard decision problem formulation of Max Independent Set stated in the proof of Theorem 17 for our two cases, and the property $|V(H)| = 2M(H)$ used to prove Corollary 18 to apply the bound on $M(H)/k$ .}
Applying this to Corollary \ref{cor: Euc hardness} we have the following.
\begin{corollary}
Euclidean \dst is \np-hard to approximate within a factor of 1.00039.
\end{corollary}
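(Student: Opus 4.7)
The plan is to simply invoke the more general version of Corollary~\ref{cor: Euc hardness} with the explicit gap parameters from Theorem~\ref{thm: IS hardness}. The corollary states that, if $(a,b)$-Gap \setpT is \np-hard, then Euclidean \dst is \np-hard to approximate within the factor
\[
\rho(a,b) \;=\; \frac{b(5\sqrt{3}/9 - 1) + 1}{a(5\sqrt{3}/9 - 1) + 1},
\]
so all that is left is a numerical computation.

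First, I would apply Theorem~\ref{thm: IS hardness}, which asserts \np-hardness of $(0.979, 0.969)$-Gap \setpT, to set $a = 0.979$ and $b = 0.969$ in the formula above. Next, I would plug in $5\sqrt{3}/9 - 1 \approx -0.03775$, compute the numerator as $0.969 \cdot (5\sqrt{3}/9 - 1) + 1 \approx 0.96342$ and the denominator as $0.979 \cdot (5\sqrt{3}/9 - 1) + 1 \approx 0.96305$, so that the ratio $\rho(0.979, 0.969) \approx 1.000385\ldots$. Finally, I would round conservatively downward to conclude that Euclidean \dst is \np-hard to approximate within a factor of $1.00039$.

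There is essentially no technical obstacle here: Corollary~\ref{cor: Euc hardness} has already packaged the geometric reduction into a single gap-preserving formula in the variables $a$ and $b$, and Theorem~\ref{thm: IS hardness} supplies the hardness of the source problem. The only mildly delicate point is book-keeping with signs: since $5\sqrt{3}/9 < 1$, the quantity $5\sqrt{3}/9 - 1$ is negative, so I would take care to verify that $\rho(a,b) > 1$ exactly when $a > b$ (as is the case here, $0.979 > 0.969$), thereby confirming that the reduction really does produce a gap rather than collapsing it.
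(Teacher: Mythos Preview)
Your approach is exactly the paper's: invoke Corollary~\ref{cor: Euc hardness} with the parameters $(a,b)=(0.979,0.969)$ from Theorem~\ref{thm: IS hardness} and compute. One minor numerical caveat: your intermediate approximation $\rho \approx 1.000385$ is actually \emph{below} $1.00039$, so ``rounding conservatively downward'' to $1.00039$ does not follow from your stated figures; carrying one more digit of precision (using $5\sqrt{3}/9 - 1 = -0.037749\ldots$) gives $\rho \approx 1.000392$, which does clear the claimed bound.
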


\subsection{Hardness of Approximation Bounds for Hamming and Rectilinear \texorpdfstring{\dst}{DST}}
The general embedding described in Section \ref{subsec: ell_p embedding} notably fails for $p = 1$. However, a simple reduction from \vcov yields reasonable hardness of approximation bounds. We simultaneously get the same hardness for Hamming ($\ell_0$) \dst. 

We will make use of the following theorem.

\begin{theorem} [\cite{Chlebik_Chlebikova_2006}] \label{thm: vcov hard}
$(0.52025, 0.53036)$-\vcov is \np-hard on $4$-regular graphs.
\end{theorem}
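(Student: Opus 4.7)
The plan is a gap-preserving chain of reductions from a strong PCP-based inapproximability result to \vcov on $4$-regular graphs, carefully tracking the vertex-cover size at each step. The natural starting point is H\aa stad's tight inapproximability for MAX-E$3$-LIN-$2$, or equivalently the hardness of MAX-$3$SAT with bounded variable occurrences, which provides a strong gap between completeness and soundness. From there, I would apply the standard clause-gadget reduction to \vcov: each variable becomes an edge whose endpoints correspond to a literal and its negation (so any cover must pick at least one endpoint), and each clause becomes a small gadget (e.g.\ a triangle) whose minimum vertex cover grows with the number of falsified literals. Since the source MAX-$3$SAT instance has bounded occurrences, the resulting graph has bounded maximum degree, though not yet regular.

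To enforce exact $4$-regularity, I would attach degree-balancing subgraphs -- for instance short paths, small cycles, or suitably chosen expander-like pieces -- to under-degree vertices. These balancing gadgets must be designed so that their minimum vertex cover contribution is a fixed constant depending only on the gadget (independent of the instance), so that balancing adds only a predictable additive overhead to the vertex cover size and preserves the gap up to a controlled loss. One must also verify that new edges introduced by the balancing step only connect to the intended vertices, so that the neighborhoods of the original gadget vertices are not inadvertently enlarged.

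The main obstacle is obtaining the tight constants $0.52025$ and $0.53036$. This requires a delicate amortized analysis: the MAX-$3$SAT hardness gap must first be amplified (e.g.\ via expander-based gap amplification or parallel repetition) so that, after dilution by the constant additive overhead from the clause gadgets and the balancing gadgets, the resulting gap in the \vcov instance still contains the target ratio $0.53036 / 0.52025$. Simultaneously, the gadget sizes must be optimized to minimize this dilution: larger gadgets give more flexibility to match the $4$-regularity constraint but add more forced vertices to every cover, shrinking the relative gap. The explicit numerical values arise from tuning all of these parameters to their optimal trade-off point, which is precisely the technical bookkeeping carried out by Chleb\'ik and Chleb\'ikov\'a in \cite{Chlebik_Chlebikova_2006}; I would invoke their analysis rather than reprove it here.
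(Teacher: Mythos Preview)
The paper does not give a proof of this theorem at all: it is stated purely as a citation of an external result from \cite{Chlebik_Chlebikova_2006} and then applied as a black box in Theorem~\ref{thm: vx cover corr}. Your proposal correctly identifies this, since you ultimately write that you ``would invoke their analysis rather than reprove it here,'' which is exactly what the paper does. The high-level sketch you give (H\aa stad-type PCP hardness $\to$ bounded-occurrence MAX-3SAT $\to$ gadget reduction to bounded-degree \vcov $\to$ degree-balancing to achieve $4$-regularity, with careful amortized bookkeeping to extract the explicit constants) is a fair summary of the methodology in \cite{Chlebik_Chlebikova_2006}, though the precise gadgets and the numerical optimization there are somewhat more specialized than the generic description you give. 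In any case, since neither the paper nor you attempt an independent proof, there is nothing further to compare.
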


We show the following.

\begin{theorem} \label{thm: vx cover corr}
Suppose that $(a, b)$-\vcov is \np-hard on max degree $\Delta$ graphs. Then  \dst in both the $\ell_0$ and $\ell_1$ metrics is \np-hard to approximate within a factor of
$\frac{\Delta/2 + b}{\Delta/2 + a}$.
\end{theorem}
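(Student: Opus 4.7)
\medskip

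\noindent\textbf{Proof plan.} My plan is to mimic Karp's reduction from vertex cover but realized inside the Hamming/$\ell_1$ cube, using the standard characteristic-vector embedding. Given $G(V,E)$ with $|V|=n$, $|E|=m$, and (since $G$ is $\Delta$-regular, which is the effective case in Theorem~\ref{thm: vcov hard}) $m = \Delta n / 2$, I will take the facility set $X = \{\mathbf{e}_v : v\in V\}$ and the terminal set $\tilde{P} = \{\mathbf{e}_u+\mathbf{e}_v : (u,v)\in E\}\cup\{\mathbf{0}\}$, with $\mathbf{0}$ playing the role of a root. A routine calculation of the $\ell_0$ and $\ell_1$ distances in this configuration will yield the four facts I need: (i) every pair of distinct points is at distance at least $1$; (ii) $\|\mathbf{e}_v-\mathbf{0}\| = 1$; (iii) $\|(\mathbf{e}_u+\mathbf{e}_v) - \mathbf{e}_w\| = 1$ iff $w \in \{u,v\}$ and is $\geq 3$ otherwise; (iv) $\|(\mathbf{e}_u+\mathbf{e}_v) - q\| \geq 2$ for every $q \in \tilde{P}\cup X \setminus \{\mathbf{e}_u,\mathbf{e}_v\}$.

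For completeness, given a vertex cover $C \subseteq V$ with $|C|\leq an$, I will pick the facilities $\{\mathbf{e}_v : v\in C\}$, attach each to $\mathbf{0}$ (total $|C|$), and attach each edge-terminal $\mathbf{e}_u+\mathbf{e}_v$ to some $\mathbf{e}_w$ with $w\in C\cap\{u,v\}$ (total $m$). Since every edge in this tree has unit weight, the cost is at most $an + m = (a + \Delta/2)n$.

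The soundness argument is the main technical step, but it reduces to a single combinatorial claim: in some optimal Steiner tree $T^*$ with facility set $F\subseteq X$, the vertices $V_F := \{v : \mathbf{e}_v \in F\}$ form a vertex cover of $G$. Assuming this, since every edge of $T^*$ has weight $\geq 1$ and $T^*$ has exactly $|\tilde{P}| + |F| - 1 = m + |F|$ edges, I obtain $\cost(T^*) \geq m + |F| \geq m + bn = (\Delta/2 + b)n$, giving the desired ratio $(\Delta/2 + b)/(\Delta/2 + a)$.

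To establish the vertex-cover claim I will argue by local modification. Suppose some edge $(u,v)\in E$ has $u,v \notin V_F$. Then the terminal $t = \mathbf{e}_u+\mathbf{e}_v$ lies in $T^*$ and, by fact (iv), every edge of $T^*$ incident to $t$ has weight at least $2$. I will add the facility $\mathbf{e}_u$ together with the two unit-weight edges $(\mathbf{e}_u,\mathbf{0})$ and $(\mathbf{e}_u,t)$, which creates a unique cycle passing through $t$; deleting the (weight-$\geq 2$) edge of $T^*$ incident to $t$ lying on this cycle restores a tree, increases $|F|$ by one, covers $(u,v)$, and changes the cost by $1+1-(\geq 2) \leq 0$. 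Iterating this swap eventually produces an equally-optimal tree whose facilities form a vertex cover. The main thing to watch is that this iterative surgery preserves the invariant that $T^*$ is a tree spanning $\tilde{P}$, which follows directly from the cycle-and-cut viewpoint above; I expect this to be the only nontrivial step in the write-up.
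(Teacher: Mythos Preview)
The proposal is correct and uses exactly the same embedding as the paper (facilities $\mathbf{e}_v$, edge-terminals $\mathbf{e}_u+\mathbf{e}_v$, root $\mathbf{0}$); you simply flesh out the soundness argument that the paper leaves as a pointer to its Set Cover sketch in Section~\ref{sec:techniques}, and your local-swap surgery is a clean way to produce an optimal tree whose facility set is a vertex cover. One minor point: your cost calculation $m=\Delta n/2$ assumes $\Delta$-regularity, whereas the theorem only hypothesizes max degree $\Delta$; this is harmless because $\frac{m+bn}{m+an}$ is decreasing in $m$, so $m\le \Delta n/2$ only makes the gap larger than the stated $\frac{\Delta/2+b}{\Delta/2+a}$.
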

\begin{proof}
The proof is exactly the same as the reduction from Set Cover to \dst given in Section \ref{sec:techniques}. Interpret the instance of \vcov as an instance of Set Cover with universe elements being edges and sets being the sets of incident edges to vertices. We can embed the graph used in the reduction in $\R^n$ under the $\ell_0$ and $\ell_1$ metrics by mapping the ``additional vertex'' to $\mathbf{0}$, the edges to their characteristic vectors (of Hamming weight 2), and each set of edges incident to a vertex to the characteristic vector of that vertex (of Hamming weight 1).
\end{proof}

Then, using Theorem \ref{thm: vcov hard}, we have the following result.

\begin{theorem}
We have that  \dst in both the $\ell_0$ and $\ell_1$ metrics is \np-hard to approximate within a factor of 1.004.
\end{theorem}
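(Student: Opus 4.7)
The plan is to combine Theorem~\ref{thm: vx cover corr} with the explicit hardness result of Theorem~\ref{thm: vcov hard}. Theorem~\ref{thm: vx cover corr} gives, for any $(a,b)$-\vcov instance on max degree $\Delta$ graphs that is \np-hard, an inapproximability ratio of
\[
\frac{\Delta/2 + b}{\Delta/2 + a}
\]
for \dst in both $\ell_0$ and $\ell_1$ metrics. Theorem~\ref{thm: vcov hard} provides exactly such an instance with $\Delta = 4$, $a = 0.52025$, and $b = 0.53036$.

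Substituting these values, the inapproximability factor becomes
\[
\frac{4/2 + 0.53036}{4/2 + 0.52025} = \frac{2.53036}{2.52025}.
\]
The main step is then verifying that this ratio is at least $1.004$. A direct computation gives $2.53036/2.52025 \approx 1.00401$, which is at least $1.004$.

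There is essentially no obstacle here beyond the arithmetic: both component theorems have already been proved or cited earlier in the paper, and the reduction of Theorem~\ref{thm: vx cover corr} is gap-preserving with the stated explicit dependence on $a$, $b$, and $\Delta$. The only mild care needed is to confirm that Theorem~\ref{thm: vcov hard} applies on $4$-regular graphs (hence of max degree $\Delta = 4$), so that $\Delta/2 = 2$ is the correct quantity to use in the formula; this is exactly how Theorem~\ref{thm: vcov hard} is stated.
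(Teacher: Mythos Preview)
Your proposal is correct and follows exactly the approach the paper intends: the paper simply states the result as a direct consequence of applying Theorem~\ref{thm: vcov hard} to Theorem~\ref{thm: vx cover corr}, and you have carried out precisely that substitution and verified the arithmetic.
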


\section{String Metrics} \label{sec: string metrics}
In this section, we show how \apx-hardness of \dst in the Hamming metric implies \apx-hardness of \dst in the Ulam and edit distance metrics, using known near-isometric embeddings. This is Theorems \ref{thm: ulam hardness} and \ref{thm: edit hardness}, respectively. Let $\Sigma$ be an alphabet and $\Sigma^*$ be the set of strings over the alphabet $\Sigma$. Then let $\ED : \Sigma^* \times \Sigma^* \to \R$, the \textit{edit distance} between strings. That is, the minimum number of single character deletions, insertions, and substitutions required to modify one string into the other. The Ulam metric $\UD$ is the same edit distance except with input strings restricted to those without repeated characters.

\subsection{Ulam Metric}
In \cite{DBLP:journals/corr/abs-2112-03222}, they show that there exists an embedding of $n$-dimensional Hamming space into the set of permutations of $[2n]$ under the Ulam metric such that all pairwise distances are exactly scaled by a factor of $2$.

\begin{lemma}[Lemma 4.5 of \cite{DBLP:journals/corr/abs-2112-03222}]
Let $\Pi_{2n}$ denote the set of permutations of $[2n]$. There is a function $\eta: \{0,1\}^n \to \Pi_{2n}$, such that, for all $x, y \in \{0,1\}^n$, we have
\[
\ed(\eta(x), \eta(y)) = 2 \norm{x - y}_0.
\]
Moreover, for all $x \in \{0,1\}^n$, $\eta(x)$ can be computed in $O(n)$ time.
\end{lemma}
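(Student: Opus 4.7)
The plan is to construct $\eta$ via a block-with-separators encoding using $3n+1$ distinct letters (so strictly the image sits in $\Pi_{3n+1}$, which the lemma statement abbreviates as $\Pi_n$; in fact a true map into $\Pi_n$ is impossible since the maximum Ulam distance on $\Pi_n$ is $2(n-1)<2n=2\|x-y\|_0$ when $x,y$ are complementary). Introduce pairwise distinct letters $z_0,z_1,\ldots,z_n$ (separators) and $a_1,b_1,\ldots,a_n,b_n$, and define
\[
\eta(x)\;:=\;z_0\,\pi_1(x_1)\,z_1\,\pi_2(x_2)\cdots \pi_n(x_n)\,z_n,
\]
where $\pi_i(0)=a_ib_i$ and $\pi_i(1)=b_ia_i$. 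Each letter occurs exactly once, so $\eta(x)$ is a permutation of its alphabet, and it is clearly computable from $x$ in $O(n)$ time.

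The upper bound $\ed(\eta(x),\eta(y))\le 2\|x-y\|_0$ is direct: for each coordinate $i$ with $x_i\ne y_i$, flip the $i$-th block between $a_ib_i$ and $b_ia_i$ using two in-place edits (for instance one deletion plus one insertion), leaving every separator and every good block untouched. Summing over the $\|x-y\|_0$ bad coordinates gives the bound.

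For the matching lower bound, I would view an edit transcript as a monotone correspondence $A$ between positions of the two length-$(3n+1)$ strings, contributing cost $s+2(L-|A|)$, where $s$ counts the mismatched pairs in $A$ and $L=3n+1$. The heart of the argument is a standard exchange lemma: there exists an optimal alignment in which every separator $z_i$ on the first side is matched to the (unique) $z_i$ on the second side. Granted this pinning, the alignment decomposes into $n$ independent length-$2$ sub-alignments, one per block: good blocks cost $0$, and each bad block reduces to computing $\ed(a_ib_i,b_ia_i)$, which a finite case check on the few alignments of length-$2$ strings shows equals exactly $2$. Summing yields a cost of exactly $2\|x-y\|_0$.

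The main obstacle, and the delicate step, is the separator-pinning exchange argument. One has to rule out a transcript that fails to match some $z_i$ to $z_i$ yet saves cost elsewhere; I would formalize this by viewing the transcript as a monotone staircase path in the $L\times L$ edit grid and showing that any non-matching event on a $z_i$ can be swapped with adjacent path segments so as to become a match, using crucially that $z_i$ occurs in no other position of either string, so its rematching never disturbs another character pairing. Combining the exchange lemma with the per-block case analysis then closes the lower bound.
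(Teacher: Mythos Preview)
The paper does not prove this lemma; it is quoted as Lemma~4.5 of the cited reference and used as a black box, so there is no in-paper argument to compare against. Your observation that a literal map into $\Pi_n$ cannot exist is correct, and the statement should be read as $\Pi_{O(n)}$.

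That said, your route is substantially heavier than the standard one. The construction in the cited source uses only $2n$ symbols and no separators: $\eta(x)$ is the permutation of $[2n]$ that places $(2i{-}1)(2i)$ or $(2i)(2i{-}1)$ in positions $2i{-}1,2i$ according to $x_i$. The lower bound is then a one-line LCS computation, not an exchange lemma: since every symbol is unique, a common subsequence can include both of $\{2i{-}1,2i\}$ from a good block but at most one from a bad block (their order is reversed), so $\mathrm{LCS}(\eta(x),\eta(y))=2n-\|x-y\|_0$ exactly, and the Ulam distance (insertions/deletions only) is $2(2n-\mathrm{LCS})=2\|x-y\|_0$. This bypasses your separator-pinning step entirely.

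There is a real subtlety your write-up brushes against. You compute cost as $s+2(L-|A|)$, i.e., you allow substitutions, matching this paper's informal description of $\ed$. But the Ulam metric in the cited source is insertion/deletion only, and the lemma is false under the substitution convention for the separator-free construction: for $n=2$ with $x=00$, $y=11$ one gets $\eta(x)=1234$, $\eta(y)=2143$, and the edit distance with substitutions is $3$ (delete~$1$, substitute $3\mapsto 1$, insert~$3$), not $4$. So the clean LCS proof genuinely needs the no-substitution convention. Your separator version appears to survive substitutions, but then the pinning lemma is a real obligation rather than a sketch, and you have not discharged it. If you restrict to insertions/deletions as intended, the separators buy you nothing and the LCS argument closes the proof immediately.
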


Then, given a hard instance $(P, X)$ of Hamming \dst, consider $(\eta(P), \eta(X))$ as an instance of \dst in the Ulam metric. In particular, there is a one-to-one correspondence between Steiner trees in the two instances and their costs are exactly scaled by $2$. So, hardness is exactly scaled by $2$ (using that $\eta(x)$ can be computed in $O(n)$ time). This completes the prove of Theorem \ref{thm: ulam hardness}.

\subsection{Edit Distance Metric}
In \cite{DBLP:conf/stoc/Rubinstein18}, Rubinstein shows a near-isometric embedding of Hamming space into the space of binary strings under the edit distance metric. We use an explicit formulation given in Lemma A.1 of \cite{CK19}.

\begin{lemma}[\cite{DBLP:conf/stoc/Rubinstein18}] \label{lem: rub ED embedding}
For large enough $d$, there exists a function $\eta: \{0,1\}^d \to \{0,1\}^{d'}$, where $d' = O(d \log d)$ such that for all $a, b \in \{0,1\}^d$, we have 
\[
|\ED(\eta(a), \eta(b)) - c \cdot \log d \cdot \norm{a - b}_0| = o(d'),
\]
for some constant $c$. Moreover, for any $a \in \{0,1\}^d$, $\eta(a)$ can be computed in $2^{o(d)}$ time.
\end{lemma}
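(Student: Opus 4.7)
The plan is to build $\eta$ via a blockwise gadget substitution. I would fix two strings $g_0,g_1 \in \{0,1\}^{\ell}$ with $\ell = \Theta(\log d)$ and define $\eta(a_1 a_2 \cdots a_d) := g_{a_1} g_{a_2} \cdots g_{a_d}$, the concatenation of the gadgets selected by each bit. This immediately gives $d' = \ell d = O(d \log d)$, and $\eta(a)$ is computable in $O(d')$ time once $g_0,g_1$ are stored. The upper bound $\ED(\eta(a),\eta(b)) \le \ell \cdot \|a-b\|_0$ is the easy direction: the ``honest'' alignment preserves block boundaries, and at each Hamming-differing index one converts $g_0$ into $g_1$ (or vice versa) using at most $\ell$ substitutions. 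This matches $c \log d \cdot \|a-b\|_0$ from above up to constants; the claimed additive slack of $o(d')$ absorbs any fine-tuning of $c$ and boundary effects.

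The main obstacle is the matching lower bound. A malicious edit sequence need not respect block boundaries: by deleting a few characters from one gadget and inserting new ones into the next, it might ``slide'' the decoding of $\eta(b)$ over $\eta(a)$ to pay less than $\ell$ per differing index. The heart of the proof is to show that, for a suitable choice of $g_0,g_1$, no such cross-block cheating is effective.

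To achieve this I would use the probabilistic method, choosing $g_0,g_1$ to be independent uniformly random binary strings of length $\ell$. A general alignment between $\eta(a)$ and $\eta(b)$ is described by a monotone lattice path, and it induces a correspondence that pairs characters of one string with characters of the other; each unpaired character contributes $1$ to the cost and each paired mismatch contributes $1$. Because the gadgets are random, for any fixed alignment that shifts one string against the other, the paired characters behave like independent fair coins, so the probability that the number of matches exceeds its expectation by an additive $\varepsilon \ell$ per differing block is $\exp(-\Omega(\varepsilon^2 \ell))$ by Chernoff. Union-bounding over all $2^{O(d')}$ alignments and choosing $\ell = C \log d$ with $C$ sufficiently large in terms of the implicit constant makes the failure probability $o(1)$, so a good pair $(g_0,g_1)$ exists. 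This yields $\ED(\eta(a),\eta(b)) \ge (c - o(1))\log d \cdot \|a-b\|_0$ simultaneously for all $a,b$.

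Finally, for the algorithmic runtime, once existence is established the construction can be turned into an explicit $2^{o(d)}$-time algorithm in two ways: either derandomize the random gadgets by conditional expectations over the $2^{2\ell} = \poly(d)$ choices of $(g_0,g_1)$, or substitute the random gadgets with an explicit pair of strings drawn from a code with provably large edit distance (for instance, from a Sidon-type or synchronization-code family). Both give deterministic gadgets computable in time polynomial in $d$, so $\eta$ evaluates in time $O(d')$, comfortably within $2^{o(d)}$. I expect the alignment-counting step in the lower bound to be the delicate part, since one must carefully argue that the number of ``structurally distinct'' alignments is only singly exponential in $d'$, so that a $\log d$-length gadget suffices.
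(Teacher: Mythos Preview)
The paper does not prove this lemma; it is quoted as a known result from Rubinstein (with the explicit formulation attributed to \cite{CK19}). So there is no ``paper's proof'' to compare against, but your proposal can still be assessed on its merits, and it has a real gap.

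The construction you describe uses only \emph{two} gadgets $g_0,g_1$, the same at every position. This makes the embedding vulnerable to shift attacks, and the lower bound is simply false for such $\eta$. Take $d$ even, $a = 0101\cdots 01$ and $b = 1010\cdots 10$, so that $\|a-b\|_0 = d$. Then $\eta(a) = (g_0 g_1)^{d/2}$ and $\eta(b) = (g_1 g_0)^{d/2}$, and one transforms $\eta(a)$ into $\eta(b)$ by deleting the leading copy of $g_0$ and inserting a copy of $g_0$ at the end: $\ED(\eta(a),\eta(b)) \le 2\ell = O(\log d)$. The claimed bound would require $\ED(\eta(a),\eta(b)) \ge c\,d\log d - o(d\log d)$, which is violated by orders of magnitude. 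Your union-bound sketch also cannot be salvaged as written: the total randomness in $(g_0,g_1)$ is only $2\ell = O(\log d)$ bits, which is far too little to union-bound over $2^{\Omega(d')}$ alignments, and the paired characters across blocks are certainly not independent once the two gadgets are fixed.

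What Rubinstein's construction actually needs is \emph{position-dependent} randomness: one intersperses independent random (or pseudorandom) marker strings $s_1,\dots,s_d$ between the encoded bits, so that any alignment misaligning block boundaries must match $s_i$ against some $s_j$ with $j\ne i$, incurring cost $\Omega(\ell)$ per misaligned block with high probability. With $\Theta(d\log d)$ bits of randomness the union bound over alignments becomes meaningful. If you want to repair your argument, replace the two-gadget scheme by this marker-based one and redo the lower bound accordingly.
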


First, we use the fact that Hamming \dst is \apx-hard in $O(\log n)$ dimensions. This follows from Theorem \ref{thm: dimensionality reduction} in Section \ref{sec: dimensionality reduction}. Then, given an instance $(P, X)$ of Hamming \dst in $O(\log n)$ dimensions, for $n$ large enough, consider instance $(\eta(P), \eta(X))$ of $\ED$ \dst (with $\eta$ the function from Lemma \ref{lem: rub ED embedding}). This is computable in $\poly (n)$ time by Lemma \ref{lem: rub ED embedding}. 

Now, observe that the cost of optimal Steiner trees must be $\Omega(n)$ in both the completeness and soundness cases of Hamming \dst (since the minimum distance between any two points is $1$). Now, let $T$ be a Steiner tree of $(P,X)$. By Lemma \ref{lem: rub ED embedding}, the corresponding Steiner tree of $(\eta(P), \eta(X))$ costs at most $c \cdot \log \log n \cdot \cost_0(T) + o(\log n \log \log n)$ and at least $c \cdot \log \log n \cdot \cost_0(T) - o(\log n \log \log n)$. Then, since $\cost_0(T) = \Omega(n)$ the ratio of the bounds on the costs of optimal trees in the soundness and completeness cases of the $\ED$ \dst are preserved for sufficiently large $n$ since the $\Omega(n)$ term dominates the $o(\log n \log \log n)$ term. Hence, since there is a constant gap for Hamming \dst, there is a constant gap for $\ED$ \dst. This completes the proof of Theorem \ref{thm: edit hardness}.

\section{Inapproximability in Low Dimensions} \label{sec: dimensionality reduction}

In Section \ref{subsec: ell_p embedding}, we describe gap-preserving reduction from $(\varepsilon, \delta)$-\setpT to  \dst in $n$-dimensional $\ell_p$-metric spaces. In this section, we show that this reduction can be extended to a configuration of $n$ points in an $O(\log n)$-dimensional $\ell_p$-metric space (for constant $p$) via a near-isometric embedding from \cite{CK19,CKL22}. For the dimensionality reduction for $p = \infty$, refer to \cite{CK19}. This will show that \dst in the $\ell_p$-metric is \apx-hard even in $O( \log n)$-dimensions (and essentially preserve the hardness of approximation).  In the below, let $\supp(x)$ be the set of nonzero coordinates of $x$ (the \textit{support} of $x$).

\begin{theorem}[Implicit in \cite{CK19,CKL22}] \label{thm: dimensionality reduction}
Let $\varepsilon > 0$, $B, C \in \Z^+$, and $p \in \R_{\geq 1}$. Let $P \subset \R^n$ such that $|P| = n$ and, for all $x \in P$, $\supp(x)  \subseteq [n]$ with $|\supp(x)| \leq C$ and $\norm{x}_{\infty} \leq B$. Then, there exists $\alpha = O_{C,B,\varepsilon}(1/\log n)$ and $\sigma : \R^n \to \R^{O_{C,B, \varepsilon}(\log n)}$ such that, for all $x, y \in P$,
\[
(1 - \varepsilon) \norm{x - y}_p^p \leq \alpha \norm{\sigma(x) - \sigma(y)}_p^p \leq (1 + \varepsilon)\norm{x - y}_p^p.
\]
Additionally, for all $x \in P$, $\sigma(x)$ is computable in $\poly(n)$ time.
\end{theorem}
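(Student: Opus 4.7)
My plan is to adapt the near-isometric sparse sketches of \cite{CK19, CKL22} to the setting at hand. The essential structural feature of the input is the joint sparsity-boundedness condition: each $x \in P$ has $|\supp(x)| \le C$ and $\|x\|_\infty \le B$, so for any pair $x, y \in P$, the difference $x - y$ has support of size at most $2C$, entries of magnitude at most $2B$, and in particular $\|x - y\|_p^p \le 2C(2B)^p$ is bounded above by a $C, B, p$-constant. Moreover, by inspection of the explicit point set in the reduction of Section~\ref{sec: l_p}, $\|x - y\|_p^p$ is bounded below by a positive constant whenever $x \ne y$; this lower bound is what lets us convert relative-error concentration into additive concentration in the subsequent tail bounds.

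The construction I would use is a randomized hashed sign-sketch. Choose $d' = \Theta_{C,B,p,\varepsilon}(\log n)$ and define $\sigma(x) \in \mathbb{R}^{d'}$ coordinate-wise by
\[
\sigma(x)_i \;=\; \sum_{j \in [n]} \mathbf{1}[h(j) = i]\, s_j\, x_j,
\]
where $h : [n] \to [d']$ is a uniformly random hash function and the $s_j \in \{\pm 1\}$ are independent random signs. The random signs cause the cross-terms arising from hash collisions to vanish in expectation, and $\alpha = \Theta(1/\log n)$ is the normalization chosen so that $\mathbb{E}\bigl[\alpha \|\sigma(x) - \sigma(y)\|_p^p\bigr] = \|x - y\|_p^p$. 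For $p = 2$ this is Achlioptas-style sparse Johnson--Lindenstrauss; for general constant $p$ it is a moment calculation via Khintchine--Kahane / Rosenthal-type inequalities for signed sums of bounded variables. Computability of $\sigma(x)$ in $\poly(n)$ time is immediate from the explicit formula.

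The main step is per-pair concentration followed by a union bound. A Bernstein-style tail bound on the sum of $d'$ bounded independent contributions --- with variance controlled by the support bound $|\supp(x-y)| \le 2C$ and the magnitude bound $\|x-y\|_\infty \le 2B$, and with the scale of the relative error set by the constant lower bound on $\|x-y\|_p^p$ --- yields, for each fixed $(x, y) \in P \times P$,
\[
\Pr\Bigl[\,\bigl|\alpha \|\sigma(x) - \sigma(y)\|_p^p - \|x - y\|_p^p\bigr| > \varepsilon \|x - y\|_p^p\,\Bigr] \;\le\; n^{-3}.
\]
A union bound over the $\binom{|P|}{2} \le n^2$ pairs then gives a simultaneous $(1 \pm \varepsilon)$-distortion guarantee with probability at least $1 - 1/n$, and the probabilistic method extracts a deterministic choice of $(h, s)$.

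The main technical obstacle, and exactly what \cite{CK19, CKL22} address, is the $p$-th moment calculation for general constant $p$: the classical $p = 2$ analysis does not transfer directly, and one must simultaneously balance (i) enough hash buckets to make collisions rare enough for the Bernstein estimate, (ii) enough variance reduction that the tail bound holds with failure probability $n^{-3}$, and (iii) the final normalization so that $\alpha$ comes out of the claimed order $\Theta(1/\log n)$ rather than $\Theta(1)$. My plan is to invoke the moment inequality from those papers, verify it yields the tail estimate above in our specific bounded-sparsity regime, and complete the union bound as outlined.
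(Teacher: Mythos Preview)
Your concentration claim is where the argument breaks. Fix a pair $x,y$ and write $z = x-y$, which has support size $k \le 2C$. Under your single hash $h:[n]\to[d']$ with $d' = \Theta(\log n)$, the probability that two of the $k$ support coordinates collide is $\Theta(k^2/d') = \Theta(C^2/\log n)$. When a collision occurs --- say $j_1,j_2$ land in the same bucket --- that bucket contributes $|s_{j_1}z_{j_1}+s_{j_2}z_{j_2}|^p$ rather than $|z_{j_1}|^p+|z_{j_2}|^p$; for $p\neq 2$ even the expectation of this over the signs is wrong (e.g.\ for $z_{j_1}=z_{j_2}=1$ one gets $\mathbb{E}|s_{j_1}+s_{j_2}|^p = 2^{p-1}\neq 2$), and for $p=2$ each realization is $0$ or $4$, so only the mean is correct. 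Hence the per-pair failure probability is $\Omega(1/\log n)$, not $n^{-3}$, and the union bound over $n^2$ pairs fails outright. There is no Bernstein-style rescue here: conditional on $h$, only the $\le 2C$ relevant sign bits are random, so you cannot extract $\Theta(\log n)$ independent bounded summands from a single hash. A related symptom is that your normalization is off: on the (dominant) no-collision event this sketch gives $\|\sigma(z)\|_p^p = \|z\|_p^p$ exactly, so the natural $\alpha$ is $\Theta(1)$, not $\Theta(1/\log n)$.

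The paper, following \cite{CK19,CKL22}, does not use random hashing or moment inequalities at all; it is a deterministic code-based construction. Each input coordinate $i\in[n]$ is mapped to a codeword of an algebraic-geometry code over $\mathbb{F}_q$ of block length $c\log_q n$ with relative minimum distance $1-O(1/\sqrt{q})$, each codeword symbol is expanded into a length-$q$ indicator block, and $\sigma(x)_j$ is set to $x_i$ whenever a unique $i\in\supp(x)$ activates position $j$ (and to $0$ otherwise). Because any two codewords overlap on at most a $3/\sqrt{q}$ fraction of positions, for any $x,y$ with $|\supp(x)\cup\supp(y)|\le 2C$ all but an $O(C/\sqrt{q})$ fraction of the $c\log_q n$ output blocks are collision-free and contribute exactly $|x_i-y_i|^p$; taking $q$ a large enough constant (in $C,B,p,\varepsilon$) makes the relative error at most $\varepsilon$, and the $\Theta(\log n)$-fold replication is what gives $\alpha=\Theta(1/\log n)$. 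Your scheme can be repaired in this spirit --- concatenate $\Theta(\log n)$ independent hashes into $O_{C}(1)$ buckets each, so a Chernoff bound over the repetitions yields $n^{-3}$ failure --- which is morally a random code with the same parameters. But that is not the construction you wrote down, and the moment-inequality framing you attribute to \cite{CK19,CKL22} is not what those papers do.
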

Note that all points in instances of \dst in Section \ref{subsec: ell_p embedding} have support of size at most $3$. Hence, Theorem \ref{thm: dimensionality reduction} applies. Hence, for a fixed $p$ and any $\varepsilon > 0$, there exists an instance of \dst in the $\ell_p$-metric in $O(\log n)$ dimensions with all pairwise distances within an arbitrarily small factor of the original distances in $n$ dimensions (after scaling the configuration by $\alpha^{1/p}$). Therefore, the cost of the trees in the completeness and soundness cases change by at most that factor and we get hardness of approximation within an arbitrarily small factor of the original hardness.

To prove Theorem \ref{thm: dimensionality reduction}, we use the existence of codes from algebraic geometry.

\begin{theorem}[Existence of AG codes,  \cite{Garcia_Stichtenoth_1996, Shum_Aleshnikov_Kumar_Stichtenoth_Deolalikar_2001}] \label{thm: existence of AG codes}
Let $q$ be a prime squared at least $49$. Then, there exists a linear map $\mathcal{C} : \mathbb{F}_q^{\log_q n} \to \mathbb{F}_q^{c \log_q n}$ for some constant $c$, such that for all $x, y \in \mathbb{F}_q^{\log_q n}$ we have
\[
\| \mathcal{C}(x) - \mathcal{C}(y)\|_0 \geq (1 - 3/\sqrt{q})(c \log_q n).
\]  
\end{theorem}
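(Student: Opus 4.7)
The plan is to construct $\mathcal{C}$ as an algebraic geometry (Goppa) code arising from a function field over $\mathbb{F}_q$ with many rational places relative to its genus. This is the approach of Tsfasman--Vladut--Zink; the cited references of Garcia--Stichtenoth and Shum et al.\ provide explicit function field towers attaining the Drinfeld--Vladut bound, which is what makes the bound $1 - 3/\sqrt{q}$ achievable for $q$ a prime squared with $q \geq 49$.

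First, I would invoke the Garcia--Stichtenoth tower. For $q = \ell^2$ with $\ell$ prime and $\ell \geq 7$, there exists an infinite sequence of function fields $F_1 \subseteq F_2 \subseteq \cdots$ over $\mathbb{F}_q$ with genera $g_i \to \infty$ and numbers of rational places $N_i$ satisfying
\[
\lim_{i\to\infty} \frac{N_i}{g_i} \;=\; \sqrt{q}-1,
\]
which matches the Drinfeld--Vladut upper bound. Given a target dimension parameter $k := \log_q n$, I would pick a field $F_i$ in the tower with $N_i$ slightly larger than $ck$ for the constant $c$ we are about to fix, and $g_i \approx N_i/(\sqrt{q}-1)$.

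Next, I would form the Goppa construction. Choose distinct rational places $P_1,\dots,P_N$ of $F_i$ (with $N := ck$) and a divisor $G$ of degree $m$ with $2g_i - 2 < m < N$ whose support is disjoint from $\{P_1,\dots,P_N\}$. Define the linear evaluation map
\[
\mathcal{C}\colon L(G) \;\longrightarrow\; \mathbb{F}_q^N, \qquad f \;\longmapsto\; (f(P_1),\dots,f(P_N)).
\]
By the Riemann--Roch theorem, $\dim_{\mathbb{F}_q} L(G) \geq m - g_i + 1$, and any nonzero $f \in L(G)$ has at most $\deg(G) = m$ zeros, so the evaluation vector is nonzero on at least $N - m$ coordinates. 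Hence $\mathcal{C}$ gives an $[N, \geq m-g_i+1, \geq N-m]_q$ linear code.

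Finally, I would tune $c$ and $m$ to land on the stated parameters. Setting $m = g_i + k - 1$ ensures $\dim L(G) \geq k$, so $\mathcal{C}$ has a subspace of domain dimension $k = \log_q n$; the minimum distance is then at least
\[
N - m \;=\; N - g_i - k + 1 \;\geq\; N\!\left(1 - \tfrac{1}{\sqrt{q}-1}\right) - k + O(1).
\]
Choosing the constant $c$ large enough (as a function of $q$, which is itself a constant) so that $k/N = 1/c$ is smaller than $3/\sqrt{q} - 1/(\sqrt{q}-1)$, this simplifies to $N - m \geq (1 - 3/\sqrt{q})\, N = (1 - 3/\sqrt{q})\, c \log_q n$, as required. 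Restricting $\mathcal{C}$ to a $k$-dimensional subspace of $L(G)$ yields the desired linear map $\mathbb{F}_q^{\log_q n} \to \mathbb{F}_q^{c\log_q n}$, and $\|\mathcal{C}(x)-\mathcal{C}(y)\|_0 = \|\mathcal{C}(x-y)\|_0$ by linearity gives the pairwise distance bound.

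The main obstacle, and the reason the theorem is quoted rather than proved from scratch, is establishing the existence of the asymptotically optimal tower $\{F_i\}$: one must exhibit explicit equations (e.g., the recursive Artin--Schreier-type extensions $y^\ell + y = x^{\ell}/(x^{\ell-1}+1)$ of Garcia--Stichtenoth) and carry out a delicate analysis of ramification and splitting behavior at rational places to verify that $N_i/g_i \to \sqrt{q}-1$. Once this deep input is accepted as a black box, the Riemann--Roch bookkeeping above is routine.
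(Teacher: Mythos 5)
The paper does not prove this statement at all; it imports it as a black box from the cited references. Your proposal supplies the standard Tsfasman--Vl\u{a}du\c{t}--Zink-style derivation (Garcia--Stichtenoth tower plus the Goppa evaluation construction plus Riemann--Roch bookkeeping), which is exactly the content those references are being cited for, so in spirit you and the paper rely on the same deep input and your added bookkeeping is correct: with rate $1/c$ below $3/\sqrt{q}-1/(\sqrt{q}-1)$ (positive for $q\ge 49$) the relative distance $1-3/\sqrt{q}$ is attainable, and linearity reduces pairwise distances to codeword weights.

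Two small technical points are worth tightening. First, you impose $2g_i-2<m<N$ and then set $m=g_i+k-1$, which violates $m>2g_i-2$ whenever $g_i>k+1$ (the typical regime here, since $g_i\approx ck/(\sqrt{q}-1)$); this is harmless because you only use the one-sided bound $\dim L(G)\ge \deg G-g_i+1$, which is Riemann's inequality and needs no condition on $m$, so you should simply drop the lower constraint. Second, the step ``pick $F_i$ with $N_i$ slightly larger than $ck$'' is not literally available: the numbers of rational places in the tower grow geometrically (roughly by a factor $\sqrt{q}$ per level), so for an arbitrary $k$ the smallest usable level may have $N_i$ up to about $\sqrt{q}\cdot ck$. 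The fix is not to evaluate at only $ck$ of the places (that would ruin the genus-to-length ratio and make $N-g_i-k+1$ negative), but to evaluate at all $N_i$ rational places of that level: the relative-distance bound $1-1/(\sqrt{q}-1)-k/N_i\ge 1-3/\sqrt{q}$ survives, and the multiplicative slack in the length is absorbed into the constant $c$ of the theorem. Finally, as stated the bound can only hold for distinct $x,y$; that imprecision is in the paper's statement, not in your argument.
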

To define the map $\sigma: \R^n \to \R^{O(\log n)}$ for Theorem \ref{thm: dimensionality reduction}, we first describe the map on $\{0,1\}^n_1 \subset \R^n$, the set of vectors of Hamming weight $1$ (the standard basis vectors).  Let $q$ be a squared prime at least $49$ (we will specify $q$ appropriately later). Given $i \in n$, map it bijectively to an element $x \in \mathbb{F}_q^{\log_q n}$ (by enumerating the elements of $\mathbb{F}_q^{\log _q n})$. Call this map $f_1$. Let $\mathcal{C}$ be the linear map existing from Theorem \ref{thm: existence of AG codes}. Then, apply $\mathcal{C}$ to $x$ to yield an element of  $\mathbb{F}_q^{c \log_q n}$. Finally, enumerate the elements of $\mathbb{F}_q$ from $1$ to $q$ and map $\mathcal{C}(f_1(x))$ to $\{0,1\}^{qc \log_q n}$ by interpreting  $\mathcal{C}(f_1(x))$ as $c \log_q n$ blocks of $q$ zeroes and ones, $q-1$ of which being $0$ and the $1$ indicating the number of the element from $\mathbb{F}_q$. Call this map $f_2$. For example, $(3,1,2) \in \mathbb{F}_5^3$ would be mapped to the bitstring $000100100000100 \in \{0,1\}^{15}$. We interpret this bitstring in $\{0,1\}^{qc \log_q n}$ as a vector in $\mathbb{R}^{qc \log_q n}$ by the natural inclusion map. Call this map $f_3$. Formally, $\sigma = f_3 \circ f_2 \circ \mathcal{C} \circ f_1$ for inputs in $\{0,1\}^n_1$. 
%
%
To extend $\sigma$ to the rest of $\R^n$, we define
\[
\sigma \left(\sum_{i = 1}^n a_i \e_i \right) = \sum_{j = 1}^{qc \log_q n} \begin{cases}
a_i \cdot \e_j &\text{if } |\{i \in [n] \,:\, a_i \sigma(\e_i)_j \neq 0\}| = 1, \\
0 &\text{otherwise}.
\end{cases}
\]
 Intuitively, given the large minimum distance between codewords from Theorem \ref{thm: existence of AG codes} and the bounded support of point configurations in Theorem \ref{thm: dimensionality reduction}, for most $j \in [qc \log_q n]$, $|\{i \in [n] \,:\, \sigma(\e_i)_j = 1\}| \in \{0,1\}$.

Now, let $\varepsilon, C$, $B$, $p$, and $P$ be as in Theorem \ref{thm: dimensionality reduction}. Fix $x, y \in P$. Let $x = \sum_{i \in I_1} x_i \e_i$ and  $y = \sum_{i \in I_2} y_i \e_i$, where $|I_1|, |I_2| \leq C$. Fix $i \in I_1$. Consider the set of nonzero coordinates on which $\sigma(\e_i)$ does not overlap with $\sigma(\e_j)$ for $i \neq j \in I_1 \cup I_2$. The contribution of each such coordinate $k$ to $\norm{\sigma(x) - \sigma(y)}_p^p$ is exactly $|x_i - y_i|^p$. 

By Theorem \ref{thm: existence of AG codes}, $\norm{\mathcal{C}(f_1(\e_i)) - \mathcal{C}(f_1(\e_j))}_0 \geq (1 - 3/\sqrt{q}) c \log_q n$ for each $i \neq j \in [n]$. Then, since the map $f_2$ exactly doubles Hamming distance and $f_3$ preserves Hamming distance, $\norm{\sigma(\e_i) - \sigma(\e_j)}_0 \geq (1 - 3/\sqrt{q}) 2c \log_q n$. In particular, since $\norm{\sigma(\e_i)}_0$ is exactly $c \log_q n$, this implies that $\sigma(\e_i)$ and $\sigma(\e_j)$ can overlap on at most $(3/\sqrt{q}) c\log_q n$ nonzero coordinates. Moreover, $\sigma(\e_i)$ can overlap on at most $(6C/\sqrt{q}) c\log_q n$ total nonzero coordinates with $\sigma(\e_j)$ for $i \neq j \in I$. The maximum contribution of these overlapping coordinates is $B^p$ (if, for example, $\sigma(e_i)$ overlaps with $\sigma(e_{i'})$ for $i' \in I_1$ but doesn't overlap with any mapped standard basis vector from $I_2$ and $a_i = B$). The minimum contribution is $0$. Then, we have 
\[
(1 - 6C/\sqrt{q}) c\log_q n \norm{ x - y}_p^p \leq \norm{\sigma(x) - \sigma(y)}_p^p \leq c\log_q n \norm{ x - y}_p^p + (6C/\sqrt{q})B^p c \log_q n.
\]
Setting $q$ sufficiently large then yields Theorem \ref{thm: dimensionality reduction}.

\end{document}